\def\llncs{0}
\def\fullpage{1}
\def\anonymous{0}
\def\authnote{0}
\def\notxfont{0}
\def\submission{0}
\def\llncs{1}
\definecolor{darkblue}{rgb}{0,0,0.6}
\definecolor{darkgreen}{rgb}{0,0.5,0}
\definecolor{maroon}{rgb}{0.5,0.1,0.1}
\definecolor{dpurple}{rgb}{0.2,0,0.65}
\DeclareMathAlphabet{\mathpzc}{OT1}{pzc}{m}{it}
\newtheoremstyle{thicktheorem}%
{\topsep}
{\topsep}
{\itshape}{}%
{\bfseries}%
{.}
{ }%
{\thmname{#1}\thmnumber{ #2}%
		\thmnote{ (#3)}%
}
\newtheoremstyle{remark}
{\topsep}
{\topsep}
	{}
	{}
	{}
	{.}
	{ }
	{\textit{\thmname{#1}}\thmnumber{ #2}
			\thmnote{ (#3)}%
	}
	\theoremstyle{thicktheorem}
	\newtheorem{theorem}{Theorem}[section]
	\newtheorem{lemma}[theorem]{Lemma}
	\newtheorem{corollary}[theorem]{Corollary}
	\newtheorem{definition}[theorem]{Definition}
	\theoremstyle{remark}
	\newtheorem{remark}[theorem]{Remark}
\Crefname{MyClaim}{Claim}{Claims}
	\crefname{theorem}{Theorem}{Theorems}
	\crefname{assumption}{Assumption}{Assumptions}
	\crefname{construction}{Construction}{Constructions}
	\crefname{corollary}{Corollary}{Corollaries}
	\crefname{conjecture}{Conjecture}{Conjectures}
	\crefname{definition}{Definition}{Definitions}
	\crefname{exmaple}{Example}{Examples}
	\crefname{experiment}{Experiment}{Experiments}
	\crefname{counterexample}{Counterexample}{Counterexamples}
	\crefname{lemma}{Lemma}{Lemmata}
	\crefname{observation}{Observation}{Observations}
	\crefname{proposition}{Proposition}{Propositions}
	\crefname{remark}{Remark}{Remarks}
	\crefname{claim}{Claim}{Claims}
	\crefname{fact}{Fact}{Facts}
	\crefname{note}{Note}{Notes}
 \crefname{appendix}{App.}{Appendices}
 \crefname{section}{Sec.}{Sections}
\renewcommand*{\backref}[1]{}
	\renewcommand*{\backref}[1]{(Cited on page~#1.)}
\newcommand{\mor}[1]{}
\newcommand{\minki}[1]{}
\newcommand{\takashi}[1]{}
\newcommand{\mor}[1]{$\ll$\textsf{\color{red} Tomoyuki: { #1}}$\gg$}
\newcommand{\takashi}[1]{$\ll$\textsf{\color{orange} Takashi: { #1}}$\gg$}
\newcommand{\minki}[1]{$\ll$\textsf{\color{darkgreen} Minki: { #1}}$\gg$}
\newcommand{\Tr}{\mathrm{Tr}}
\newcommand{\StateGen}{\mathsf{StateGen}}
\newcommand{\cA}{\mathcal{A}}
\newcommand{\cB}{\mathcal{B}}
\newcommand{\cN}{\mathcal{N}}
\def\makeuppercase#1{
\expandafter\newcommand\csname tl#1\endcsname{\widetilde{#1}}
}
\def\makelowercase#1{
\expandafter\newcommand\csname tl#1\endcsname{\widetilde{#1}}
}
\newcommand{\regC}{\mathbf{C}}
\newcommand{\regR}{\mathbf{R}}
\newcommand{\regZ}{\mathbf{Z}}
\newcommand{\regD}{\mathbf{D}}
\newcommand{\regB}{\mathbf{B}}
\newcommand{\regA}{\mathbf{A}}
\newcommand{\sep}{\lambda}
\newcommand{\secp}{\lambda}
\newcommand{\A}{\entity{A}}
\newcommand*{\algo}[1]{\ensuremath{\mathsf{#1}}}
\newcommand*{\entity}[1]{\mathcal{#1}}
\newenvironment{boxfig}[2]{\begin{figure}[#1]\fbox{\begin{minipage}{0.97\linewidth}
                        \vspace{0.2em}
                        \makebox[0.025\linewidth]{}
                        \begin{minipage}{0.95\linewidth}
            {{
                        #2 }}
                        \end{minipage}
                        \vspace{0.2em}
                        \end{minipage}}}{\end{figure}}
\newcommand{\bit}{\{0,1\}}
\newcommand{\KeyGen}{\algo{KeyGen}}
\newcommand{\Ver}{\algo{Ver}}
\newcommand{\tomo}{\mathsf{Tomography}}
\newcommand{\enet}{\mathsf{Net}}
\newcommand{\negl}{{\mathsf{negl}}}
\newcommand{\poly}{{\mathrm{poly}}}
\newcommand{\polylog}{{\mathrm{polylog}}}
\DeclareRobustCommand
\title{A Note on Output Length of One-Way State Generators and EFIs}
\author{\empty}\institute{\empty}
\author{}
\author{
 Minki Hhan\inst{1} \and Tomoyuki Morimae\inst{2} \and Takashi Yamakawa\inst{2,3,4}
}
\institute{KIAS, Seoul, Republic of Korea \and
 Yukawa Institute for Theoretical Physics, Kyoto University, Kyoto, Japan \and NTT Social Informatics Laboratories, Tokyo, Japan \and NTT Research Center for Theoretical Quantum Information, Atsugi, Japan
}
\author[1]{Minki Hhan}
\author[2]{\hskip 1em Tomoyuki Morimae}
\author[2,3,4]{\hskip 1em Takashi Yamakawa}
\affil[1]{{\small KIAS, Seoul, Republic of Korea}\authorcr{\small minkihhan@kias.re.kr}}
\affil[2]{{\small Yukawa Institute for Theoretical Physics, Kyoto University, Kyoto, Japan}\authorcr{\small tomoyuki.morimae@yukawa.kyoto-u.ac.jp}}
\affil[3]{{\small NTT Social Informatics Laboratories, Tokyo, Japan}\authorcr{\small takashi.yamakawa.ga@hco.ntt.co.jp}}
\affil[4]{{\small NTT Research Center for Theoretical Quantum Information, Atsugi, Japan}}
\date{}
\begin{document}

\maketitle

\begin{abstract}
We study the output length of one-way state generators (OWSGs), their weaker variants, and EFIs. 
\begin{itemize}
\item[\bf Standard OWSGs.] 
Recently, Cavalar et al. (arXiv:2312.08363) give OWSGs with $m$-qubit outputs for any $m=\omega(\log \lambda)$,
where $\lambda$ is the security parameter,
and conjecture that there do not exist OWSGs with $O(\log \log \lambda)$-qubit outputs. 
We prove their conjecture in a stronger manner by showing that there do not exist OWSGs with $O(\log \lambda)$-qubit outputs.
This means that their construction is optimal in terms of output length. 
\item[\bf Inverse-polynomial-advantage OWSGs.] 
Let $\epsilon$-OWSGs be a parameterized variant of OWSGs where a quantum polynomial-time adversary's advantage is at most $\epsilon$.  
For any constant $c\in \mathbb{N}$, 
we construct $\lambda^{-c}$-OWSGs with $((c+1)\log \lambda+O(1))$-qubit outputs  
assuming the existence of OWFs. 
We show that this is almost tight by proving that there do not exist $\lambda^{-c}$-OWSGs with at most $(c\log \lambda-2)$-qubit outputs. 
\item[\bf Constant-advantage OWSGs.]  
For any constant $\epsilon>0$, we construct $\epsilon$-OWSGs with $O(\log \log \lambda)$-qubit outputs  
assuming the existence of subexponentially secure OWFs. 
We show that this is almost tight by proving that there do not exist $O(1)$-OWSGs with $((\log \log \lambda)/2+O(1))$-qubit outputs. 
\item[\bf Weak OWSGs.]
We refer to  $(1-1/\poly(\lambda))$-OWSGs as weak OWSGs. 
We construct weak OWSGs with $m$-qubit outputs for any $m=\omega(1)$ assuming the existence of exponentially secure 
OWFs with linear expansion. 
We show that this is tight by proving that there do not exist weak OWSGs with $O(1)$-qubit outputs.

\item[\bf EFIs.]
We show that there do not exist $O(\log \lambda)$-qubit EFIs. We show that this is tight by proving that 
there exist $m(\lambda)$-qubit EFIs for any $m(\lambda)=\omega(\log \secp)$ assuming the existence of exponentially secure pseudorandom generators.
\end{itemize}
\end{abstract}

\section{Introduction}

In classical cryptography, the existence of one-way functions (OWFs) is the minimum assumption, because many primitives
(such as secret-key encryption (SKE), commitments, zero-knowledge, pseudorandom generators (PRGs), pseudorandom functions (PRFs), and digital signatures) are equivalent to OWFs, and
almost all cryptographic primitives imply OWFs~\cite{STOC:LubRac86,FOCS:ImpLub89,STOC:ImpLevLub89}. 
On the other hand, in quantum cryptography where quantum computing and quantum communications are available, 
recent studies suggest that OWFs would not be the minimum assumption,
and many new replacements of OWFs have been introduced.
For example, pseudorandom state generators (PRSGs)~\cite{C:JiLiuSon18}, one-way state generators (OWSGs)~\cite{C:MorYam22,TQC:MorYam24},
and EFIs~\cite{ITCS:BCQ23} are quantum analogue of PRGs, OWFs, and EFIDs~\cite{Gol90}, respectively. 
They seem to be weaker than OWFs~\cite{Kre21,STOC23:KreQiaSinTal,cryptoeprint:2023/1602},
yet can be used to construct several useful primitives such as private-key quantum money, SKE, commitments, multiparty computations, and digital signatures~\cite{C:MorYam22,C:AnaQiaYue22,ITCS:BCQ23,AC:Yan22}. 


OWSGs are a quantum analog of OWFs.\footnote{
There are a pure state output version~\cite{C:MorYam22} and mixed state output version~\cite{TQC:MorYam24} of OWSGs.  
We consider OWSGs with mixed state outputs as a default definition, but all of our positive results actually have pure state outputs. 
We adopt the mixed state version to make our negative results stronger.
} 
A OWSG is a quantum polynomial-time (QPT) algorithm
that takes a bit string $k$ as input and outputs a quantum state $\phi_k$. The security is, roughly speaking, that
no QPT adversary can find ``preimage'' of $\phi_k$.
There are two differences from classical OWFs. First, because quantum states cannot be copied in general, we have to take care of the number of copies of $\phi_k$ that
the adversary receives. In our definition, we assume that any polynomially many copies of $\phi_k$ are given to the adversary.
Second, we have to explicitly define what do we mean by ``preimage'' unlike classical OWFs, because two different quantum states can have some non-zero overlap.
We introduce a QPT verification algorithm that takes $k'$ (which is the output of the adversary) and $\phi_k$ as input, and
outputs $\top$ or $\bot$.\footnote{If $\phi_k$ is a pure state, we can assume without loss of generality that the verification algorithm is the following one:
project $\ket{\phi_k}$ onto $\ket{\phi_{k'}}$ and output $\top$ if the projection is successful. Otherwise, output $\bot$.}

Recently, Cavalar, Goldin, Gray, Hall, Liu, and Pelecanos~\cite{CGGHLP23}, among many other results, raised the question of how short output length of OWSGs can be. 
They show that PRSGs with $\omega(\log \secp)$-qubit outputs are also OWSGs,
where $\secp$ is the security parameter.\footnote{A PRSG is a QPT algorithm that takes a bit string as input and outputs a quantum state whose polynomially-many copies 
are computationally indistinguishable from the same number of copies of Haar random states.}
Since it is known that OWFs imply PRSGs of arbitrary (QPT-computable) output length~\cite{C:BraShm20}, their result gives OWSGs with $m(\secp)$-qubit outputs 
for any QPT-computable function $m(\secp)=\omega(\log \secp)$ based on OWFs.\footnote{
We say that a function $m$ defined on $\mathbb{N}$ is QPT-computable if there is a QPT algorithm that computes $m(\secp)$ on input $1^\secp$ with probability $1$. 
}  
Regarding a lower bound on the output length, they conjecture that there do not exist OWSGs with
$O(\log\log \secp)$-qubit outputs. However, they do not prove it and leave it open to study parameter regimes for which OWSGs do and do not exist. 

\takashi{I added the following paragraph.}
An EFI is a pair of
efficiently generatable states that are statistically far but computationally indistinguishable. 
While relationships between EFIs and other primitives have been actively studied \cite{TQC:MorYam24,ITCS:BCQ23}, to our knowledge, there is no existing work that studies how many qubits are needed for EFIs.       

\subsection{Our Results} 
\takashi{Changed the order of presenting the results}
\paragraph{\bf Lower Bounds for ($\epsilon$-)OWSGs.}
We prove that the conjecture of \cite{CGGHLP23} about the lower bound for the output length of OWSGs is true. 
Indeed, we prove a much stronger statement than their conjecture. 
\begin{theorem}\label{thm:intro_impossibility_log_OWSG}
There do not exist OWSGs with $O(\log\secp)$-qubit outputs.    
\end{theorem}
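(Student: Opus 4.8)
The plan is to break any candidate OWSG with $m=O(\log\secp)$ output qubits using the trivial adversary that ignores its input copies and outputs a freshly sampled key. Write $\phi_k=\StateGen(k)$ for the $m$-qubit output, let $d=2^m=\poly(\secp)$ be the dimension of its Hilbert space, and let $M_{k'}$ be the POVM element corresponding to $\Vrfy(k',\cdot)=\top$ on a single copy, so that the acceptance probability on $\phi_k$ is $\Tr(M_{k'}\phi_k)$ (internal regeneration of $\phi_{k'}$ by $\Vrfy$ is harmless, as it just averages into $M_{k'}$). Consider the adversary $\cA$ that, on any input, samples $k'\gets\KeyGen(1^\secp)$ and outputs $k'$. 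Since $k'$ is independent of the challenge key $k$, its success probability is $\Exp_{k,k'}\Tr(M_{k'}\phi_k)=\Tr(\overline M\,\overline\phi)$, where $\overline M:=\Exp_{k'}[M_{k'}]$ and $\overline\phi:=\Exp_k[\phi_k]$. It therefore suffices to show $\Tr(\overline M\,\overline\phi)\ge 1/\poly(\secp)$, which contradicts security.

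The heart of the argument is an operator-inequality chain combined with the purity bound $\Tr(\overline\phi^{\,2})\ge 1/d$, valid for any density matrix on a $d$-dimensional space. First consider idealized correctness, $\Tr(M_k\phi_k)=1$ for every $k$. Since $0\le M_k\le I$, this forces $M_k$ to act as the identity on $\mathrm{supp}(\phi_k)$, i.e. $M_k\succeq\Pi_k$ where $\Pi_k$ projects onto $\mathrm{supp}(\phi_k)$; averaging gives $\overline M\succeq\overline\Pi:=\Exp_k[\Pi_k]$. As a support projector dominates its own density matrix, $\Pi_{k'}\succeq\phi_{k'}$, so $\Tr(\overline M\,\overline\phi)\ge\Tr(\overline\Pi\,\overline\phi)=\Exp_{k,k'}\Tr(\Pi_{k'}\phi_k)\ge\Exp_{k,k'}\Tr(\phi_{k'}\phi_k)=\Tr(\overline\phi^{\,2})\ge 1/d$, using independence of $k,k'$ in the last equality. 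This already yields a $1/\poly(\secp)$ attack under perfect correctness.

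To handle the actual negligible correctness error I would replace the support projector by an eigenvalue threshold. Let $\Pi_k$ project onto the eigenspaces of $M_k$ with eigenvalue $\ge 1/2$, so that $M_k\succeq\tfrac12\Pi_k$ and $\Tr(\Pi_k\phi_k)\ge 2\Tr(M_k\phi_k)-1$. Setting $\eta:=1-\Exp_k\Tr(M_k\phi_k)=\negl(\secp)$ gives $\Exp_k\Tr(\Pi_k\phi_k)\ge 1-2\eta$, whence the gentle-measurement lemma shows the subnormalized $\sigma_k:=\Pi_k\phi_k\Pi_k$ satisfy $\Exp_k\lVert\phi_k-\sigma_k\rVert_1\le 2\sqrt{2\eta}$ and so $\lVert\overline\phi-\overline\sigma\rVert_1\le 2\sqrt{2\eta}$. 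Since $\sigma_{k'}\preceq\Pi_{k'}$, the same chain gives $\Tr(\overline M\,\overline\phi)\ge\tfrac12\Tr(\overline\sigma\,\overline\phi)\ge\tfrac12\big(\Tr(\overline\phi^{\,2})-\lVert\overline\phi-\overline\sigma\rVert_1\big)\ge\tfrac1{2d}-\sqrt{2\eta}$, which is non-negligible because $1/d=1/\poly(\secp)$ and $\eta$ is negligible, contradicting security.

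I expect the main obstacle to be precisely this last step: the clean inequality $M_k\succeq\Pi_k$ is available only under perfect correctness, and a naive robust analogue risks amplifying the tiny correctness error by the (possibly exponentially small) least eigenvalue of $\overline\phi$. The thresholding-plus-gentle-measurement route above avoids this by never dividing by eigenvalues of $\overline\phi$ and instead comparing $\Pi_{k'}$ to $\phi_{k'}$ directly. A secondary point to verify is that the definition lets $\Vrfy$ consume only a single copy of $\phi_k$; this is what keeps the relevant dimension at $d$, and hence the bound at $1/d$ rather than the useless $1/d^t$ that a multi-copy verifier would force.
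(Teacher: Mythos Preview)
Your proposal is correct and reaches the same quantitative bound $2^{-m-1}-\negl(\secp)$ as the paper, but the technical route genuinely differs. The paper also uses the trivial fresh-key adversary, but it bounds the success probability via
\[
\Pr[\Ver(k',\phi_k)=\top]\ge \Pr[\Ver(k',\phi_{k'})=\top]-\|\phi_k-\phi_{k'}\|_{tr},
\]
then invokes correctness and controls $\mathbb{E}_{k,k'}\|\phi_k-\phi_{k'}\|_{tr}$ by diagonalizing the mixed outputs, applying the Welch bound (\Cref{cor:Welch}) to the resulting pure components, and using concavity of $\sqrt{1-x}$ to get $\mathbb{E}_{k,k'}\|\phi_k-\phi_{k'}\|_{tr}\le \sqrt{1-2^{-m}}$. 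Your route instead compares operators directly: the purity bound $\Tr(\overline\phi^{\,2})\ge 1/d$ replaces Welch and works immediately for mixed outputs without any diagonalization, which is arguably cleaner. Conversely, the paper's handling of the correctness slack is lighter---a single trace-distance shift---whereas you pay with a threshold projector and the gentle-measurement lemma. Your secondary concern is resolved by the paper's definition: $\Ver$ receives exactly one copy of $\phi_k$, so the relevant dimension is indeed $d=2^m$.
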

Note that their conjecture is the impossibility of OWSGs with $O(\log \log\secp)$-qubit outputs, but we prove the impossibility even for $O(\log\secp)$-qubit outputs. 
The proof of the above theorem is quite simple. We simply consider a trivial attacker that just outputs a random input $k'$ just ignoring the target state $\phi_k$. 
When $\phi_k$ is an $m$-qubit \emph{pure} state, Welch bound~\cite{Welch}~(\Cref{lem:Welch}) directly implies that the above adversary has an advantage at least $2^{-m}$, which is inverse-polynomial when $m=O(\log\secp)$. 
For the case of mixed state outputs, we need slightly more calculations, but we can still show that the above adversary has an inverse-polynomial advantage when $m=O(\log\secp)$.  

The above theorem means that when the output length is $O(\log\secp)$, there is an adversary with advantage $1/\poly(\secp)$. 
On the other hand,  for OWFs with $O(\log\secp)$-bit outputs, there is an adversary with advantage $1-1/\poly(\secp)$ for any polynomial $\poly$ (e.g., see \cite[Excercise 10 of Section 2]{DBLP:books/cu/Goldreich2001}). 
Thus, it is natural to ask if such an adversary with advantage $1-1/\poly(\secp)$ exists for OWSGs with $O(\log\secp)$-qubit outputs. 
First, we note that we cannot amplify the adversary's advantage by repeating it many times in general.\footnote{Note that such an amplification of the adversary's advantage does not work even for OWFs. Indeed, if that was possible, then any weak OWF would be also a OWF. But we can easily construct a weak OWF that is not a OWF assuming the existence of OWF. 
Do not confuse this with the amplification theorem that constructs OWFs from weak OWFs~\cite{FOCS:Yao82a}.
The amplification theorem means that we can convert any weak OWF to a OWF, but it does not mean that any weak OWF is also a OWF.
}  
For example, suppose that an adversary is given many copies of $\phi_k$
always outputs $k'$ such that the verification algorithm accepts $(k',\phi_{k})$ with probability $1/\poly(\secp)$. 
Such an adversary has an advantage $1/\poly(\secp)$ against the security of the OWSG, but one cannot gain anything by running it many times since it always outputs $k'$ that has only $1/\poly(\secp)$ acceptance probability.
This leaves a possibility that we can circumvent the barrier of $O(\log\secp)$-qubit outputs if we weaken the security requirement. To better understand the tradeoffs between the output length and security, we introduce an $\epsilon$-OWSGs, which are a parameterized variant of OWSGs where the adversary's advantage is only required to be at most $\epsilon$, and study parameter regimes for output length for which $\epsilon$-OWSGs do and do not exist. 

First, we study the regime of $\epsilon=\secp^{-\Theta(1)}$. We show that the proof of \Cref{thm:intro_impossibility_log_OWSG} naturally extends to this setting to obtain the following theorem. 
\begin{theorem}\label{thm:intro_impossibility_inverse-poly_OWSG}  
For any constant $c\in \mathbb{N}$,   
there do not exist $\secp^{-c}$-OWSGs of which output length is at most $c \log\secp-2$.      
\end{theorem}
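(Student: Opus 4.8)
The plan is to reuse the trivial adversary from the proof of \Cref{thm:intro_impossibility_log_OWSG} and simply track the constants carefully. Suppose toward a contradiction that a $\secp^{-c}$-OWSG exists whose output length is $m=m(\secp)\le c\log\secp-2$ (throughout, $\log=\log_2$). Let $\cA$ be the adversary that ignores the copies of the challenge state it is given and outputs a uniformly random key $k'$ from the key space. Since $\cA$ does no work, $\cA$ is trivially QPT and its advantage equals the probability that the verification algorithm accepts $(k',\phi_k)$ when $k$ and $k'$ are independent and uniform. I would show that this probability strictly exceeds $\secp^{-c}$, which contradicts $\secp^{-c}$-security.

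For pure-state outputs this is immediate. Writing $\phi_k=|\phi_k\rangle\langle\phi_k|$ and using (by the footnote's remark) that verification projects onto $|\phi_{k'}\rangle$, the advantage is
\[
\Adv=\Exp_{k,k'}\bigl[|\langle\phi_{k'}|\phi_k\rangle|^2\bigr]=\Tr\bigl[\bar\phi^2\bigr],\qquad \bar\phi:=\Exp_k\bigl[|\phi_k\rangle\langle\phi_k|\bigr].
\]
Since $\bar\phi$ is a density operator on the $2^m$-dimensional space of $m$ qubits, \Cref{lem:Welch} (equivalently, the purity of any $2^m$-dimensional state is at least $2^{-m}$) gives $\Tr[\bar\phi^2]\ge 2^{-m}$. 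Plugging in $m\le c\log\secp-2$ yields
\[
\Adv\ge 2^{-m}\ge 2^{-(c\log\secp-2)}=4\,\secp^{-c}>\secp^{-c},
\]
the desired contradiction. Note that the $-2$ in the statement is exactly what buys the factor $4$ of slack here.

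For mixed-state outputs the same template applies, but with the projector replaced by an arbitrary correctness-satisfying family of measurement operators $\{M_{k'}\}$, so that $\Adv=\Tr[\bar M\bar\phi]$ with $\bar M:=\Exp_{k'}[M_{k'}]$ and $\bar\phi:=\Exp_k[\phi_k]$. I would lower-bound this quantity by $\gamma\cdot 2^{-m}$ for an explicit constant $\gamma$ by following the mixed-state calculation already carried out for \Cref{thm:intro_impossibility_log_OWSG}, combining per-key correctness ($\Tr[M_k\phi_k]\ge 1-\negl(\secp)$ for every $k$) with the dimension bound on $\bar\phi$. The main obstacle is quantitative rather than conceptual: unlike the pure case, $\bar M\ne\bar\phi$ in general, so the purity inequality does not apply verbatim, and one must argue that correctness forces $\bar M$ to overlap the heavy eigendirections of $\bar\phi$ strongly enough that $\gamma>1/4$. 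Granting this, $\Adv\ge\gamma\cdot 2^{-m}\ge 4\gamma\cdot\secp^{-c}>\secp^{-c}$ whenever $m\le c\log\secp-2$, and the choice of the offset $-2$ (hence the factor $4$) is precisely what makes the constant work out. This constant-tracking, rather than any new idea, is the only difference from \Cref{thm:intro_impossibility_log_OWSG}.
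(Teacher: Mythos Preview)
Your overall plan—reuse the trivial adversary that samples a fresh key and track constants—is exactly what the paper does, and your pure-state computation is fine (indeed slightly sharper than what the paper writes down). The mixed-state case, however, has a real gap in the way you have written it.

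You rewrite the advantage as $\Tr[\bar M\,\bar\phi]$ and then propose to argue that ``correctness forces $\bar M$ to overlap the heavy eigendirections of $\bar\phi$ strongly enough that $\gamma>1/4$,'' finishing with ``granting this.'' That eigendirection-overlap argument is not carried out, and it is not obvious how to make it work in general: correctness only controls the diagonal pairings $\Tr[M_k\phi_k]$ (and only on average over $k\gets\KeyGen$, not per key), which does not by itself say that $\bar M$ aligns with the large-eigenvalue subspace of $\bar\phi$. This is not the route the paper takes, and you should not leave it as a plausibility claim.

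The paper's mixed-state calculation for \Cref{thm:intro_impossibility_log_OWSG} avoids the $\Tr[\bar M\,\bar\phi]$ formulation entirely. It lower-bounds the advantage by
\[
\Pr[\Ver(k',\phi_k)=\top]\;\ge\;\Pr[\Ver(k',\phi_{k'})=\top]-\|\phi_k-\phi_{k'}\|_{tr},
\]
takes expectations, applies correctness, and then bounds $\mathbb{E}_{k,k'}\|\phi_k-\phi_{k'}\|_{tr}\le\sqrt{1-2^{-m}}$ via diagonalization of each $\phi_k$ and the Welch bound (\Cref{cor:Welch}). Using $\sqrt{1-x}\le 1-x/2$ yields
\[
\Pr[\cA~\text{wins}]\;\ge\;2^{-m-1}-\negl(\secp),
\]
i.e.\ $\gamma=1/2$ in your notation—no eigendirection analysis needed. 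Then for $m\le c\log\secp-2$ one gets $2^{-m-1}\ge 2\secp^{-c}$, hence $\Pr[\cA~\text{wins}]\ge 2\secp^{-c}-\negl(\secp)>\secp^{-c}$ for all sufficiently large $\secp$. So the fix is simply to invoke that bound directly rather than sketching a new (and unfinished) spectral argument. Two small side remarks: the adversary should sample $k'\gets\KeyGen(1^\secp)$ rather than ``uniformly from the key space,'' and correctness is only guaranteed in expectation over $k$, not for every $k$.
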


Next, we study the regime of $\epsilon=O(1)$. 
We show the following lower bound.  
\begin{theorem}\label{thm:intro_impossibility_loglog_constant_OWSG}
There do not exist $O(1)$-OWSGs
with $(\frac{1}{2}\log\log \secp+O(1))$-qubit outputs. 
\end{theorem}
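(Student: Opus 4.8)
The plan is to exhibit a single generic quantum polynomial-time adversary whose advantage against \emph{any} OWSG of output length $m \le \tfrac12\log\log\secp + O(1)$ is bounded below by an absolute constant. The random-guessing attacker behind \Cref{thm:intro_impossibility_log_OWSG} does not suffice here: the Welch bound only gives advantage $\ge 2^{-m} = (\log\secp)^{-1/2}$, which tends to $0$ when $m=\tfrac12\log\log\secp$. The new ingredient is a covering argument. Writing $d := 2^m$ for the output dimension, the set $\mathcal D_d$ of $d$-dimensional density matrices sits in a real space of dimension $d^2-1$, so for any constant $\epsilon>0$ it has an $\epsilon$-net $\{\sigma_1,\dots,\sigma_M\}$ in trace norm $\|\cdot\|_1$ with $M \le (6/\epsilon)^{d^2-1}$. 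Crucially, $m \le \tfrac12\log\log\secp+O(1)$ gives $d^2 = 2^{2m} = O(\log\secp)$, hence $M = (6/\epsilon)^{O(\log\secp)} = \poly(\secp)$ for every constant $\epsilon$. This is exactly the threshold that fails at the $O(\log\log\secp)$ length of our positive result, where $d^2=\poly\log\secp$ makes $M$ quasi-polynomial and renders the attack inefficient.

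The adversary $\cA$ is a ``sample-and-test'' attacker. Given $\poly(\secp)$ copies of the challenge state $\phi_k$, it (i) draws $T := \Theta(M)=\poly(\secp)$ independent uniformly random candidate keys $k'_1,\dots,k'_T$; (ii) for each $j$, runs the scheme's (public) verification algorithm $V$ on $(k'_j,\phi_k)$ a total of $O(\log(T)/\delta^2)$ times on fresh copies, obtaining an estimate $\widehat p_j$ of the true acceptance probability $p(k'_j,k) := \Pr[\top \leftarrow V(k'_j,\phi_k)]$ within additive error $\delta$ (a small constant); and (iii) outputs the $k'_j$ maximizing $\widehat p_j$. The total number of copies consumed is $T\cdot O(\log(T)/\delta^2)=\poly(\secp)$ and every step is efficient, so $\cA$ is QPT.

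For the analysis, assign each key $k$ to its nearest net point, partitioning the keys into at most $M$ clusters of $\|\cdot\|_1$-radius $\epsilon$. Call a cluster \emph{large} if it contains at least a $1/(2M)$-fraction of all keys; keys in small clusters number at most $M\cdot\tfrac{1}{2M}=\tfrac12$ of all keys, so at least half of the keys lie in large clusters. Fix a target $k$ in a large cluster $C$. A uniformly random candidate lands in $C$ with probability $\ge 1/(2M)$, so with $T=\Theta(M)$ samples some $k'_j\in C$ occurs except with probability $e^{-\Omega(1)}$. For such a $k'_j$ we have $\|\phi_{k'_j}-\phi_k\|_1 \le 2\epsilon$, and since the accept operator satisfies $0\le E_{k'_j}\le I$ while correctness gives $p(k'_j,k'_j)\ge 1-\negl$, we get $p(k'_j,k) \ge p(k'_j,k'_j) - \|\phi_{k'_j}-\phi_k\|_1 \ge 1-\negl-2\epsilon$. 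The testing step (a union bound over the $T$ estimates) then guarantees that the key actually output has true acceptance at least $1-\negl-2\epsilon-2\delta$. Averaging over $k$, the advantage of $\cA$ is at least $\tfrac12\cdot(1-e^{-\Omega(1)})\cdot(1-\negl-2\epsilon-2\delta)$, a positive constant for small enough $\epsilon,\delta$; this rules out $O(1)$-OWSGs at this output length.

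The main obstacle, and the reason for using trace norm and the $d^2$ (rather than $2d$) covering exponent, is to make the argument go through for \emph{mixed}-state outputs with an \emph{arbitrary} correct verifier. The bound $p(k'_j,k) \ge p(k'_j,k'_j) - \|\phi_{k'_j}-\phi_k\|_1$ uses only $\|E_{k'_j}\|_\infty\le 1$, so it holds for any verification procedure; the one point that needs care is how many copies of $\phi_k$ the verifier reads, since reading $\ell$ copies would degrade the bound to $2\ell\epsilon$. This is harmless when verification reads $O(1)$ copies (as in the standard OWSG definition, where it reads a single copy); otherwise one rescales $\epsilon\mapsto\epsilon/\ell$, which keeps $M$ polynomial only for $\ell=\secp^{o(1)}$. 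Finally, the pure-state case goes through identically, replacing the measurement-robustness step by the fidelity bound $|\langle\phi_{k'_j}|\phi_k\rangle|^2 \ge 1-\tfrac14\|\phi_{k'_j}-\phi_k\|_1^2$.
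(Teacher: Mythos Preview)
Your approach is correct and reaches the same threshold as the paper, but via a genuinely different mechanism. The paper's proof (\Cref{thm:weak_OWSG_attack}) is tomography-based: it tomographs the challenge state $\phi_k$ to locate it near some subset of net points, then repeatedly samples fresh keys $k_i\gets\KeyGen$, tomographs each $\phi_{k_i}$, and outputs the first $k_i$ whose tomogram lands near one of those net points. You instead bypass tomography entirely, sampling candidates up front and using the scheme's own verifier to score them. Both arguments rest on the identical covering fact---the $\epsilon$-net of $d$-dimensional mixed states has size $(O(1)/\epsilon)^{d^2}=\poly(\secp)$ precisely when $d^2=2^{2m}=O(\log\secp)$---and both yield the same $\tfrac12\log\log\secp+O(1)$ threshold. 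Your route is more elementary (no tomography lemma, no explicit construction of the net, the net appears only in the analysis) and makes transparent that the attack works for any correct single-copy verifier; the paper's tomography-based route never calls the verifier during the search, which cleanly separates the geometric matching from verification.

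Two small points of care. First, candidates should be drawn from $\KeyGen$ rather than ``uniformly at random'': the OWSG definition allows $\KeyGen$ to be an arbitrary QPT sampler, and your cluster-mass bookkeeping (``large clusters contain at least a $1/(2M)$-fraction of keys'') is really a statement about $\KeyGen$-probability mass. Second, the step ``correctness gives $p(k'_j,k'_j)\ge 1-\negl$'' holds only in expectation over $k'_j\gets\KeyGen$, not pointwise; but since you are conditioning on $k'_j$ landing in a cluster of mass $\ge 1/(2M)$ with $M=\poly(\secp)$, the conditional expectation of $1-p(k'_j,k'_j)$ on that cluster is at most $2M\cdot\negl(\secp)=\negl(\secp)$, so the final constant advantage survives.
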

In this regime, the idea for the proofs of  \Cref{thm:intro_impossibility_log_OWSG,thm:intro_impossibility_inverse-poly_OWSG} does not work since Welch bound only gives an adversary with advantage $1/\poly(\secp)$. Thus, we rely on a different idea. 
Our proof is inspired by the construction of quantum pseudorandom generators from short PRSGs~\cite{cryptoeprint:2023/904}. If their construction applies to the one-wayness, we can construct weak OWFs with $O(\log \secp)$-bit output length, which is known to be impossible. Unfortunately, their result heavily relies on the pseudorandomness of outputs, and extending it to the OWSGs does not seem easy. Instead, we use the tomography and epsilon-nets\footnote{An $\gamma$-net of $m$-qubit quantum states is a set of quantum states such that for any $m$-qubit quantum state, there exists an $\gamma$-close element in the set. We use $\gamma$ instead of $\epsilon$ to avoid confusion with the security of OWSGs.} (\Cref{lem:tomography,lem:epsnet}) to construct a randomized algorithm with classical outputs from the OWSGs, in which we can adapt the impossibility of weak OWFs with $O(\log \secp)$-bit output length to break the original OWSGs. The output length is bounded by $O(\log \lambda)$ because the cardinality of $\gamma$-net in $m$-qubit space is bounded by $(C/\gamma)^{2^{2m}}$ for some constant $C$, which is polynomial for constant $\gamma$ and 
$m=\frac{1}{2}\log\log \secp+O(1)$. 
For the OWSGs with pure state outputs, this can be improved to $\log\log \secp +O(1).$

Finally, we study the regime of  $\epsilon=1-1/\poly(\secp)$ for some polynomial $\poly$. 
We refer to $(1-1/\poly(\secp))$-OWSGs as weak OWSGs following \cite{TQC:MorYam24}. 
We show the following lower bound. 
\begin{theorem}\label{thm:intro_impossibility_constant_weak_OWSG}
There do not exist weak OWSGs with $O(1)$-qubit outputs. 
\end{theorem}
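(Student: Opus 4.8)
The plan is to construct, for every polynomial $q$, a QPT adversary that makes verification accept with probability at least $1-1/q(\secp)$, thereby showing that a construction with $m=O(1)$-qubit outputs cannot be a $(1-1/q(\secp))$-OWSG for any $q$, hence is not a weak OWSG. The adversary never tries to recover the actual key; instead it searches for \emph{some} key whose output state is trace-close to the target $\phi_k$. This suffices because verification accepts $(k',\phi_{k'})$ with overwhelming probability by correctness, and trace distance bounds the change in the acceptance probability of any measurement.

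Set a granularity $\gamma=1/\poly(\secp)$ to be fixed later. Since $m=O(1)$, the dimension $2^m$ is constant, so by \Cref{lem:epsnet} there is a $\gamma$-net $\cN$ of the $m$-qubit states of size $(C/\gamma)^{2^{2m}}=\poly(\secp)$, and by \Cref{lem:tomography} full-state tomography yields a classical estimate within trace distance $\gamma$ from $\poly(\secp)$ copies. The adversary, given $\poly(\secp)$ copies of $\phi_k$, first tomographs $\phi_k$ to obtain $\hat\rho$; it then samples $N=\poly(\secp)$ fresh keys $k'_1,\dots,k'_N$ from the key distribution, generates its own copies of each $\phi_{k'_i}$ via $\StateGen$, tomographs them to estimates $\hat\rho_i$, and outputs the $k'_i$ minimizing $\lnorm{\hat\rho_i-\hat\rho}{1}$.

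The core of the analysis is a collision/covering argument. Assigning each key to the net point nearest its output state partitions the key distribution into cells of weight $q_n$, and a random key lands in cell $n$ with probability $q_n$. Writing $w_k=q_{n_k}$ for the weight of the cell of the target key, the probability that none of the $N$ trials hits that cell is at most $e^{-Nw_k}$, and conditioned on a hit the selected key satisfies $\lnorm{\phi_{k'}-\phi_k}{1}=O(\gamma)$ after accounting for the $O(\gamma)$ tomography errors. Averaging over $k$ reduces the whole bound to controlling $\Exp_k[e^{-Nw_k}]$. Splitting on whether $w_k\ge\tau$ and using $\Pr_k[w_k<\tau]=\sum_{n:q_n<\tau}q_n\le|\cN|\tau$ gives $\Exp_k[e^{-Nw_k}]\le e^{-N\tau}+|\cN|\tau$, which becomes $o(1/q(\secp))$ by choosing $\tau=1/(|\cN|\,q(\secp)^2)$ and $N$ a sufficiently large polynomial multiple of $|\cN|\,q(\secp)^2$; all quantities stay polynomial because $|\cN|=\poly(\secp)$.

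Finally, whenever the adversary outputs $k'$ with $\lnorm{\phi_{k'}-\phi_k}{1}\le\delta$, correctness gives that $\Vrfy$ accepts $(k',\phi_{k'}^{\otimes\ell})$ with overwhelming probability, and since the verifier consumes at most $\ell=\poly(\secp)$ copies we have $\lnorm{\phi_{k'}^{\otimes\ell}-\phi_k^{\otimes\ell}}{1}\le\ell\delta$, so $\Vrfy$ also accepts $(k',\phi_k^{\otimes\ell})$ with probability at least $1-\negl(\secp)-\ell\delta$. Choosing $\gamma$ (hence $\delta=O(\gamma)$) small enough that $\ell\delta\le 1/(2q(\secp))$ makes the overall acceptance probability exceed $1-1/q(\secp)$. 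The step I expect to be the main obstacle is the averaging bound on $\Exp_k[e^{-Nw_k}]$: unlike a worst-case guarantee it must exploit that the target key is drawn from the same distribution as the trial keys, and it has to be balanced simultaneously against the net granularity, the tomography precision, and the polynomial number of copies the verifier uses.
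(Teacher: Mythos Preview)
Your proposal is correct and follows essentially the same route as the paper: tomograph the target and freshly sampled outputs, use an $\epsilon$-net together with the fact that the target key and the trial keys are drawn from the same distribution to guarantee a trace-close hit among $\poly(\secp)$ samples, and conclude via correctness plus the trace-distance bound on $\Ver$. The paper packages this as a general statement (\Cref{thm:weak_OWSG_attack}) and phrases the probabilistic step as a ``bad net point'' union bound rather than your direct averaging bound $\Exp_k[e^{-Nw_k}]\le e^{-N\tau}+|\cN|\tau$, but the two arguments are equivalent reformulations; one cosmetic point is that in the paper's definition $\Ver$ receives a single copy of $\phi_k$, so your $\ell$-copy step can be taken with $\ell=1$.
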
 
The proof is almost the same as that of \Cref{thm:intro_impossibility_loglog_constant_OWSG} 
noting that $(C/\gamma)^{2^{2m}}$ is polynomial when $\gamma$ is inverse-polynomial and $m$ is constant.

\paragraph{\bf Upper Bounds for ($\epsilon$)-OWSGs.}
We show the above lower bounds are almost tight by showing almost matching upper bounds. 

As already mentioned, the combination of existing works \cite{C:BraShm20,CGGHLP23} gives 
$m(\secp)$-qubit output OWSGs based on OWFs for any QPT-computable function $m(\secp)=\omega(\log \secp)$.
This means that \Cref{thm:impossibility_log_OWSG} is tight. 
We show that their construction can be extended to $\secp^{-\Theta(1)}$-OWSGs to obtain the following theorem. 
\begin{theorem}\label{thm:intro_possibility_inverse-poly_OWSG}
   For any constant $c\in \mathbb{N}$, there exist $\lambda^{-c}$-OWSGs with $((c+1) \log \lambda+O(1))$-qubit pure state outputs assuming the existence of OWFs.
\end{theorem}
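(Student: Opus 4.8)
The plan is to instantiate the Brakerski--Shmueli pseudorandom state construction~\cite{C:BraShm20} at the short output length $m\seteq (c+1)\log\secp+O(1)$ and to argue that, although at this length it is no longer a full PRSG (so the Cavalar et al.~\cite{CGGHLP23} analysis does not apply verbatim), it is still a $\secp^{-c}$-OWSG. Concretely, starting from a OWF I would build a PRG $G\colon\bit^\secp\to\bit^{2^m}$ (any polynomial stretch suffices, since $2^m=\Theta(\secp^{c+1})$ is polynomial), and define the OWSG to output, on key $k$, the binary phase state $\ket{\psi_{G(k)}}$, where $\ket{\psi_s}\seteq 2^{-m/2}\sum_{x\in\bit^m}(-1)^{s_x}\ket{x}$ for $s\in\bit^{2^m}$. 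These are pure $m$-qubit states, efficiently preparable because $2^m$ is polynomial, and the verification algorithm on input $(k',\phi_k)$ simply projects $\phi_k$ onto $\ket{\psi_{G(k')}}$ and accepts on success.

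The first step is a standard hybrid. Using security of $G$, I would replace the honest seed $G(k)$ by a uniformly random $R\in\bit^{2^m}$ in all copies handed to the adversary $\advA$ as well as in the verification copy. The reduction is a QPT distinguisher that prepares $t+1$ copies of $\ket{\psi_S}$ from its challenge $S$ (efficient since $2^m$ is polynomial), runs $\advA$ on $t$ of them to obtain $k'$, and measures the last copy with $\{\ket{\psi_{G(k')}}\bra{\psi_{G(k')}},\,I-\ket{\psi_{G(k')}}\bra{\psi_{G(k')}}\}$; its distinguishing advantage is exactly the gap between the real OWSG success probability and the success probability in the game with a uniformly random target. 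Hence the adversary's advantage is at most $\Exp_R\big[\Exp_{k'\gets\advA}|\braket{\psi_{G(k')}|\psi_R}|^2\big]+\negl$.

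The crux is to bound this random-target success probability uniformly in the number of copies $t$. Here the naive route through the optimal state-estimation fidelity $\tfrac{t+1}{t+2^m}$ fails: that quantity grows with $t$ and exceeds $\secp^{-c}$ once $t\gtrsim\secp$, so it cannot rule out adversaries that take many copies (indeed, with polynomially many copies one can tomograph the $O(\log\secp)$-qubit target almost perfectly). The observation that resolves this is that $\advA$'s output is constrained to the image $\{\ket{\psi_{G(k')}}\}_{k'}$, a set of only $2^\secp$ states, whereas $\ket{\psi_R}$ is a uniformly random phase state; thus, however well the adversary estimates the target, its success is at most $\Exp_R\big[\max_{k'}|\braket{\psi_{G(k')}|\psi_R}|^2\big]$, which is independent of $t$. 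Writing $|\braket{\psi_{G(k')}|\psi_R}|^2=(2p_{k'}-1)^2$ for the agreement fraction $p_{k'}$ between $G(k')$ and $R$, a Hoeffding bound gives $\Pr_R[(2p_{k'}-1)^2\ge s]\le 2e^{-2^m s/2}$, and a union bound over the $2^\secp$ keys followed by integrating the tail yields $\Exp_R\big[\max_{k'}(2p_{k'}-1)^2\big]=O(\secp/2^m)$.

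Since $2^m=\Theta(\secp^{c+1})$, this is $O(\secp^{-c})$, and the additive $O(1)$ in the output length is precisely the slack used to drive the hidden constant below $1$, so the total advantage is at most $\secp^{-c}+\negl\le\secp^{-c}$ for all large $\secp$. I expect the main obstacle to be exactly the step above: realizing that unbounded-copy adversaries must be controlled not by estimation quality but by the scarcity of keys relative to the target space, and carrying out the corresponding concentration-and-union-bound estimate, whose quantitative form is what forces the extra $\log\secp$ qubits separating this upper bound from the $c\log\secp-2$ lower bound of \Cref{thm:intro_impossibility_inverse-poly_OWSG}.
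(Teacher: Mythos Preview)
Your proposal is correct and structurally mirrors the paper's argument: both hybrid to a random target via pseudorandomness and then control the success probability by a concentration bound together with a union bound over the $2^\secp$ keys, which is exactly the ``scarcity of keys'' insight you highlight. The difference is in the instantiation. The paper keeps the Brakerski--Shmueli construction as a black-box $\negl$-PRSG even at $m=(c+1)\log\secp+O(1)$, and invokes a refined version of the \cite{CGGHLP23} lemma with the exact Haar tail $\Pr_{\ket{\psi}\gets\mu_m}[|\braket{\psi|\phi_0}|^2\ge h]=(1-h)^{2^m-1}$; setting $h=\secp^{-c}/4$ and solving $2^\secp(1-h)^{2^m-1}\le\secp^{-c}/4$ yields the same $m=(c+1)\log\secp+O(1)$. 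You instead bypass the PRSG abstraction, use PRG security to switch to a uniformly random phase string, and bound $\mathbb{E}_R[\max_{k'}(2p_{k'}-1)^2]$ via Hoeffding plus the same union bound. Your route is more elementary (only a PRG with polynomial stretch, no PRF or Haar-random analysis), while the paper's route is more modular, giving a general $\epsilon$-PRSG $\Rightarrow$ $(\epsilon+\delta)$-OWSG statement. One small inaccuracy: your parenthetical ``it is no longer a full PRSG'' is not right---the Brakerski--Shmueli construction remains a $\negl$-PRSG at this output length, and the paper uses precisely this fact; what fails at $m=O(\log\secp)$ is only the conclusion that it is a $\negl$-OWSG, not pseudorandomness itself. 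This does not affect your argument.
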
  
Comparing this with \Cref{thm:intro_impossibility_inverse-poly_OWSG}, we can see that it is almost tight though there is a small gap between $c\log \secp-2$ and $(c+1)\log \secp+O(1)$. 
The above theorem means that we can circumvent the barrier of $O(\log\secp)$-qubit outputs for OWSGs (\Cref{thm:intro_impossibility_log_OWSG}) if we slightly weaken ($\negl(\secp)$-)OWSGs to $\secp^{-\Theta(1)}$-OWSGs.

Next, we show an upper bound for the output length of $\Theta(1)$-OWSGs.  
\begin{theorem}\label{thm:intro_possibility_Ologlog_constant_OWSG}
For any constant $\epsilon>0$, there exist $\epsilon$-OWSGs with $O(\log\log \secp)$-qubit pure state outputs assuming the existence of subexponentially secure OWFs.   
\end{theorem}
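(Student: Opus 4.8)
The plan is to reuse the pseudorandom-state-style template underlying the standard OWSG upper bound, but instantiate it with a short output and a subexponentially secure building block, and then argue security \emph{directly} rather than through pseudorandomness of the output state. Concretely, I would start from a subexponentially secure OWF, convert it into a subexponentially secure PRG $G$, and set $m=c_\epsilon\log\log\secp$ for a constant $c_\epsilon>1$ to be tuned to $\epsilon$. On key $k$ the generator outputs the $m$-qubit pure state $\ket{\phi_k}$ whose classical description is fixed by $G(k)$ (e.g.\ the binary-phase state $\frac{1}{\sqrt{d}}\sum_{x}(-1)^{G(k)_x}\ket{x}$ with $d=2^m$), and verification projects the target onto $\ket{\phi_{k'}}$. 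This is the same construction used for the $\secp^{-c}$-OWSGs of \Cref{thm:intro_possibility_inverse-poly_OWSG}, pushed to a much shorter output and a stronger hardness assumption.

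Next I would reduce one-wayness to a purely classical search problem. Since $d=2^m=\polylog(\secp)$, \Cref{lem:tomography} shows that polynomially many copies of $\ket{\phi_k}$ suffice to produce a classical estimate $\widehat\psi$ within inverse-polynomial trace distance of $\ket{\phi_k}$. Hence, up to an inverse-polynomial loss, any QPT attacker may be assumed to first tomograph the target into $\widehat\psi\approx\ket{\phi_k}$ and then search \emph{classically} for a key $k'$ with $\ket{\phi_{k'}}$ close to $\widehat\psi$; its advantage is essentially the success probability of this search. Because the search only evaluates the public map $k'\mapsto\ket{\phi_{k'}}$ in the forward direction, the attacker can test at most $s=\poly(\secp)$ candidate keys. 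To bound this I would pass to an ideal experiment in which $G(k)$ is replaced by a uniform string, making the target a generic $m$-qubit state independent of the key family. There the Welch bound (\Cref{lem:Welch}) gives each candidate $\ket{\phi_{k'}}$ expected squared overlap $1/d=2^{-m}=1/\polylog(\secp)$ with the target, and the maximal overlap over the $s$ candidates behaves like $1-s^{-1/(d-1)}$, which vanishes as $\secp\to\infty$ once $c_\epsilon>1$ and hence stays below $\epsilon$. Combining the tomography loss, the PRG-replacement loss, and this ideal bound yields advantage at most $\epsilon$, matching \Cref{thm:intro_impossibility_loglog_constant_OWSG} up to the constant in front of $\log\log\secp$.

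The hard part will be the security reduction, for two linked reasons. First, one cannot simply invoke pseudorandom-state security and swap the target for a Haar-random state as in the usual OWSG-from-PRSG argument: an $O(\log\log\secp)$-qubit state admits efficient tomography, so it can never be pseudorandom against QPT distinguishers, and the clean ``reduce to Haar'' step is unavailable. Security must instead be extracted from the \emph{classical} pseudorandomness of $G(k)$ while explicitly accounting for the attacker's ability to learn the target by tomography. Second, the post-tomography search is adaptive—the candidate keys depend on $\widehat\psi$—so bounding it uniformly requires a union bound over a $\gamma$-net of all possible estimates (\Cref{lem:epsnet}), and for $m$-qubit pure states this net has cardinality $(C/\gamma)^{\Theta(2^m)}=2^{\polylog(\secp)}$.

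This subexponential union-bound loss is exactly where I expect the subexponential assumption to be forced: a generic negligible PRG advantage need not dominate a $2^{\polylog(\secp)}$ factor, whereas a subexponential advantage $2^{-\secp^{\Omega(1)}}$ comfortably does, so the PRG-replacement step survives the union bound over the net. Getting the net exponent right—the pure-state value $\Theta(2^m)$ rather than the mixed-state value $\Theta(2^{2m})$—is what keeps the output length at $O(\log\log\secp)$ rather than larger, and tuning $c_\epsilon$ to $\epsilon$ closes the argument.
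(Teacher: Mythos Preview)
Your proposal has a genuine gap in the security argument, and the paper takes a quite different (and simpler) route.

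The core problem is step (v) of your plan: bounding the adversary's success in the ``ideal'' experiment where $G(k)$ is replaced by a uniform $y\in\{0,1\}^d$. You claim the attacker ``can test at most $s=\poly(\secp)$ candidate keys'' and that the maximal overlap over these behaves like $1-s^{-1/(d-1)}$. Neither holds. First, the attacker is not restricted to black-box forward evaluation of $k'\mapsto\ket{\phi_{k'}}$; it may exploit the structure of $G$ arbitrarily, so the ``$s$ candidates'' model is unjustified. Second, and more seriously, after tomography the attacker essentially \emph{knows} $y$, so its output $k'$ is a function of $y$; you therefore need to bound $\mathbb{E}_y\bigl[|\langle\psi_{G(k')}|\psi_y\rangle|^2\bigr]$ with $k'=k'(y)$ chosen adaptively. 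For binary phase states the overlap is $(1-2\mathrm{HD}(G(k'),y)/d)^2$, so the attacker merely needs $G(k')$ Hamming-close to $y$ (or its complement). Whether such $k'$ exists and is efficiently findable depends on the geometry of $\mathrm{Im}(G)\subset\{0,1\}^d$, which PRG security says nothing about; indeed for a generic doubling PRG the image is a fairly good covering code, so close points typically exist. Your Welch-bound/Haar-concentration estimate applies only when the candidates are independent of $y$, which they are not. The union bound over a $\gamma$-net does not rescue this either: for each net point $N$ the adversary may pick $k'_N$ with $\ket{\phi_{k'_N}}$ near $N$, and since $\ket{\psi_y}$ is also near $N$ the overlap is large, so the per-point bound you would need simply fails. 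In short, after the PRG hybrid you are left with a decoding-type problem whose hardness is not a consequence of PRG security, and the argument stalls there.

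The paper avoids all of this by a different construction: compose a short-output OWF $f$ with \emph{quantum fingerprinting} (\cref{lem:finger}), outputting $\ket{h_{f(k)}}$. Fingerprinting guarantees $|\langle h_x|h_{x'}\rangle|\le\eta$ for \emph{every} pair $x\ne x'$, so regardless of how the adversary chooses $k'$, either $f(k')=f(k)$ (which inverts $f$, hence negligible) or the verifier accepts with probability at most $\eta^2$. This gives a one-line reduction with no tomography, no nets, and no PRG. Subexponential security enters only to obtain a OWF with $\polylog(\secp)$-bit output via padding, which makes the fingerprint $O(\log\ell\cdot\log\eta^{-1})=O(\log\log\secp)$ qubits for constant $\eta=\sqrt{\epsilon/2}$. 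The moral: the missing ingredient in your approach is precisely an encoding with \emph{worst-case} pairwise near-orthogonality, which is what fingerprinting (via error-correcting codes) provides and what a raw binary-phase-of-PRG state does not.
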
 
Comparing this with \Cref{thm:intro_impossibility_loglog_constant_OWSG}, we can see that it is tight ignoring the constant factor for $\log\log \secp$. 
We show this theorem by using quantum fingerprinting~\cite{fingerprinting}. Roughly, quantum finger printing encodes an $\ell$-bit classical string $x$ into an $O(\log(\ell)\log(\eta^{-1}))$-qubit pure state $\ket{h_x}$ in such a way that $|\langle h_x | h_{x'}\rangle|\le \eta$ for any $x\neq x'$. 
Our idea is to compose a OWF and quantum fingerprinting, i.e., our OWSG outputs $\ket{h_{f(x)}}$ on input $x$ where $f$ is a OWF.  
When $\eta=\Theta(1)$, the length of the output is logarithmic in the output length of $f$. Thus, if we start from a OWF with $\polylog(\secp)$-bit outputs, which exists assuming the existence of subexponentially secure OWFs, the output length of the above OWSG is $O(\log\log \secp)$. 
We also show that the above construction preserves the one-wayness up to an additive constant error caused by quantum fingerprinting. This yields \Cref{thm:intro_possibility_Ologlog_constant_OWSG}.


Finally, we show an upper bound for the output length of weak OWSGs. 
\begin{theorem}\label{thm:intro_possibility_superconstant_weak_OWSG}
For any QPT-computable 
function $m(\secp)=\omega(1)$, there exist weak OWSGs with $m(\secp)$-qubit pure state outputs assuming the existence of exponentially secure OWFs that map $\secp$-bit inputs to $O(\secp)$-bit outputs. 
\end{theorem}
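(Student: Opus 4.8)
The plan is to imitate the fingerprinting-based construction behind \Cref{thm:intro_possibility_Ologlog_constant_OWSG}, but to exploit the much weaker security goal---advantage only bounded away from $1$ by an inverse polynomial---to drive the output length down to any $\omega(1)$. Concretely, let $g$ be the assumed exponentially secure OWF, used at a \emph{reduced} input length $n=n(\secp)=\omega(\log\secp)$, so that by linear expansion $g$ maps $n$ bits to $\ell=O(n)$ bits, and let $\ket{h_y}$ be a quantum fingerprint encoding an $\ell$-bit string into $m(\secp)$ qubits with pairwise overlap $|\langle h_y|h_{y'}\rangle|\le \eta$ for all $y\ne y'$. The OWSG takes a key $x\in\zo{n}$ and outputs the pure state $\ket{h_{g(x)}}$; this is QPT since one computes $g(x)$ classically and then prepares the fingerprint, and verification projects onto $\ket{h_{g(k')}}$ as in the pure-state convention.

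To analyze one-wayness I would bound the advantage (success probability) of any QPT adversary $\cA$ directly. Given polynomially many copies of $\ket{h_{g(k)}}$, $\cA$ outputs some $k'$, and its success probability is $\Exp[|\langle h_{g(k)} | h_{g(k')}\rangle|^2]$. Splitting on whether $g(k')=g(k)$ and using the fingerprint guarantee, this is at most $\Pr[g(k')=g(k)]+\eta^2$. The first term is exactly the inversion probability of the reduction $\cB$ that, on input $y=g(x)$, prepares the copies of $\ket{h_y}$ itself (it can, since $\ket{h_y}$ depends only on $y$), runs $\cA$, and outputs $k'$; as $\cB$ perfectly simulates $\cA$'s view, $\Pr[g(k')=g(k)]$ equals $\cB$'s success against $g$. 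Since $\cB$ has size $\poly(\secp)\le 2^{\Omega(n)}$ and $n=\omega(\log\secp)$, exponential security gives $\Pr[g(k')=g(k)]\le 2^{-\Omega(n)}=\negl(\secp)$. Hence $\cA$'s advantage is at most $\eta^2+\negl(\secp)$, which is $\le 1-1/\poly(\secp)$ as long as $\eta\le 1-1/\poly(\secp)$ stays bounded away from $1$; this is exactly weak one-wayness.

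It remains to choose parameters so that the fingerprint of an $\ell=O(n)$-bit string fits in exactly $m(\secp)=\omega(1)$ qubits while $\eta$ stays bounded away from $1$. The governing packing constraint is that $2^\ell$ states with pairwise overlap $\le\eta$ live in dimension $2^m$ only if roughly $\ell\lesssim 2^m\log\frac{1}{1-\eta}$. Taking $\eta=1-1/\poly(\secp)$ makes $\log\frac{1}{1-\eta}=\Theta(\log\secp)$, so the constraint reads $n=\Theta(2^m\log\secp)$; since $m=\omega(1)$ this choice indeed yields $n=\omega(\log\secp)$, closing the loop with the security argument above. This also explains both hypotheses: exponential security lets me take $n$ as small as $\omega(\log\secp)$ (so the collision term is already negligible) and hence $m$ as small as any $\omega(1)$, while linear expansion keeps $\ell=O(n)$ so that no qubits are wasted; a super-linear stretch or merely sub-exponential security would force a larger $n$ and hence a larger $m$.

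The main obstacle is the fingerprint itself in this unusual regime: I need an \emph{efficiently preparable} encoding of $\Theta(2^m\log\secp)$-bit strings into only $m=\omega(1)$ qubits whose pairwise overlaps are bounded away from $1$ by an inverse polynomial. This is a far higher ``rate'' (about $\log\secp$ bits per dimension) than the constant-error fingerprints~\cite{fingerprinting} used for \Cref{thm:intro_possibility_Ologlog_constant_OWSG}, which bottom out near $\log\log\secp$ qubits; the saving comes precisely from letting $\eta\to 1$, and the packing bound shows it is information-theoretically possible, but producing an explicit QPT preparation---for instance from a balanced or large-alphabet algebraic code, or a structured spherical code---is the delicate point to pin down. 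Once such a fingerprint is available, the rest is the routine reduction above, and I would finally use that $m$ is QPT-computable so that $\Gen$ can read off the intended output length on input $1^\secp$.
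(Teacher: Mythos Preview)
Your overall framework---reduce the input length using exponential security, compose the OWF with a high-rate ``fingerprint'' that packs roughly $\log\secp$ classical bits per qubit with pairwise overlap $1-1/\poly(\secp)$, and run the same reduction splitting on $g(k')=g(k)$---is exactly the paper's approach, and your security analysis matches the paper's almost line for line. You also correctly pinpoint the only nontrivial step: exhibiting an \emph{explicit, QPT-preparable} encoding of $\Theta(\log\secp)$ bits into a single qubit with overlap bounded away from $1$ by an inverse polynomial.

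The gap is that you leave this step open, speculating about algebraic or spherical codes. The paper simply writes down the encoding: map $y\in[\secp]$ to the single-qubit phase state
\[
\ket{+_{2\pi y/\secp}}\coloneqq \tfrac{1}{\sqrt 2}\bigl(\ket{0}+e^{2\pi i y/\secp}\ket{1}\bigr),
\]
and observe that for $y\neq y'$,
\[
\bigl|\langle +_{2\pi y/\secp}\mid +_{2\pi y'/\secp}\rangle\bigr|
=\left|\tfrac{1+e^{2\pi i(y-y')/\secp}}{2}\right|
\le \sqrt{\tfrac{1+\cos(2\pi/\secp)}{2}}
= 1-\Theta(1/\secp^2).
\]
Thus $\secp$ equispaced points on the Bloch-sphere equator give exactly the packing you argued was information-theoretically possible. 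Tensoring $t=\lceil \ell/\log\secp\rceil$ such qubits encodes all of $f(k)\in\bit^\ell\hookrightarrow[\secp]^t$, and the pairwise overlap of the product states is still at most $\sqrt{(1+\cos(2\pi/\secp))/2}=1-1/\poly(\secp)$ since distinct images differ in at least one coordinate. With this in hand, your reduction yields $(1-1/\poly(\secp))$-one-wayness, and the parameter bookkeeping you sketched (take $\ell=\lfloor m(\secp)\log\secp\rfloor$, so $t\le m(\secp)$, padding if necessary) finishes the proof exactly as the paper does.
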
 
Comparing this with \Cref{thm:intro_impossibility_constant_weak_OWSG}, we can see that this is tight. 
The main observation for proving this theorem is the following. 
For an angle $\theta$, let $\ket{+_\theta}:=\frac{1}{\sqrt{2}}\left(\ket{0}+e^{i\theta}\ket{1}\right)$. Then we can show that for any distinct $y,y'\in [\secp]$, we have $|\langle +_{2\pi y/\secp}|+_{2\pi y'/\secp}\rangle|\le \sqrt{\frac{1+\cos (2\pi/\secp)}{2}}=1-1/\poly(\secp)$. 
This means that we can encode $\log \secp$ classical bits into $1$ qubit if we allow the inner products to be as large as $1-1/\poly(\secp)$. 
By repeating it, we can encode  $t\log \secp$ classical bits into $t$ qubits. 
By using this encoding instead of quantum fingerprinting in the construction for \Cref{thm:intro_possibility_Ologlog_constant_OWSG}, we obtain weak OWSGs with $t$-qubit outputs starting from any OWFs with $t\log \secp$-bit outputs. 
Assuming the existence of exponentially secure OWFs with linear expansion, there are OWFs with $t\log \secp$-bit outputs for any super-constant $t$. 
This yields \Cref{thm:intro_possibility_superconstant_weak_OWSG}.


Our results 
on upper and lower bounds for ($\epsilon$-)OWSGs 
are summarized in \Cref{table:results}.

\if0
One of the most important open problems regarding OWSGs is how much can we make the output length of OWSGs short?
In this paper, we show several results on it.

First, we point out that the combination of \cite{C:BraShm20} and \cite{CGGHLP23} shows the existence of $\omega(\log(\secp))$-output OWSGs, where $\secp$ is the security parameter:
\begin{theorem}
If there exist OWFs, then there exist OWSGs with $m$-qubit outputs for
any $m=\omega(\log\secp)$. 
\end{theorem}

We show that it is tight: 
\begin{theorem}
There do not exist $O(\log\secp)$-qubit output OWSGs.    
\end{theorem}
We show this result by constructing an explicit attack.
\fi

\begin{table}[h]
\caption{Summary of the results}
 \label{table:results}
 \centering
  \begin{tabular}{cll}
   \hline
   Security of OWSGs& Upperbound of output length& Lowerbound of output length\\
   \hline \hline
   $\negl(\secp)$  & $\omega(\log\secp)~\cite{C:BraShm20,CGGHLP23}$
& $O(\log\secp)$ [\cref{thm:intro_impossibility_log_OWSG}] \\
&(Assuming OWFs)&\\
\hline
$\secp^{-c}$  & $(c+1)\log\secp+O(1)$ [\cref{thm:intro_possibility_inverse-poly_OWSG}]
& $c\log\secp-2$ [\cref{thm:intro_impossibility_inverse-poly_OWSG}] \\
$(c=\Theta(1))$&(Assuming OWFs)&\\
\hline
constant& $O(\log\log\secp)$ [\cref{thm:intro_possibility_Ologlog_constant_OWSG}] & $\frac{1}{2}\log\log\secp+O(1)$ [\cref{thm:intro_impossibility_loglog_constant_OWSG}]\\
&(Assuming subexp secure OWFs)&\\
\hline
$1-\frac{1}{\poly(\secp)}$& $\omega(1)$ [\cref{thm:intro_possibility_superconstant_weak_OWSG}] & $O(1)$[\cref{thm:intro_impossibility_constant_weak_OWSG}] \\
& (Assuming exp secure OWFs) & \\
\hline
  \end{tabular}
\end{table}

\takashi{Added the following two paragraphs.}
\paragraph{\bf EFIs.} 
We show matching lower and upper bounds for the length of EFIs.
The lower bound is stated as follows. 
\begin{theorem}\label{thm:intro_lower_EFI}
There do not exist $O(\log\secp)$-qubit EFIs.
\end{theorem}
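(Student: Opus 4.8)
The plan is to contradict the computational indistinguishability requirement of an EFI once its output length is $m=O(\log\secp)$, by exhibiting an explicit QPT distinguisher. The crucial observation is that $m$-qubit states live in a Hilbert space of dimension $d=2^m$, which is $\poly(\secp)$ when $m=O(\log\secp)$. In this regime the quantum-information tasks that are generically exponential in the number of qubits --- tomography, diagonalization, and implementation of an arbitrary measurement --- all become polynomial time. Recall that an EFI consists of a QPT generation algorithm producing states $\rho_0,\rho_1$ with $\lnorm{\rho_0-\rho_1}{1}\ge\delta$ for some non-negligible $\delta$, yet indistinguishable to every QPT distinguisher; the goal is to turn the former into a violation of the latter. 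The distinguisher will simply implement the (approximate) Helstrom measurement, which is optimal for single-copy distinguishing and is efficient precisely because $d=\poly(\secp)$; note that this already handles the harder single-copy challenge setting, and the multi-copy setting is only easier.

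First I would have the distinguisher reconstruct classical descriptions of both states. Since the generation algorithm is public and efficient, the distinguisher can itself prepare polynomially many copies of $\rho_0$ and of $\rho_1$ and run quantum state tomography (\Cref{lem:tomography}) to obtain classical estimates $\widetilde{\rho}_0,\widetilde{\rho}_1$ with $\lnorm{\widetilde{\rho}_b-\rho_b}{1}\le\gamma$ for a small inverse-polynomial $\gamma$ fixed below; because $d=\poly(\secp)$, this costs only $\poly(\secp,1/\gamma)=\poly(\secp)$ copies and time. Next, the distinguisher forms the traceless Hermitian $d\times d$ matrix $\widetilde{\rho}_0-\widetilde{\rho}_1$ and computes, by classical linear algebra on a polynomial-size matrix, the projector $\widetilde{\Pi}$ onto its positive eigenspace. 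Finally, on receiving the single challenge copy of $\rho_b$, it applies the two-outcome measurement $\{\widetilde{\Pi},I-\widetilde{\Pi}\}$ and guesses $b=0$ on the first outcome and $b=1$ on the second.

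For the advantage, the distinguisher's bias equals $\Tr[\widetilde{\Pi}(\rho_0-\rho_1)]$. Since $\widetilde{\Pi}$ is the Helstrom projector of the traceless operator $\widetilde{\rho}_0-\widetilde{\rho}_1$, we have $\Tr[\widetilde{\Pi}(\widetilde{\rho}_0-\widetilde{\rho}_1)]=\tfrac12\lnorm{\widetilde{\rho}_0-\widetilde{\rho}_1}{1}\ge\tfrac12\delta-\gamma$ by the triangle inequality, and replacing the estimates by the true states changes the trace by at most $\lnorm{\widetilde{\Pi}}{\infty}\cdot\lnorm{(\rho_0-\rho_1)-(\widetilde{\rho}_0-\widetilde{\rho}_1)}{1}\le 2\gamma$. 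Hence the bias is at least $\tfrac12\delta-3\gamma$; taking $\gamma=\delta/12$ makes it at least $\delta/4$, which is non-negligible since $\delta$ is. This contradicts the computational indistinguishability of the EFI and proves the theorem.

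The step I expect to be the main obstacle is making the ``everything is polynomial'' claims fully rigorous and cheap. Concretely: that the measurement $\{\widetilde{\Pi},I-\widetilde{\Pi}\}$ on $m=O(\log\secp)$ qubits compiles into a $\poly(\secp)$-size circuit (diagonalize $\widetilde{\Pi}$ by a unitary $U$, apply $U$, measure in the computational basis, and classify; a general $m$-qubit unitary decomposes into $O(4^m)=\poly(\secp)$ gates), and that the tomography error $\gamma$ can be pushed below any fixed inverse polynomial with only $\poly(\secp)$ copies when $d=\poly(\secp)$. The remaining error propagation, from the estimated Helstrom projector back to the true states, is routine given the bound $\lnorm{\widetilde{\Pi}}{\infty}\le 1$ used above.
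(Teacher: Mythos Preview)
Your proposal is correct and follows essentially the same approach as the paper: estimate $\rho_0,\rho_1$ via tomography (polynomial cost since $d=2^m=\poly(\secp)$), form the projector onto the positive eigenspace of the estimated difference, apply it as the Helstrom measurement, and propagate the tomography error to lower-bound the distinguishing advantage. One small caveat: the tomography outputs $\widetilde{\rho}_b$ are only guaranteed to be Hermitian (cf.\ \Cref{lem:tomography}), so $\widetilde{\rho}_0-\widetilde{\rho}_1$ need not be exactly traceless; the paper handles this by bounding $|\Tr(\widetilde{\rho}_0-\widetilde{\rho}_1)|\le 2\gamma$ via the tomography guarantee, which costs an extra additive $O(\gamma)$ in your bias estimate but leaves the conclusion unchanged.
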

We remark that the proof of this theorem is straightforward in the non-uniform setting: Since the two states that form an EFI are statistically far, there exists a measurement that distinguishes them with advantage $1/\poly(\secp)$. If they are $O(\log \secp)$-qubit states, the optimal distinguishing measurement can be described as a $\poly(\secp)$-bit classical string. Thus, a non-uniform adversary that takes the description of the measurement as advice can break the computational indistinguishability. To extend it to the uniform setting, our idea is to use tomography.\footnote{We thank Fermi Ma for suggesting the use of tomography.} 
Roughly, we show that a uniform adversary can use tomography on the two states to efficiently approximate the optimal distinguishing measurement. 

We show a matching upper bound as follows.
\begin{theorem}\label{thm:intro_upper_EFI}
If exponentially secure PRGs exist, then
$m(\secp)$-qubit 
EFIs exist for any QPT computable function $m(\secp)=\omega(\log\secp)$.
\end{theorem}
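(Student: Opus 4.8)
The plan is to build the EFI pair directly from the exponentially secure PRG, encoding its outputs as classical (computational-basis) states. Concretely, given a target output length $m=m(\secp)=\omega(\log\secp)$, I would set the seed length $n=n(\secp):=m(\secp)-1$, which is still $\omega(\log\secp)$, and take $G_n\colon\bit^n\to\bit^{m}$ to be the exponentially secure PRG with one-bit stretch (obtained, if necessary, by truncating the output of a given expanding PRG, which preserves exponential security). I would then define the two ensembles
\begin{align*}
\rho_0 &:= \frac{1}{2^{m}}\sum_{y\in\bit^{m}}\ket{y}\bra{y}, &
\rho_1 &:= \frac{1}{2^{n}}\sum_{s\in\bit^{n}}\ket{G_n(s)}\bra{G_n(s)}.
\end{align*}
Both are $m$-qubit states generatable in $\poly(\secp)$ time (sample a uniform string, or a uniform seed and evaluate $G_n$, then write it in the computational basis), which gives the efficient-generation property.

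For statistical farness, observe that $\rho_1$ is supported on the image of $G_n$, a set of size at most $2^{n}=2^{m-1}$, whereas $\rho_0$ is uniform over all $2^{m}$ basis states. A direct trace-distance computation then gives $\mathrm{TD}(\rho_0,\rho_1)\ge 1-2^{n-m}=1/2$, which is constant and hence non-negligible, as the definition requires.

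For computational indistinguishability I would reduce to PRG security. A QPT distinguisher that tells $\rho_0^{\otimes q}$ from $\rho_1^{\otimes q}$ (for $q=\poly(\secp)$ copies) receives exactly $q$ independent uniform strings versus $q$ independent PRG outputs, so a standard hybrid argument over the $q$ copies shows that the $q$-copy distinguishing advantage is at most $q$ times the single-copy PRG distinguishing advantage. The remaining, and main, point is the security-parameter mismatch: the seed length $n$ may be far smaller than $\secp$, so ordinary $\negl(n)$ security need not translate into $\negl(\secp)$ security, and a $\poly(\secp)$-time adversary is not a priori a ``small'' adversary against a length-$n$ seed. This is exactly where exponential security is used: since $n=\omega(\log\secp)$ we have $\poly(\secp)=2^{O(\log\secp)}=2^{o(n)}$, so every QPT adversary lies within the PRG's $2^{\Omega(n)}$-time security regime and achieves advantage $2^{-\Omega(n)}$; combining with the hybrid loss gives total advantage $q\cdot 2^{-\Omega(n)}=\poly(\secp)\cdot 2^{-\omega(\log\secp)}=\negl(\secp)$.

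I expect the only genuinely delicate step to be this last quantitative bookkeeping around the exponential-security regime, namely ensuring simultaneously that (i) $\poly(\secp)$-time adversaries fall below the $2^{\Omega(n)}$ time bound and (ii) the resulting $2^{-\Omega(n)}$ advantage is truly negligible in $\secp$. Both conclusions hinge on $n=\omega(\log\secp)$ and would fail for a merely polynomially secure PRG: for instance a PRG with $2^{-\sqrt{n}}$ security (negligible in $n$) would leave an inverse-polynomial-in-$\secp$ advantage of $2^{-\log\secp}=1/\secp$ when $n=\log^2\secp$. Everything else, i.e.\ efficient generation, the trace-distance estimate, and the copy hybrid, is routine.
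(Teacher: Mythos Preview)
Your proposal is correct and follows essentially the same construction as the paper: encode PRG outputs versus uniform strings as computational-basis states and use $n=\omega(\log\secp)$ so that the $2^{-\Omega(n)}$ PRG advantage is $\negl(\secp)$. The only cosmetic differences are that the paper uses a length-doubling PRG with $n\approx m/2$ (and a fidelity bound to get trace distance $1-\negl(\secp)$) whereas you use one-bit stretch with $n=m-1$ and an image-size count to get trace distance $\ge 1/2$; also, your multi-copy hybrid is unnecessary since the paper's EFI definition is single-copy.
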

The proof is almost obvious. By using an exponentially secure PRG $G:\bit^{m/2}\rightarrow \bit^m$, let $\rho_0:=\frac{1}{2^{m/2}}\sum_{x\in \bit^{m/2}}\ket{G(x)}\bra{G(x)}$ 
and $\rho_1:=\frac{1}{2^m}\sum_{y\in \bit^{m}}\ket{y}\bra{y}$. Then $\rho_0$ and $\rho_1$ are computationally indistinguishable by exponential security of the PRG and $m(\secp)=\omega(\log\secp)$, but they are statistically far since the image size of $G$ is at most $2^{m/2}\ll2^m$.  

\paragraph{\bf Quantum bit commitments.}
It is known that EFIs exist if and only if quantum bit commitments exist~\cite{AC:Yan22,ITCS:BCQ23}.  
However, the conversion between them does not preserve the number of qubits, and thus the lower and upper bounds for the length of EFIs do not necessarily imply similar bounds for quantum bit commitments. To complement it, we also show similar lower and upper bounds for the commitment length (i.e., the number of qubits of the commitment register) of quantum bit commitments. 
That is, we show a $O(\log \secp)$ lower bound and $\omega(\log \secp)$ upper bound assuming the existence of exponentially secure PRGs.  
See \Cref{sec:commitment} for the details.

\subsection{Discussion}\label{sec:discussion}
Our results suggest that the upper and lower bounds for the output length of ($\epsilon$-)OWSGs behave quite differently from those of OWFs.
\begin{itemize}
\item 
It is easy to see that weak OWFs with $O(\log \secp)$-bit outputs do not exist.  
On the other hand, assuming the existence of exponentially secure OWFs with linear expansion, we can show that there exist (strong) OWFs with $m$-bit outputs for any $m=\omega(\log \secp)$ by a simple padding argument. 
This means that $O(\log \secp)$ is the best lower bound for both OWFs and weak OWFs. 
On the other hand, our results suggest that the lower bounds for OWSGs and weak OWSGs are different. 
In particular, we show that there do not exist OWSGs with $O(\log \secp)$-qubit outputs (\Cref{thm:intro_impossibility_log_OWSG}), but there exist weak OWSGs with slightly super-constant-qubit outputs (\Cref{thm:intro_possibility_superconstant_weak_OWSG}), which circumvents the length barrier for standard OWSGs (\Cref{thm:intro_impossibility_log_OWSG}). \mor{OWFs?}\takashi{I think OWSGs is correct} \minki{In that case, I guess we need to explain which barrier it circumvents. You probably mean ``circumvents the length barrier for the standard OWSGs.''?}\takashi{clarified it}
\item 
For constructing OWFs with the optimal output length, i.e., those with slightly super-logarithmic-bit outputs, we need to assume exponential hardness of $\mathbf{NP}$ due to an obvious reason. On the other hand, we can construct OWSGs with slightly super-logarithmic-bit outputs only assuming polynomial hardness of OWFs~\cite{C:BraShm20,CGGHLP23}. 
Since we show that their construction is optimal in terms of output length (\Cref{thm:intro_impossibility_log_OWSG}), this means that polynomial hardness is sufficient for achieving the optimal output length for OWSGs. 
\end{itemize}

Our upper and lower bounds for the output length of OWSGs also imply those for
several other primitives.
\begin{enumerate}
    \item 
    {\bf Quantum-public-key digital signatures:}
    OWSGs are equivalent to (bounded poly-time secure) digital signatures with quantum public keys~\cite{TQC:MorYam24}.
    Hence we have the following consequences.
    First, from \cref{thm:intro_impossibility_log_OWSG},
    the length of the quantum public key of digital signatures cannot be $O(\log\secp)$.
    Second, 
    from \cref{thm:OWSG_with_slightly_poly},
    assuming OWFs, there exist one-time secure digital signatures with $m$-qubit quantum public keys for any $m=\omega(\log(\secp))$. This can be regarded as a quantum advantage because it is unlikely that we can construct classical one-time secure
    digital signatures with similar public key length from (polynomially secure) OWFs. 
    \item 
    {\bf Private-key quantum money:}
    Private-key quantum money schemes with pure banknotes imply OWSGs~\cite{TQC:MorYam24}.
    Hence, from \cref{thm:intro_impossibility_log_OWSG},
    the length of pure quantum banknotes cannot be $O(\log\secp)$.
\end{enumerate}

\paragraph{\bf Open problems.} 
Though we give matching upper and lower bounds for the output length of OWSGs and weak OWSGs, there are gaps for 
$\secp^{\Theta(1)}$-OWSGs and 
$\Theta(1)$-OWSGs.
For $\secp^{c}$-OWSGs with $c=\Theta(1)$, our lower bound is $c \log \secp -2$ (\Cref{thm:intro_impossibility_inverse-poly_OWSG}) whereas our upper bound is $(c+1)\log \secp+O(1)$ (\Cref{thm:intro_possibility_inverse-poly_OWSG}).  
For $\Theta(1)$-OWSGs, our lower bound is $(\log\log \secp)/2+O(1)$ (\Cref{thm:intro_impossibility_loglog_constant_OWSG}) whereas upper bound is $O(\log\log \secp)$ (\Cref{thm:intro_possibility_Ologlog_constant_OWSG}). 
It is an interesting open problem to fill the gap. 

\if0 
In this paper, we focus on $\epsilon$-OWSGs for $\epsilon\in \{\negl(\secp),\Theta(1),1-1/\poly(\secp)\}$, but another interesting parameter regime is $\epsilon=1/\poly(\secp)$ for some fixed polynomial $\poly$. 
As for a lower bound, by using Welch bound similarly to the proof of \Cref{thm:intro_impossibility_log_OWSG}, we would be able to show that for any polynomial $\poly$, there is a constant $c$ such that there do not exist  $1/\poly(\secp)$-OWSGs with $c\log \secp$-qubit outputs. 
This still leaves a possibility that for any  polynomial $\poly$, there exist $1/\poly(\secp)$-OWSGs with $O(\log \secp)$-qubit outputs, which circumvents the barrier for standard OWSGs.   
On the other hand, 
if we apply our construction based on quantum fingerprinting similarly to \Cref{thm:intro_possibility_Ologlog_constant_OWSG}, we obtain $1/\poly(\secp)$-OWSGs with $O((\log \secp)(\log\log \secp))$-qubit outputs assuming subexponentially secure OWFs. However, this is meaningless since existing works \cite{C:BraShm20,CGGHLP23} already  imply them based on a weaker assumption of standard OWFs.  \takashi{Added the following sentences}
In the quantum random oracle model (QROM)~\cite{AC:BDFLSZ11}, there is quantum fingerprinting with better output length (see \Cref{rem:QROM}), which leads to 
$1/\poly(\secp)$-OWSGs with $O(\log \secp)$-qubit outputs.  
We leave it open to construct $1/\poly(\secp)$-OWSGs with similar parameters in the plain model (i.e., without relying on random oracles). 
\takashi{I added this paragraph. Indeed, I feel $\epsilon=1/\poly(\secp)$ is probably more interesting parameter regime than $\epsilon=\Theta(1)$, but unfortunately we cannot say anything meaningful. If we could improve the quantum fingerprinting, then that would directly resolve this problem.}
\minki{I agreed that $1/\poly$ security is interesting; I didn't fully follow the discussion about the fingerprinting. Can we instantiate their non-constructive one in, say, random oracle model?}
\takashi{I'm not sure about their construction, but I think we can use random linear code to achieve the better bound in the CRS model, which is weaker model (thus stronger result) than the random oracle model.} \minki{In that case, I guess it already shows the possibility in some classical model?}
\fi


Another interesting question is to construct quantum OWFs using short output OWSGs.
Here, a quantum OWF is a quantum (pseudo)deterministic polynomial-time algorithm whose inputs and outputs are classical bit strings and the standard one-wayness is satisfied.
The construction of quantum PRGs from short output PRSGs in~\cite{cryptoeprint:2023/904} would not be applied to the case of OWSGs,
because both the proofs for (pseudodeterministic) correctness and security exploit (pseudo)randomness of output states of PRSGs. Answering this question provides a better understanding of quantum cryptographic primitives with short output lengths, and simplifies the proofs of~\Cref{thm:intro_impossibility_loglog_constant_OWSG,thm:intro_impossibility_constant_weak_OWSG}.

\subsection{Organization}
We prepare definitions and technical tools in \Cref{sec:preliminaries}. 
We prove our results on lower bounds for the output length of ($\epsilon$-)OWSGs (\Cref{thm:intro_impossibility_log_OWSG,thm:intro_impossibility_inverse-poly_OWSG,thm:intro_impossibility_constant_weak_OWSG,thm:intro_impossibility_loglog_constant_OWSG}) in \Cref{sec:lower_bound}.
We prove our results on upper bounds for the output length of ($\epsilon$-)OWSGs (\Cref{thm:intro_possibility_Ologlog_constant_OWSG,thm:intro_possibility_inverse-poly_OWSG,thm:intro_possibility_superconstant_weak_OWSG}) in \Cref{sec:upper_bound}.
We show our results on lower and upper bounds for the length of EFIs (\Cref{thm:intro_lower_EFI,thm:intro_upper_EFI}) in \Cref{sec:EFI}. 
\section{Preliminaries}\label{sec:preliminaries}
\paragraph{\bf Notations.} 
We use standard notations of quantum computing and cryptography.
We use $\secp$ as the security parameter.
$[n]$ means the set $\{1,2,...,n\}$.
For a finite set $S$, $x\gets S$ means that an element $x$ is sampled uniformly at random from the set $S$.
$\negl$ is a negligible function, and $\poly$ is a polynomial.
PPT stands for (classical) probabilistic polynomial-time and QPT stands for quantum polynomial-time. 
We say that a function $m$ defined on $\mathbb{N}$ is QPT-computable if there is a QPT algorithm that computes $m(\secp)$ on input $1^\secp$ with probability $1$. We stress that the running time of the algorithm can be polynomial in $\secp$ rather than in $\log \secp$. 
For an algorithm $A$, $y\gets A(x)$ means that the algorithm $A$ outputs $y$ on input $x$.
For quantum states $\rho,\sigma$, $\|\rho-\sigma\|_{tr}$ means their trace distance. 
For any pure states $\ket{\psi},\ket{\phi}$ it is known that $\||\psi\rangle\langle\psi|-|\phi\rangle\langle\phi|\|_{tr}=\sqrt{1-|\langle\psi |\phi \rangle|^2}$. 

\subsection{Basic Cryptographic Primitives}
\begin{definition}[OWFs]
\label{def:OWFs}
A function $f:\bit^*\to\bit^*$ is a (quantum-secure) OWF if
it is computable in classical deterministic polynomial-time, and
for any QPT adversary $\cA$, 
\begin{equation}
\Pr[f(x')=f(x):
x\gets\bit^\secp,
x'\gets\cA(1^\secp,f(x))
]
\le\negl(\secp).
\end{equation} 

We say that $f$ is a subexponentially secure OWF if there is a constant $0<c < 1$ such that
for any QPT adversary $\cA$, 
\begin{equation}
\Pr[f(x')=f(x):
x\gets\bit^\secp,
x'\gets\cA(1^\secp,f(x))
]
\le 2^{-\secp^{c}}
\end{equation}
for all sufficiently large $\secp$. 

We say that $f$ is an exponentially secure OWF if there is a constant $0<c < 1$ such that
for any QPT adversary $\cA$, 
\begin{equation}
\Pr[f(x')=f(x):
x\gets\bit^\secp,
x'\gets\cA(1^\secp,f(x))
]
\le 2^{-c \secp}
\end{equation}
for all sufficiently large $\secp$.  
\end{definition}
\if0
\begin{remark}
Through out the paper, we use $n$ and $\ell$ to be the parameters for the input and output length of a OWF. 
Though a OWF is a function defined over $\bit^*$ according to the above definition, we often write it as  $f:\bit^n\rightarrow \bit^\ell$. This means that we consider 
a OWF $f(x):=f_{|x|}(x)$ where $f_n:\bit^n \rightarrow \bit^{\ell(n)}$ for each $n\in \mathbb{N}$. 
\end{remark}
\fi
\begin{remark}
We require OWFs to be computable in classical deterministic polynomial-time following the standard definition. However, all of our positive results still follow even if we relax the requirement to only require them to be computable in QPT. 
\end{remark}

\begin{definition}[PRGs] \takashi{I added this definition}
A function $G:\bit^*\to\bit^*$ is a (quantum-secure) PRG if
it is computable in classical deterministic polynomial-time, 
it maps $\secp$-bit inputs to $\ell(\secp)$-bit outputs for $\ell(\secp)\ge \secp+1$ and
for any QPT adversary $\cA$, 
\begin{align}
\left|\Pr_{x\gets\bit^\secp}[1\gets\cA(1^\secp,G(x))]    
-\Pr_{y\gets\bit^{\ell}}[1\gets\cA(1^\secp,y)]    \right|\le \negl(\secp). 
\end{align}
We say that a PRG is exponentially secure if there exists a constant $0<c<1$ such that for any QPT adversary $\cA$, 
\begin{align}
\left|\Pr_{x\gets\bit^\secp}[1\gets\cA(1^\secp,G(x))]    
-\Pr_{y\gets\bit^{\ell}}[1\gets\cA(1^\secp,y)]    \right|\le 2^{-c\secp}.
\end{align}
\end{definition}

\begin{definition}[One-way state generators (OWSGs)~
\cite{TQC:MorYam24}]
\label{def:OWSG}
For a function $\epsilon:\mathbb{N}\rightarrow [0,1]$, 
an $\epsilon$-one-way state generator ($\epsilon$-OWSG) is a tuple of QPT algorithms
$(\KeyGen,\StateGen,\Ver)$ that work as follows: 
\begin{itemize}
\item
$\KeyGen(1^\secp)\to k:$
It is a QPT algorithm that, on input the security parameter $\secp$, outputs a classical key $k$. 
    \item 
    $\StateGen(k)\to \phi_k:$ It is a QPT algorithm that, on input $k$, outputs 
    a quantum state $\phi_k$. 
    \item
    $\Ver(k',\phi_k)\to\top/\bot:$ It is a QPT algorithm that, on input $\phi_k$ and a bit string $k'$, outputs $\top$ or $\bot$. 
\end{itemize}

We require the following correctness and $\epsilon$-one-wayness.

\paragraph{\bf Correctness:}
\begin{eqnarray*}
\Pr[\top\gets\Ver(k,\phi_k):k\gets\KeyGen(1^\secp),\phi_k\gets\StateGen(k)]\ge1-\negl(\secp).
\end{eqnarray*}

\paragraph{\bf $\epsilon$-one-wayness:}
For any QPT adversary $\cA$ and any polynomial $t$\footnote{$\StateGen$ is actually run $t$ times to generate $t$ copies of $\phi_k$, but for simplicity, we just write
$\phi_k\gets\StateGen(k)$ only once. This simplification will often be used in this paper.},
\begin{eqnarray*}
\Pr[\top\gets\Ver(k',\phi_k):k\gets\KeyGen(1^\secp),\phi_k\gets\StateGen(k),k'\gets\cA(1^\secp,\phi_k^{\otimes t})]\le\epsilon (\secp)
\end{eqnarray*}
for all sufficiently large $\secp$. 

We say that it is a OWSG if it is a $\secp^{-c}$-OWSG for all $c\in \mathbb{N}$.

We say that it is a weak OWSG if there is $c\in \mathbb{N}$ such that it is a $(1-\secp^{-c})$-OWSG. 

We say that it has $m(\secp)$-qubit outputs if $\phi_k$ is an $m(\secp)$-qubit state for all $k$ in the support of $\KeyGen(1^\secp)$. 

We say that it has pure output states if $\phi_k$ is a pure state for all $k$ in the support of $\KeyGen(1^\secp)$. 
\end{definition}

\begin{definition}[Pseudo-random quantum states generators (PRSGs)~\cite{C:JiLiuSon18}]
    For a function $\epsilon:\mathbb N \to [0,1]$,
    An $\epsilon$-pseudorandom quantum states generator ($\epsilon$-PRSG) is a tuple of QPT algorithms $(\KeyGen,\StateGen)$ that work as follows:
    \begin{itemize}
        \item $\KeyGen(1^\lambda)\to k:$ It is a QPT algorithm that, on input the security parameter $\lambda$, 
        outputs a classical key $k$.
        \item $\StateGen(k)\to \ket{\phi_k}:$ It is a QPT algorithm that, on input $k$, outputs an $m$-qubit quantum state $\ket{\phi_k}$.
    \end{itemize}
    We require the following pseudorandomness: For any polynomial $t$ and any QPT algorithm $\cA$,
    \[
    \left| 
        \Pr_{k \gets \KeyGen(1^\lambda)} \left[ 
            1 \gets \cA\left(1^\lambda,
                \ket{\phi_k}^{\otimes t}
            \right)
        \right]
        -
        \Pr_{\ket{\psi}\gets \mu_m} \left[ 
            1 \gets \cA\left(1^\lambda,
                \ket{\psi}^{\otimes t}
            \right)
        \right]
    \right|
    \le \epsilon(\lambda)
    \]
    for all sufficiently large $\secp$, where $\mu_m$ is the Haar measure on $m$-qubit quantum states.
    We say that it is a PRSG if it is a $\lambda^{-c}$-PRSG for all $c\in \mathbb N$.
\end{definition}

\begin{definition}[EFIs~\cite{ITCS:BCQ23}]
\label{def:EFI} 
\takashi{added this definition}
An EFI is an algorithm $\StateGen(b,1^\secp)\to\rho_b$ that, on input $b\in\bit$ and the security parameter $\lambda$,
outputs a quantum state $\rho_b$ such that all of the following three conditions are satisfied.
\begin{itemize}
\item
It is a uniform QPT algorithm.
\item
$\rho_0$ and $\rho_1$ are computationally indistinguishable. In other words,
for any QPT adversary $\cA$,
$
|\Pr[1\gets\cA(1^\secp,\rho_0)]
-\Pr[1\gets\cA(1^\secp,\rho_1)]|\le\negl(\secp).
$
\item
$\rho_0$ and $\rho_1$ are statistically distinguishable,
i.e.,
$
\|\rho_0-\rho_1\|_{tr}\ge\frac{1}{\poly(\secp)}.
$
\end{itemize}
We call it an $\ell(\secp)$-qubit EFI if $\rho_0$ and $\rho_1$ are  $\ell(\secp)$-qubit states. 
\end{definition}
\begin{remark}
In the original definition of EFIs in \cite{ITCS:BCQ23}, computational indistinguishability is required to hold against non-uniform QPT adversaries whereas we only require computational indistinguishability against uniform QPT adversaries. 
This only makes our result on the lower bound for the output length of EFIs stronger.  
\end{remark}

\subsection{Useful Lemmata}
\begin{lemma}[Welch Bound~\cite{Welch,BelovsS08}]\label{lem:Welch}
Let $n,d,k$ be positive integers. 
For any family $\{\ket{\psi_i}\}_{i\in [n]}$ of (not necessarily unit) vectors in $\mathbb{C}^d$, it holds that 
\begin{align*}
{d+k-1 \choose k}\sum_{i,j \in [n]}|\langle \psi_i | \psi_j\rangle |^{2k}\ge  \left(\sum_{i\in [n]}\langle \psi_i | \psi_i\rangle^k \right)^2.
\end{align*}
\end{lemma}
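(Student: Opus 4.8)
The plan is to pass to the $k$-fold symmetric tensor power, where the binomial factor $\binom{d+k-1}{k}$ appears naturally as the dimension of the relevant subspace. For each $i\in[n]$, consider the vector $\ket{\psi_i}^{\otimes k}\in(\mathbb{C}^d)^{\otimes k}$. The key observations are that every such vector lies in the symmetric subspace $\mathrm{Sym}^k(\mathbb{C}^d)$, whose dimension equals exactly $\binom{d+k-1}{k}$, and that inner products tensorize: $\langle\psi_i^{\otimes k}\mid\psi_j^{\otimes k}\rangle=\langle\psi_i\mid\psi_j\rangle^k$.

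Next I would introduce the positive semidefinite operator $T:=\sum_{i\in[n]}\ket{\psi_i^{\otimes k}}\bra{\psi_i^{\otimes k}}$, whose support is contained in $\mathrm{Sym}^k(\mathbb{C}^d)$. Computing its first two trace moments directly from the tensorization identity gives
\begin{align*}
\Tr(T) &= \sum_{i\in[n]}\langle\psi_i\mid\psi_i\rangle^k, \\
\Tr(T^2) &= \sum_{i,j\in[n]}|\langle\psi_i\mid\psi_j\rangle|^{2k}.
\end{align*}

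Finally I would apply Cauchy--Schwarz to the eigenvalues of $T$. Since $T$ is positive semidefinite with support in $\mathrm{Sym}^k(\mathbb{C}^d)$, its rank $r$ satisfies $r\le\binom{d+k-1}{k}$; writing its nonzero eigenvalues as $\lambda_1,\dots,\lambda_r$ we obtain $(\Tr T)^2=\left(\sum_{l}\lambda_l\right)^2\le r\sum_{l}\lambda_l^2=r\cdot\Tr(T^2)\le\binom{d+k-1}{k}\Tr(T^2)$, and substituting the two trace expressions yields the claimed inequality. The only real content is the dimension count: the bound $r\le\binom{d+k-1}{k}$ is precisely what distinguishes the symmetric subspace from the full tensor space of dimension $d^k$, so I expect verifying that each $\ket{\psi_i}^{\otimes k}$ lands in $\mathrm{Sym}^k(\mathbb{C}^d)$, together with recalling the standard dimension formula, to be the one step worth stating carefully; the rest is a short computation.
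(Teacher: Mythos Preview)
Your proof is correct and is the standard argument for the Welch bound via symmetric tensor powers and the Cauchy--Schwarz inequality on the eigenvalues of the Gram-type operator $T$. Note, however, that the paper does not give its own proof of this lemma: it simply cites it as a known result from \cite{Welch,BelovsS08} and only proves the downstream \Cref{cor:Welch}. So there is no in-paper proof to compare against; your argument supplies exactly what the references would.
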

In particular, we use the following corollary. 
\begin{corollary}\label{cor:Welch}
Let $n,m$ be positive integers. 
For any family $\{\ket{\phi_i}\}_{i\in [n]}$ of (normalized) $m$-qubit pure states and any distribution $\mathcal{D}$ over $[n]$, it holds that 
\begin{align*}
\mathbb{E}_{i\gets \mathcal{D},j\gets \mathcal{D}}|\langle \phi_i | \phi_j\rangle |^2\ge 2^{-m}.
\end{align*}
\end{corollary}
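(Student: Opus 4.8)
The plan is to derive \Cref{cor:Welch} from \Cref{lem:Welch} by specializing to $k=1$ and handling the arbitrary distribution $\mathcal{D}$ via a weighted-vector trick. First I would set $k=1$ and $d=2^m$ in the Welch bound, since $m$-qubit pure states live in $\mathbb{C}^d$ with $d=2^m$. With $k=1$ the binomial coefficient becomes $\binom{d}{1}=d=2^m$, and the inequality reads
\begin{align*}
2^m\sum_{i,j\in[n]}|\langle\psi_i|\psi_j\rangle|^2\ge\left(\sum_{i\in[n]}\langle\psi_i|\psi_i\rangle\right)^2.
\end{align*}

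The remaining issue is that \Cref{cor:Welch} involves an \emph{expectation} over an arbitrary distribution $\mathcal{D}$, not the plain unweighted double sum that the Welch bound produces. My idea is to absorb the distribution into the norms of the vectors: given the normalized states $\ket{\phi_i}$ and weights $p_i:=\Pr_{\mathcal{D}}[i]$, I would apply \Cref{lem:Welch} to the rescaled (non-unit) vectors $\ket{\psi_i}:=\sqrt{p_i}\,\ket{\phi_i}$. This is exactly why the lemma is stated for not-necessarily-unit vectors. Under this substitution, $\langle\psi_i|\psi_j\rangle=\sqrt{p_ip_j}\,\langle\phi_i|\phi_j\rangle$, so the left-hand double sum becomes $\sum_{i,j}p_ip_j|\langle\phi_i|\phi_j\rangle|^2=\mathbb{E}_{i,j\gets\mathcal{D}}|\langle\phi_i|\phi_j\rangle|^2$, while $\langle\psi_i|\psi_i\rangle=p_i$ (using $\langle\phi_i|\phi_i\rangle=1$), so the right-hand side is $\left(\sum_i p_i\right)^2=1$.

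Putting these together, the specialized bound gives $2^m\cdot\mathbb{E}_{i,j\gets\mathcal{D}}|\langle\phi_i|\phi_j\rangle|^2\ge 1$, and dividing by $2^m$ yields the claimed $\mathbb{E}_{i\gets\mathcal{D},j\gets\mathcal{D}}|\langle\phi_i|\phi_j\rangle|^2\ge 2^{-m}$. I do not anticipate a genuine obstacle here; the proof is a short substitution. The only point requiring a little care is confirming that the two draws $i\gets\mathcal{D}$ and $j\gets\mathcal{D}$ in the expectation are independent (so that the joint weight factors as $p_ip_j$, matching the product $\langle\psi_i|\psi_i\rangle$-style bookkeeping) and that normalization $\langle\phi_i|\phi_i\rangle=1$ is used consistently so that the right-hand side collapses to exactly $1$ rather than some other constant. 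With those checks in place the corollary follows immediately.
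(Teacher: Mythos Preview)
Your proposal is correct and matches the paper's proof essentially line for line: the paper also sets $k=1$, $d=2^m$, applies \Cref{lem:Welch} to the rescaled vectors $\ket{\psi_i}:=\sqrt{p_i}\ket{\phi_i}$, and then reads off the inequality from $\sum_i p_i=1$ and $\langle\phi_i|\phi_i\rangle=1$.
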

\begin{proof}
For each $i\in [n]$, 
let $p_i:=\Pr[i\gets \mathcal{D}]$ and
$\ket{\psi_i}:=\sqrt{p_i}\ket{\phi_i}$.
Applying \Cref{lem:Welch} for $\{\ket{\psi_i}\}_{i\in [n]}$
  where $k:=1$ and
  $d:=2^m$, we obtain
  \begin{align*}
      2^m \sum_{i,j \in [n]}p_i p_j|\langle \phi_i | \phi_j\rangle |^2 \ge \left(\sum_{i\in [n]}p_i\langle \phi_i | \phi_i\rangle \right)^2.
  \end{align*}
By definition, we have 
\begin{align*}
   \sum_{i,j \in [n]}p_i p_j|\langle \phi_i | \phi_j\rangle |^2=  \mathbb{E}_{i\gets \mathcal{D},j\gets \mathcal{D}}|\langle \phi_i | \phi_j\rangle |^2.
\end{align*}
On the other hand, since $\ket{\phi_i}$ is a normalized state, 
we have 
\begin{align*}
    \left(\sum_{i\in [n]}p_i\langle \phi_i | \phi_i\rangle \right)^2= \left(\sum_{i\in [n]}p_i\right)^2=1^2=1.
\end{align*}
Combining the above, we obtain \Cref{cor:Welch}. 
\end{proof}



\begin{lemma}[Quantum state tomography~{\cite[Corollary 2.16]{cryptoeprint:2023/904}}]
\label{lem:tomography} 
    Let $\delta=\delta(\lambda) \in (0,1]$ and $d=d(\lambda) \in \mathbb{N}$.
    There exists a tomography procedure $\tomo$ that takes $t(\lambda):=144\lambda d^4 /\delta^2$ copies of a $d$-dimensional mixed quantum state $\rho$, and outputs a Hermitian 
    matrix $M \in \mathbb{C}^{d \times d}$ such that
    \[
        \Pr \left[
            \|M-\rho \|_{tr} \le \delta : M \gets \tomo(\rho^{\otimes t},\delta)
        \right]
        \ge 
        1 -\negl(\lambda).
    \]
    Moreover, the running time of $\tomo$ is polynomial in $1/\delta,d$ and $\lambda$.
\end{lemma}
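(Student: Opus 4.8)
The plan is to reduce full-state tomography to estimating the coordinates of $\rho$ in a fixed Hermitian basis and then to control the resulting trace-distance error. First I would fix an orthonormal (with respect to the Hilbert--Schmidt inner product) basis $\{P_a\}_{a\in[d^2]}$ of Hermitian $d\times d$ matrices --- for instance the generalized Gell-Mann matrices --- normalized so that $\Tr(P_a P_b)=\delta_{ab}$, which in particular forces $\|P_a\|_{op}\le \|P_a\|_F=1$. Writing $\rho=\sum_a c_a P_a$ with real coefficients $c_a=\Tr(P_a\rho)$, each $c_a$ is the expectation value of the bounded observable $P_a$, so it can be estimated by measuring $P_a$ on a batch of copies of $\rho$ and averaging the (eigenvalue) outcomes, which lie in $[-1,1]$.

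Next I would run $n$ such measurements per observable and form the empirical estimate $\hat c_a$; by Hoeffding's inequality $\Pr[|\hat c_a-c_a|>\gamma]\le 2\exp(-n\gamma^2/2)$. Setting $M:=\sum_a \hat c_a P_a$ (a Hermitian matrix) and using orthonormality gives $\|M-\rho\|_F^2=\sum_a(\hat c_a-c_a)^2$. A union bound over the $d^2$ observables makes every coordinate error at most $\gamma$ except with probability $2d^2\exp(-n\gamma^2/2)$, whence $\|M-\rho\|_F\le d\gamma$. Since $M-\rho$ is Hermitian of rank at most $d$, I convert to trace distance via $\|M-\rho\|_{tr}\le\sqrt d\,\|M-\rho\|_F\le d^{3/2}\gamma$, so choosing $\gamma=\delta/d^{3/2}$ yields $\|M-\rho\|_{tr}\le\delta$. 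To drive the failure probability below $\negl(\lambda)$ I take $n\gamma^{-2}$ of order $\lambda+\log d$, i.e. $n=\Theta((\lambda+\log d)d^3/\delta^2)$, and the total sample count $t=d^2 n$ is then polynomial in $\lambda,d,1/\delta$; the running time is polynomial since each measurement and the final linear combination are efficient in $d$. Finally, if a genuine density matrix is desired, projecting $M$ onto the set of unit-trace positive operators can only decrease the distance to $\rho$ up to a constant factor.

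The main obstacle is quantitative rather than conceptual: the naive per-observable estimator above spends its samples wastefully (it achieves only $t=\Theta(d^5/\delta^2)$ after the union bound), whereas the stated bound $t=144\lambda d^4/\delta^2$ is tighter. Matching it requires extracting information about all $d^2$ coordinates from \emph{every} copy rather than dedicating disjoint batches to each observable --- e.g. performing a single informationally-complete measurement (a $2$-design or SIC-POVM) and reconstructing $\rho$ by linear inversion, then bounding the variance of this unbiased estimator together with a matrix Bernstein/Chernoff tail to obtain the $1-\negl(\lambda)$ success probability with the claimed constant. The delicate points are therefore the variance analysis of the linear-inversion estimator and the lossy $\|\cdot\|_{tr}\le\sqrt d\,\|\cdot\|_F$ conversion, which together dictate the precise power of $d$ in the sample complexity.
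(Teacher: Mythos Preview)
The paper does not prove this lemma; it is imported from \cite[Corollary 2.16]{cryptoeprint:2023/904}, and the only additional content is a remark explaining why the output matrix is Hermitian. That remark sketches the underlying argument: expand $\rho=\frac{1}{d}\sum_i\Tr(P_i\rho)P_i$ in the Pauli basis, estimate each coefficient $\Tr(P_i\rho)$ by measurement, form $N=\sum_i\mu_iP_i$, and then amplify by selecting a good estimate among independent repetitions. Your plan is exactly this folklore coordinate-wise tomography plus Hoeffding and a union bound, so conceptually you are aligned with the cited proof.

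Your quantitative worry, however, is misdiagnosed. The extra factor of $d$ in your $t=\Theta(d^5/\delta^2)$ comes not from ``spending samples wastefully per observable'' but from your choice of basis normalization. With Hilbert--Schmidt--orthonormal Gell--Mann matrices ($\Tr(P_aP_b)=\delta_{ab}$) you get $\|M-\rho\|_F\le d\gamma$ from $d^2$ coordinates each off by~$\gamma$. If instead you use the (unnormalized) Pauli basis with $\Tr(P_iP_j)=d\,\delta_{ij}$ and $\rho=\frac{1}{d}\sum_i c_iP_i$, the same per-coordinate error $\gamma$ gives $\|M-\rho\|_F^2=\frac{1}{d}\sum_i(\hat c_i-c_i)^2\le d\gamma^2$, hence $\|M-\rho\|_{tr}\le\frac{\sqrt d}{2}\|M-\rho\|_F\le\frac{d\gamma}{2}$. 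Setting $\gamma=2\delta/d$ then requires $n=\Theta(\lambda d^2/\delta^2)$ per observable and $t=d^2n=\Theta(\lambda d^4/\delta^2)$ in total, matching the stated bound without any IC-POVM or matrix-Bernstein machinery. So your proposal is correct in structure; just swap the basis and the ``obstacle'' disappears.
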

\begin{remark}
    \cite[Corollary 2.16]{cryptoeprint:2023/904} does not mention that the output matrix is Hermitian. This fact can be checked by following the proof; the starting point is the folklore tomography (see~\cite[Section 1.5.3]{Lowe21}) that approximates $\rho = \frac 1d \sum \Tr(P_i \rho) P_i$ for Pauli matrices $P_i$ by $N = \sum \mu_i P_i$ for $\mu_i \approx  \frac 1d \Tr(P_i \rho)$. The matrix $N$ is Hermitian, and the above lemma, which is a trace-distance version of~\cite[Corollary 7.6]{TCC:Luowen}, only chooses a nice $M$ among many independent $N$'s, so that $M$ is still Hermitian.
\end{remark}


\begin{lemma}[$\gamma$-net of quantum states~{\cite[Section 5]{ABMB}}]
\label{lem:epsnet}
    Let $\gamma\in (0,1]$ and $d \in \mathbb{N}$. There exist some constant $C>0$ and a set $\enet(d,\gamma)$ of density matrices, called an \emph{$\gamma$-net} (of quantum states), of size $|\enet(d,\gamma)| \le (C/\gamma)^{d^2}$ such that the following holds: For any $d$-dimensional mixed quantum state $\rho$, there exists $N \in \enet(d,\gamma)$ such that $\|N-\rho\|_{tr} \le \gamma$. Moreover, there exists an explicit construction of $\enet(d,\gamma)$ in time $|\enet(d,\gamma)|\cdot \poly(d,1/\gamma)$. If we only consider the pure states, $|\enet(d,\gamma)| \le (C/\gamma)^{d}$ holds. 
\end{lemma}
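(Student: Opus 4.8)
The plan is to recast the statement as a bound on, and explicit construction of, a covering of the set of density matrices in the trace-distance metric. Write $\mathcal{H}_d$ for the real vector space of $d\times d$ Hermitian matrices, which has real dimension $d^2$, and let $\mathcal{D}_d\subseteq\mathcal{H}_d$ be the compact convex subset of density matrices. Since $\|\rho-\sigma\|_{tr}\le 1$ for all $\rho,\sigma\in\mathcal{D}_d$, the set $\mathcal{D}_d$ has trace-distance diameter at most $1$, and it lies in the affine subspace of trace-$1$ Hermitian matrices, of real dimension $d^2-1$. The task then reduces exactly to bounding the covering number of $\mathcal{D}_d$ at scale $\gamma$ and to exhibiting a covering of the claimed size efficiently.

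First I would establish the cardinality bound by a volumetric packing argument, which is what keeps the base constant independent of $d$. Let $\enet(d,\gamma)\subseteq\mathcal{D}_d$ be a maximal set of density matrices that are pairwise at trace distance strictly greater than $\gamma$. Maximality immediately forces $\enet(d,\gamma)$ to be a $\gamma$-net: any $\rho$ not within $\gamma$ of some element could be added, a contradiction. For the size, observe that the open trace-distance balls of radius $\gamma/2$ centered at the points of $\enet(d,\gamma)$ are pairwise disjoint (by $\gamma$-separation) and are all contained in the trace-distance ball of radius $1+\gamma/2$ about any fixed reference element. Comparing volumes inside the $(d^2-1)$-dimensional affine hull, where a trace-distance ball of radius $r$ has volume proportional to $r^{d^2-1}$, yields $|\enet(d,\gamma)|\le\bigl((1+\gamma/2)/(\gamma/2)\bigr)^{d^2-1}=(1+2/\gamma)^{d^2-1}\le(C/\gamma)^{d^2}$ for a suitable absolute constant $C$. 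For the pure-state refinement, I would repeat the argument on the image of the unit sphere of $\mathbb{C}^d$ under $|\psi\rangle\mapsto|\psi\rangle\langle\psi|$, a manifold of real dimension $O(d)$, which gives $|\enet(d,\gamma)|\le(C/\gamma)^{O(d)}$; matching the stated exponent $d$ exactly requires tracking the complex (projective) structure and is where I would follow the precise accounting of the cited construction.

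Next I would make the net explicit with the claimed running time. The natural route is to lay down a coordinate grid on the trace-$1$ Hermitian matrices at a resolution chosen so that every density matrix is within trace distance $\gamma/2$ of some grid point, then project each grid point onto $\mathcal{D}_d$ (computable in $\poly(d)$ time by eigendecomposition, clipping the eigenvalue vector onto the probability simplex, which is a nonexpansive metric projection and hence at most doubles the distance to any genuine density matrix), and finally greedily thin the projected points to a $\gamma$-separated subset. Each candidate costs $\poly(d,1/\gamma)$ time to process, and the thinned output has size at most $(C/\gamma)^{d^2}$ by the packing bound, so in the ideal accounting one charges the total work to $|\enet(d,\gamma)|\cdot\poly(d,1/\gamma)$.

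The main obstacle is reconciling these two requirements simultaneously. The dimension-free base constant $C$ in $(C/\gamma)^{d^2}$ is intrinsic to the trace-distance volume/packing bound, whereas a naive coordinate grid fine enough to cover $\mathcal{D}_d$ in trace distance must have resolution $\gamma/\poly(d)$, because the trace norm and the Euclidean norm on the $d^2$ coordinates differ by factors of order $\sqrt{d}$; this inflates the grid size to $(\poly(d)/\gamma)^{d^2}$ and spoils the clean base constant that the application genuinely needs (there $d=2^m$ with $2m=\log\log\secp+O(1)$, so $d^2=O(\log\secp)$ and any extra $d^{\Theta(d^2)}$ factor would push the net size above $\poly(\secp)$). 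The crux is therefore to enumerate or greedily build a genuine packing obeying the volumetric bound while charging the construction cost to $|\enet(d,\gamma)|\cdot\poly(d,1/\gamma)$ rather than to the much larger ambient grid; this delicate balance is precisely what the construction of~\cite{ABMB} handles, and it is the step on which I would spend the most care.
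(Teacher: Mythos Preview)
Your maximal-packing argument for the cardinality bound is correct and standard, but the paper takes a different route that also handles the explicit construction cleanly. The paper's sketch uses purification: every $d$-dimensional mixed state $\rho$ has a pure-state purification in $\mathbb{C}^d\otimes\mathbb{C}^d\cong\mathbb{C}^{d^2}$, and partial trace is a contraction for trace distance, so a $\gamma$-net for \emph{pure} states in dimension $d^2$ immediately yields a $\gamma$-net for mixed states in dimension $d$. Since a pure-state net in dimension $D$ has size $(C/\gamma)^D$, taking $D=d^2$ gives exactly $(C/\gamma)^{d^2}$, and the pure-state claim with exponent $d$ is the same statement with $D=d$.

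This matters precisely at the step where you yourself flag an obstacle. Your grid-then-project-then-thin construction does inflate the base constant by $\poly(d)$, and you defer to \cite{ABMB} hoping they resolve this directly in the mixed-state setting. They do not: the resolution is the purification step. Once reduced to pure states, one only needs an explicit $\gamma$-net on the unit sphere of $\mathbb{C}^{d^2}$, which \cite{ABMB} obtains by normalizing an explicit $\gamma$-net in the Boolean cube (Hamming metric) of the same real dimension---a construction with dimension-free base constant and per-element cost $\poly(d,1/\gamma)$. So the missing idea in your proposal is purification; it is exactly what sidesteps the obstacle you identified and delivers both the size bound and the explicit construction simultaneously.
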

\begin{proof}[Sketch of proof]
    By purification, it suffices to show that there exists an explicit $\gamma$-net for pure states in dimension $d^2$ of size $(C/\gamma)^{d^2}$. This is given in~\cite[Exercise 5.22]{ABMB} by normalizing an (explicit) $\gamma$-net in the Boolean cube for Hamming distance in the same dimension.
\end{proof}

\begin{lemma}[Quantum Fingerprinting~\cite{fingerprinting}] 
\label{lem:finger}
There is a $\poly(\ell)\log\eta^{-1}$-time quantum algorithm that takes $x\in\bit^\ell$ as input 
and
outputs an $O(\log(\ell)\log(\eta^{-1}))$-qubit state $\ket{h_x}$ such that
$|\langle h_x|h_{x'}\rangle|\le\eta$ for each $x\neq x'$. 
\end{lemma}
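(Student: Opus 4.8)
The plan is to reduce the construction to classical error-correcting codes. First I would fix an explicit, efficiently encodable binary error-correcting code $E:\bit^\ell \to \bit^N$ with linear blocklength $N=O(\ell)$ and constant relative minimum distance $\delta>0$ (for instance, a Justesen code, or a concatenation of a Reed--Solomon code with a suitable binary inner code). For such a code, any two distinct inputs $x \neq x'$ satisfy $E(x)_i = E(x')_i$ for at most $(1-\delta)N$ indices $i$. Given $x$, I define the elementary fingerprint state
\begin{align*}
\ket{g_x} := \frac{1}{\sqrt{N}}\sum_{i=1}^N \ket{i}\ket{E(x)_i},
\end{align*}
which lives on $\lceil \log N\rceil + 1 = O(\log \ell)$ qubits and can be prepared by creating a uniform superposition over $i$ and coherently evaluating the (polynomial-size) classical circuit that computes the bit $E(x)_i$.

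The key calculation is that the overlap of two elementary fingerprints equals the fraction of coordinates on which the two codewords agree:
\begin{align*}
\langle g_x | g_{x'}\rangle = \frac{1}{N}\sum_{i=1}^N \langle E(x)_i | E(x')_i\rangle = \frac{1}{N}\bigl|\{i : E(x)_i = E(x')_i\}\bigr| \le 1-\delta
\end{align*}
for $x \neq x'$, by the minimum distance of the code; since this overlap is real and nonnegative we in fact get $|\langle g_x|g_{x'}\rangle| \le 1-\delta$. As a single code only buys a constant overlap bound, I would amplify by tensoring: set $\ket{h_x} := \ket{g_x}^{\otimes k}$ for $k := \lceil \log(\eta^{-1}) / \log(1/(1-\delta))\rceil = O(\log \eta^{-1})$. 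Multiplicativity of the inner product under tensor products then gives
\begin{align*}
|\langle h_x | h_{x'}\rangle| = |\langle g_x | g_{x'}\rangle|^k \le (1-\delta)^k \le \eta
\end{align*}
for every $x \neq x'$, as required.

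It remains to verify the parameters. The state $\ket{h_x}$ is $k$ copies of an $O(\log \ell)$-qubit state, hence uses $k \cdot O(\log \ell) = O(\log(\ell)\log(\eta^{-1}))$ qubits. Preparing each copy requires a uniform superposition over $N=O(\ell)$ indices together with one coherent evaluation of the encoding circuit, costing $\poly(\ell)$ time, so all $k$ copies take $\poly(\ell)\cdot O(\log \eta^{-1}) = \poly(\ell)\log(\eta^{-1})$ time. The only nontrivial ingredient is the existence of an explicit asymptotically good code with polynomial-time encoding, which I expect to be the one point worth pinning down; this is entirely standard in coding theory, so the remainder of the argument is just the short overlap computation and the tensoring step above.
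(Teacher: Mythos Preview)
Your argument is essentially identical to the paper's: both build a single-copy fingerprint $\frac{1}{\sqrt{N}}\sum_i\ket{i}\ket{E(x)_i}$ from an explicit binary code of linear blocklength and constant relative distance, observe that the overlap equals the fraction of agreeing coordinates, and then tensor $\Theta(\log\eta^{-1})$ copies to drive the overlap below $\eta$. The only cosmetic difference is that the paper writes the distance bound as ``distance $\ge(1-\delta)m$'' (so overlap $\le\delta$) while you use the standard relative-distance convention (overlap $\le 1-\delta$); the resulting parameter calculations are the same.
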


\begin{proof}
Because the above statement is not explicitly written in \cite{fingerprinting}, we here provide its proof.
Let $E:\bit^\ell\to\bit^m$ be an error-correcting code
such that $m=c\ell$ and the Hamming distance between any two
distinct code words $E(x)$ and $E(y)$ 
(i.e., the number
of bit positions where they differ)
is at least $(1-\delta)m$, where $c$ and $\delta$ are positive constants.
For some specific codes, we can choose any $c>2$ and have
$\delta<\frac{9}{10}+\frac{1}{15c}$ for sufficiently large $\ell$~\cite{fingerprinting}.
Let us take $c=4$. Then $\delta<\frac{9}{10}+\frac{1}{15c}=\frac{11}{12}$ for sufficiently large $\ell$.
Let $r\coloneqq\lfloor \frac{\log\eta}{\log\frac{11}{12}}\rfloor+1$.
For each $x\in\bit^\ell$, define
the state
\begin{align}
   \ket{h_x}\coloneqq\left(\frac{1}{\sqrt{m}}\sum_{i=1}^m
   \ket{i}\ket{E_i(x)}\right)^{\otimes r},
\end{align}
where $E_i(x)$ is the $i$th bit of $E(x)$.
Then 
\begin{align}
|\langle h_x|h_{x'}\rangle|
=\left(\frac{1}{m}\sum_{i=1}^m\langle E_i(x)|E_i(x')\rangle\right)^r
\le\delta^r
< \left(\frac{11}{12}\right)^{\frac{\log\eta}{\log\frac{11}{12}}}  =\eta.
\end{align}
The number $N$ of qubits of $\ket{h_x}$ is
\begin{align}
N=r(\lceil\log(m)\rceil+1)
\le \left(\frac{\log\eta^{-1}}{\log\frac{12}{11}}+1\right)(\log(c\ell)+2)<(8\log\eta^{-1}+1)(\log\ell+4).
\end{align}
Therefore we have $N=O(\log\ell\log\eta^{-1})$.

\if0
The number $N$ of qubits of $\ket{h_x}$ is
\begin{align}
N=r(\lceil\log(m)\rceil+1).
\end{align}
Because 
\begin{align}
N&\le \left(\frac{\log\eta^{-1}}{\log\frac{12}{11}}+1\right)(\log(c\ell)+2)<(8\log\eta^{-1}+1)(\log\ell+4),\\    
N&\ge \frac{\log\eta^{-1}}{\log\frac{12}{11}}(\log(c\ell)+1)>7\log\eta^{-1}(\log\ell+3),    
\end{align}
we have $N=\Theta(\log\ell\log\eta^{-1})$.
\fi
\end{proof}

\begin{remark}\label{rem:QROM} 
The work \cite[Section 3]{fingerprinting} also shows an alternative non-constructive method for quantum fingerprinting with better output length.  
    With some rephrasing, 
    they show that 
    we can take $m=\log \ell+O(\log \eta^{-1})$ in such a way that 
    if we define 
    $\ket{h_x}:=2^{-m/2}\sum_{z\in \bit^m}(-1)^{H(x||z)}\ket{z}$
    for a random function $H:\bit^{\ell+m}\rightarrow \bit$, then 
    for $(1-2^{-\Omega(\ell)})$-fraction of $H$, 
    we have 
    $|\langle h_x|h_{x'}\rangle|\le\eta$ for all pairs of distinct $\ell$-bit strings $x\neq x'$.  
    Then they argue that there exists a fixed $H$ which gives a quantum fingerprinting with the desired parameter by a non-constructive argument. 
    We observe that the above construction can be directly used as quantum fingerprinting in the quantum random oracle model (QROM)~\cite{AC:BDFLSZ11}. 
    This means that we have a quantum fingerprinting with output length $\log \ell+O(\log \eta^{-1})$ in the QROM, which is better than the one given in  \Cref{lem:finger}. We do not know how to achieve the same bound in the plain model (i.e., without relying on random oracles). 
    The improvement from $O(\log(\ell)\log(\eta^{-1}))$ to $\log \ell+O(\log \eta^{-1})$ does not lead to any asymptotical improvement for output length of $\epsilon$-OWSGs 
    in the parameter regimes which we consider in this paper.
\end{remark}

\takashi{I added the following lemma and its corollary.}

\begin{lemma}[{\cite[Lemma 4]{Rastegin_2007}}]\label{lem:TD_generalized}
   For any Hermitian matrices $A$ and $B$, it holds that 
    \begin{align*}
   \max_{\Pi}\Tr(\Pi(A-B))= \|A-B\|_{tr}+\frac{\Tr(A-B)}{2}
    \end{align*}
    where the maximum is taken over all projections $\Pi$.
\end{lemma}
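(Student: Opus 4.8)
The plan is to reduce the whole statement to the spectral decomposition of the single Hermitian matrix $C:=A-B$. First I would diagonalize $C=\sum_i \lambda_i \ket{v_i}\bra{v_i}$ with real eigenvalues $\lambda_i$ and an orthonormal eigenbasis $\{\ket{v_i}\}$. For an arbitrary projection $\Pi$, writing $p_i:=\bra{v_i}\Pi\ket{v_i}\in[0,1]$, linearity of the trace gives $\Tr(\Pi C)=\sum_i \lambda_i p_i$. Since each $p_i$ lies in $[0,1]$, we have $\lambda_i p_i\le \max(\lambda_i,0)$ term by term, hence the upper bound $\Tr(\Pi C)\le \sum_{i:\lambda_i>0}\lambda_i$. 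This bound is attained by taking $\Pi$ to be the orthogonal projection onto the span of the eigenvectors with positive eigenvalue, for which $p_i=1$ exactly when $\lambda_i>0$ and $p_i=0$ otherwise. Therefore $\max_{\Pi}\Tr(\Pi C)=\sum_{i:\lambda_i>0}\lambda_i$.

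The second step is to rewrite $\sum_{i:\lambda_i>0}\lambda_i$ in terms of $\Tr(C)$ and the trace norm. Using the elementary identity $\max(\lambda_i,0)=\tfrac12(\lambda_i+|\lambda_i|)$, I would compute $\sum_{i:\lambda_i>0}\lambda_i=\tfrac12\sum_i \lambda_i+\tfrac12\sum_i|\lambda_i|=\tfrac12\Tr(C)+\tfrac12\Tr|C|$. Recalling that in this paper $\|\cdot\|_{tr}$ denotes the trace distance normalized by a factor $1/2$ (consistent with the pure-state identity $\||\psi\rangle\langle\psi|-|\phi\rangle\langle\phi|\|_{tr}=\sqrt{1-|\langle\psi|\phi\rangle|^2}$ recorded in the preliminaries), we have $\|C\|_{tr}=\tfrac12\Tr|C|$. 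Substituting $C=A-B$ then yields $\max_{\Pi}\Tr(\Pi(A-B))=\|A-B\|_{tr}+\tfrac12\Tr(A-B)$, which is exactly the claimed identity.

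The step I would be most careful about is the normalization convention: the coefficient $1$ in front of $\|A-B\|_{tr}$ appears precisely because $\|\cdot\|_{tr}$ carries the factor $1/2$; with the unnormalized Schatten-$1$ norm the coefficient would instead be $1/2$. The only genuinely mathematical point—and it is mild—is justifying that the pointwise bound $\lambda_i p_i\le\max(\lambda_i,0)$ can be \emph{simultaneously} saturated by a single projection. This follows immediately from orthonormality of $\{\ket{v_i}\}$: the positive-eigenspace projector realizes $p_i\in\{0,1\}$ with $p_i=1$ iff $\lambda_i>0$, so every term hits its individual maximum at once. Beyond this, the argument is entirely elementary and does not require any of the heavier machinery (tomography, $\gamma$-nets, Welch bound) used elsewhere in the paper.
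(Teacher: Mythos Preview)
Your proof is correct. However, note that the paper does not actually prove this lemma: it is quoted from \cite[Lemma 4]{Rastegin_2007} without proof, so there is no in-paper argument to compare against. Your spectral-decomposition approach---diagonalize $C=A-B$, bound $\Tr(\Pi C)=\sum_i\lambda_i\langle v_i|\Pi|v_i\rangle$ termwise, saturate with the positive-eigenspace projector, and then rewrite $\sum_{\lambda_i>0}\lambda_i=\tfrac12\Tr(C)+\tfrac12\Tr|C|$---is exactly the standard proof of this identity, and your care with the $\tfrac12$ normalization of $\|\cdot\|_{tr}$ matches the paper's convention.
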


\begin{corollary}\label{cor:trace_of_projection_and_trace_distance}
For any Hermitian matrices $A$ and $B$ and any projection $\Pi$, it holds that 
\begin{align*}
\left|\Tr\left(\Pi(A-B)\right)\right|\le 2\|A-B\|_{tr}.
\end{align*}
\end{corollary}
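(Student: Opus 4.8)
The plan is to deduce the bound directly from \Cref{lem:TD_generalized}, applied to the Hermitian matrix $C:=A-B$ and, separately, to its negation $-C=B-A$. The only auxiliary fact I need is that for any Hermitian matrix $C$ one has $|\Tr(C)|\le\|C\|_{tr}$, which holds because $\|C\|_{tr}$ is the sum of the absolute values of the eigenvalues of $C$ while $\Tr(C)$ is their (signed) sum.

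First I would bound $\Tr(\Pi(A-B))$ from above. Since the given $\Pi$ is one particular projection, it can only attain at most the maximum over all projections, so by \Cref{lem:TD_generalized},
\[
\Tr(\Pi(A-B))\le \max_{\Pi'}\Tr(\Pi'(A-B))=\|A-B\|_{tr}+\frac{\Tr(A-B)}{2}.
\]
Using $\Tr(A-B)\le|\Tr(A-B)|\le\|A-B\|_{tr}$, the right-hand side is at most $\tfrac{3}{2}\|A-B\|_{tr}\le 2\|A-B\|_{tr}$.

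For the matching lower bound I would repeat the argument with $A$ and $B$ interchanged. Applying \Cref{lem:TD_generalized} to $B-A$ gives
\[
-\Tr(\Pi(A-B))=\Tr(\Pi(B-A))\le \|B-A\|_{tr}+\frac{\Tr(B-A)}{2}=\|A-B\|_{tr}-\frac{\Tr(A-B)}{2},
\]
and since $-\Tr(A-B)\le|\Tr(A-B)|\le\|A-B\|_{tr}$, this is again at most $\tfrac{3}{2}\|A-B\|_{tr}\le 2\|A-B\|_{tr}$. Combining the two displays yields $|\Tr(\Pi(A-B))|\le 2\|A-B\|_{tr}$, as claimed.

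This argument is essentially routine, so there is no real obstacle; the only point demanding a little care is the sign bookkeeping in the second step, where invoking the lemma on $B-A$ flips the sign of the trace correction term $\tfrac{\Tr(A-B)}{2}$, and one must check that this term is controlled by $\tfrac12\|A-B\|_{tr}$ in both directions. (In fact the proof gives the sharper constant $\tfrac32$ in place of $2$, but the stated factor of $2$ suffices for our applications.)
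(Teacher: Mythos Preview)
Your approach is the same as the paper's, and the overall logic is sound. There is, however, a convention slip in your auxiliary fact. In this paper, $\|\cdot\|_{tr}$ denotes the \emph{trace distance}, i.e., half the Schatten $1$-norm: the paper explicitly writes $\|M\|_1 = 2\|M\|_{tr}$ in its proof of this corollary, and the formula in \Cref{lem:TD_generalized} is only correct under that normalization. Consequently the correct auxiliary bound is $|\Tr(C)|\le \|C\|_1 = 2\|C\|_{tr}$, not $|\Tr(C)|\le \|C\|_{tr}$. Plugging this in, your argument gives
\[
\Tr(\Pi(A-B))\le \|A-B\|_{tr}+\tfrac{1}{2}\cdot 2\|A-B\|_{tr}=2\|A-B\|_{tr},
\]
and similarly for the other direction, so the final inequality still holds with the constant $2$ exactly as stated; only your claimed sharpening to $\tfrac{3}{2}$ is unjustified.
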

\begin{proof}
It immediately follows from \Cref{lem:TD_generalized} 
and $\Tr(M)\le \|M\|_{1} = 2\|M\|_{tr}$ for any Hermitian matrix $M$. 
\end{proof}

\begin{lemma}
\label{lem:fidelity_upperbound} 
Let $\{\ket{\psi_k}\}_{k\in \bit^n}$ be any set of pure states.
Then
\begin{align}
F\left(\frac{1}{2^n}\sum_{k\in\bit^n}\ket{\psi_k}\bra{\psi_k},\frac{I^{\otimes m}}{2^m}\right)
\le \frac{2^n}{2^m}.
\end{align}
\end{lemma}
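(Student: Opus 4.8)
The plan is to exploit the special structure of the second argument, the maximally mixed state $\sigma:=I^{\otimes m}/2^m$, which causes the fidelity to collapse to a single trace. Writing $\rho:=\frac{1}{2^n}\sum_{k\in\bit^n}\ket{\psi_k}\bra{\psi_k}$ and using the squared convention $F(\rho,\sigma)=\left(\Tr\sqrt{\sqrt{\rho}\,\sigma\sqrt{\rho}}\right)^2$ (so that $F(\ket\psi,\ket\phi)=|\langle\psi|\phi\rangle|^2$ on pure states, consistent with the trace-distance identity recorded in the preliminaries), the first step is to observe that since $\sigma=2^{-m}I$ is a scalar multiple of the identity, $\sqrt{\rho}\,\sigma\sqrt{\rho}=2^{-m}\rho$, whence $\sqrt{\sqrt{\rho}\,\sigma\sqrt{\rho}}=2^{-m/2}\sqrt{\rho}$ and therefore $F(\rho,\sigma)=2^{-m}\left(\Tr\sqrt{\rho}\right)^2$. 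It thus suffices to prove the single scalar bound $\Tr\sqrt{\rho}\le 2^{n/2}$.

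The second step is to control $\Tr\sqrt{\rho}$ by the rank of $\rho$. Let $\lambda_1,\dots,\lambda_r$ denote the nonzero eigenvalues of $\rho$, where $r=\mathrm{rank}(\rho)$. Because $\rho$ is a convex combination of the $2^n$ rank-one projectors $\ket{\psi_k}\bra{\psi_k}$, its support lies in the span of $\{\ket{\psi_k}\}_{k\in\bit^n}$, so $r\le 2^n$. Applying Cauchy--Schwarz to $\Tr\sqrt{\rho}=\sum_{i=1}^r\sqrt{\lambda_i}$ gives $\sum_{i=1}^r\sqrt{\lambda_i}\le\sqrt{r}\cdot\sqrt{\sum_{i=1}^r\lambda_i}=\sqrt{r}$, where I used $\sum_i\lambda_i=\Tr\rho=1$. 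Combining with $r\le 2^n$ yields $\Tr\sqrt{\rho}\le 2^{n/2}$, and hence $F(\rho,\sigma)\le 2^{-m}\cdot 2^n=2^n/2^m$, as claimed. (Note that when $n\ge m$ the bound is vacuous since $F\le 1$ always, so the content is in the regime $n<m$, but the argument above needs no case distinction.)

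I expect no serious obstacle here: the argument is essentially routine once the two observations above are in place. The only point demanding care is the fidelity convention, since the stated right-hand side $2^n/2^m$ matches the squared definition; with the unsquared definition $F(\rho,\sigma)=\Tr\sqrt{\sqrt{\rho}\,\sigma\sqrt{\rho}}$ the same computation would instead yield $\sqrt{2^n/2^m}$, which is weaker in the relevant regime. The conceptual core is simply that fidelity against the maximally mixed state is governed by $\Tr\sqrt{\rho}$, which is in turn controlled by $\mathrm{rank}(\rho)$, and that mixing $2^n$ pure states can only produce a density matrix of rank at most $2^n$.
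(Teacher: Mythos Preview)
Your proof is correct and follows essentially the same approach as the paper: reduce $F(\rho,I^{\otimes m}/2^m)$ to $2^{-m}(\Tr\sqrt{\rho})^2$ using that the maximally mixed state is a scalar, then bound $\Tr\sqrt{\rho}$ via Cauchy--Schwarz together with the fact that $\rho$ has rank at most $2^n$. The paper phrases the rank bound implicitly by indexing the diagonalization of $\rho$ over $\bit^n$, while you state it explicitly, but the argument is the same.
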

\begin{proof}
Let $\sum_{k\in\bit^n}\alpha_k\ket{\phi_k}\bra{\phi_k}$ be a diagonalization of
$\frac{1}{2^n}\sum_{k\in\bit^n}\ket{\psi_k}\bra{\psi_k}$.
Then
\begin{align}
F\left(\frac{1}{2^n}\sum_{k\in\bit^n}\ket{\psi_k}\bra{\psi_k},\frac{I^{\otimes m}}{2^m}\right)
=\left(\sum_k \sqrt{\frac{\alpha_k}{2^m}}\right)^2
\le(\sum_k \alpha_k)\left(\sum_k \frac{1}{2^m}\right)
= \frac{2^n}{2^m}.
\end{align}
Here in the inequality we have used 
Cauchy–Schwarz inequality.
\end{proof}
\section{Lower Bounds for ($\epsilon$-)OWSGs}\label{sec:lower_bound}
In this section, we prove lower bounds for output length of ($\epsilon$-)OWSGs (\Cref{thm:intro_impossibility_log_OWSG,thm:intro_impossibility_inverse-poly_OWSG,thm:intro_impossibility_constant_weak_OWSG,thm:intro_impossibility_loglog_constant_OWSG}).

\subsection{Lower Bounds for OWSGs and $\secp^{-\Theta(1)}$-OWSGs}
\begin{theorem}[Restatement of \Cref{thm:intro_impossibility_log_OWSG}]\label{thm:impossibility_log_OWSG}
There do not exist OWSGs with $O(\log \secp)$-qubit outputs.
\end{theorem}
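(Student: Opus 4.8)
The plan is to exhibit one completely trivial QPT adversary and show it already breaks one-wayness whenever $m=O(\log\secp)$, so that no scheme with such short outputs can meet the negligible-advantage requirement of a OWSG. Concretely, I would let $\cA$ ignore the challenge copies $\phi_k^{\otimes t}$ entirely and simply output a freshly sampled key $k'\gets\KeyGen(1^\secp)$. Writing $V_{k'}$ for the POVM element on the $m$-qubit output space with $\Tr(V_{k'}\sigma)=\Pr[\top\gets\Ver(k',\sigma)]$ (so $0\le V_{k'}\le I$), the success probability of this adversary is exactly $\mathbb{E}_{k,k'}[\Tr(V_{k'}\phi_k)]$ with $k,k'$ independent outputs of $\KeyGen$. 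The whole argument reduces to lower-bounding this quantity by $1/\poly(\secp)$.

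For pure outputs $\phi_k=\ket{\phi_k}\bra{\phi_k}$, correctness gives $\mathbb{E}_{k'}[\langle\phi_{k'}|V_{k'}|\phi_{k'}\rangle]\ge 1-\negl(\secp)$; set $\epsilon_{k'}:=1-\langle\phi_{k'}|V_{k'}|\phi_{k'}\rangle\ge 0$. I would then establish the pointwise bound
\[
\langle\phi_k|V_{k'}|\phi_k\rangle\ge |\langle\phi_{k'}|\phi_k\rangle|^2-3\sqrt{\epsilon_{k'}},
\]
obtained in a short computation by decomposing $\ket{\phi_k}$ into its component along $\ket{\phi_{k'}}$ and a perpendicular part, then using $0\le I-V_{k'}$ together with Cauchy--Schwarz on the cross term (which is controlled by $\langle\phi_{k'}|(I-V_{k'})|\phi_{k'}\rangle\le\epsilon_{k'}$). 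Taking expectations and applying Jensen's inequality ($\mathbb{E}_{k'}[\sqrt{\epsilon_{k'}}]\le\sqrt{\mathbb{E}_{k'}[\epsilon_{k'}]}=\negl(\secp)$) yields $\mathbb{E}_{k,k'}[\Tr(V_{k'}\phi_k)]\ge\mathbb{E}_{k,k'}[|\langle\phi_{k'}|\phi_k\rangle|^2]-\negl(\secp)$. Now \Cref{cor:Welch}, applied to the family $\{\ket{\phi_k}\}$ under the output distribution of $\KeyGen$, gives $\mathbb{E}_{k,k'}[|\langle\phi_{k'}|\phi_k\rangle|^2]\ge 2^{-m}$, so the advantage is at least $2^{-m}-\negl(\secp)=\secp^{-O(1)}$, which is non-negligible for $m=O(\log\secp)$.

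For general mixed outputs I would fix any purification $\ket{\Phi_k}$ of $\phi_k$ on the $m$-qubit register together with an ancilla of dimension $2^m$ (total dimension $2^{2m}$), and observe $\Tr(V_{k'}\phi_k)=\langle\Phi_k|(V_{k'}\otimes I)|\Phi_k\rangle$, where $V_{k'}\otimes I$ is again a POVM element. Repeating the pure-state inequality verbatim with $\ket{\phi_k}\mapsto\ket{\Phi_k}$ and $V_{k'}\mapsto V_{k'}\otimes I$, and then invoking \Cref{cor:Welch} in dimension $2^{2m}$, produces advantage at least $2^{-2m}-\negl(\secp)$, still $\secp^{-O(1)}$ when $m=O(\log\secp)$. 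This contradicts the OWSG requirement that every QPT adversary have negligible advantage, completing the proof.

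The one genuinely delicate point is that I may \emph{not} assume the canonical ``project onto $\ket{\phi_{k'}}$'' verification; the bound must hold for an arbitrary QPT $\Ver$. The bridge is precisely the pointwise inequality relating $\Tr(V_{k'}\phi_k)$ to the overlap $|\langle\phi_{k'}|\phi_k\rangle|^2$, and this is exactly where correctness being only $1-\negl(\secp)$ rather than $1$ enters and produces the $\sqrt{\epsilon_{k'}}$ slack; checking that this slack remains negligible in expectation is the crux. The purification step is then a routine device transferring the whole pure-state argument to mixed outputs, at the harmless cost of replacing $2^{-m}$ by $2^{-2m}$. I would finally note that the same adversary and computation immediately give the $\secp^{-c}$-OWSG lower bound of \Cref{thm:intro_impossibility_inverse-poly_OWSG} by comparing $2^{-m}$ against the threshold $\secp^{-c}$.
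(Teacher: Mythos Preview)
Your proposal is correct and uses the same trivial adversary as the paper, but the analysis diverges. The paper does not work through a POVM decomposition of $\Ver$; instead it bounds the success probability via the trace-distance inequality
\[
\Pr[\top\gets\Ver(k',\phi_k)]\ge \Pr[\top\gets\Ver(k',\phi_{k'})]-\|\phi_k-\phi_{k'}\|_{tr},
\]
and then shows $\mathbb{E}_{k,k'}\|\phi_k-\phi_{k'}\|_{tr}\le\sqrt{1-2^{-m}}$ by diagonalizing each $\phi_k$, applying the triangle inequality and the concavity of $\sqrt{1-x}$, and finally invoking \Cref{cor:Welch}. This yields advantage $\ge 2^{-m-1}-\negl(\secp)$ uniformly for mixed outputs, with no purification step and no separate pure/mixed cases. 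Your route --- the pointwise inequality $\langle\phi_k|V_{k'}|\phi_k\rangle\ge|\langle\phi_{k'}|\phi_k\rangle|^2-3\sqrt{\epsilon_{k'}}$ followed by purification --- is perfectly valid, but the purification costs you: you end up with $2^{-2m}$ rather than $2^{-m-1}$ in the mixed case. For the present theorem this is harmless, but it weakens the quantitative corollary for $\secp^{-c}$-OWSGs (you would only get roughly $(c/2)\log\secp$ instead of the paper's $c\log\secp-2$). The paper's trace-distance argument is thus both more uniform and quantitatively sharper, while yours makes the role of the verification POVM more explicit.
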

\begin{proof}
    Suppose that QPT algorithms
$(\KeyGen,\StateGen,\Ver)$ satisfy the syntax and correctness of OWSGs and has 
$m=O(\log \secp)$-qubit outputs.  
We first prove the following:
\begin{align}\label{eq:upper_bound_expected_TD}
\mathbb{E}_{k,k'}\|\phi_k-\phi_{k'}\|_{tr}\le \sqrt{1-2^{-m}}.
\end{align}
We prove this below. 
For each $k$, we consider the diagonalization of $\phi_k$ to write 
\begin{align*}
    \phi_k=\sum_{i}p_{k,i}|\phi_{k,i}\rangle\langle\phi_{k,i}|.
\end{align*}
Then we have 
\begin{align}
\mathbb{E}_{k,k'}\|\phi_k-\phi_{k'}\|_{tr}
&=
\mathbb{E}_{k,k'}\left\|\sum_{i}p_{k,i}|\phi_{k,i}\rangle\langle\phi_{k,i}|-\sum_{j}p_{k',j}|\phi_{k',j}\rangle\langle\phi_{k',j}|\right\|_{tr}\\
&= \mathbb{E}_{k,k'}\left\|\sum_{i,j}p_{k,i}p_{k',j}\left(|\phi_{k,i}\rangle\langle\phi_{k,i}|-|\phi_{k',j}\rangle\langle\phi_{k',j}|\right)\right\|_{tr} \label{eq:sum_is_one}\\
&\le \mathbb{E}_{k,k'}\sum_{i,j}p_{k,i}p_{k',j}\left\||\phi_{k,i}\rangle\langle\phi_{k,i}|-|\phi_{k',j}\rangle\langle\phi_{k',j}|\right\|_{tr} \label{eq:triangle}\\
&= \mathbb{E}_{k,k'}\sum_{i,j}p_{k,i}p_{k',j}\sqrt{1-|\langle \phi_{k,i}|\phi_{k',j}\rangle|^2}\\
&\le  \sqrt{1-\mathbb{E}_{k,k'}\sum_{i,j}p_{k,i}p_{k',j}|\langle \phi_{k,i}|\phi_{k',j}\rangle|^2} \label{eq:concave}\\
&\le \sqrt{1-2^{-m}}, \label{eq:welch}
\end{align}
where 
\Cref{{eq:sum_is_one}} follows from $\sum_{i}p_{k,i}=\sum_{j}p_{k'.j}=1$ for each $k.k'$, 
\Cref{eq:triangle} follows from the triangle inequality for trace distance, 
\Cref{eq:concave} follows from the concavity of the function $\sqrt{1-x}$, and 
\cref{eq:welch} follows from \Cref{cor:Welch}. 
This completes the proof of \Cref{eq:upper_bound_expected_TD}. 

Then we consider a trivial adversary $\cA$ against its one-wayness that ignores its input (except for the security parameter) and simply outputs a fresh random key $k'\gets \KeyGen(1^\secp)$. 
Then we have 
\begin{align}
    \Pr[\cA~\text{wins}]
    &=\mathbb{E}_{k,k'}\Pr[\Ver(k',\phi_k)=\top]\\
    &\ge \mathbb{E}_{k,k'}\left[\Pr[\Ver(k,\phi_k)=\top]-\|\phi_k-\phi_{k'}\|_{tr}\right]\\
    &\ge 1-\negl(\secp)-\sqrt{1-2^{-m}}, \label{eq:final_lower_bound}
\end{align}
where \Cref{eq:final_lower_bound} follows from the correctness of $(\KeyGen,\StateGen,\Ver)$ and \Cref{eq:upper_bound_expected_TD}. 
Note that we have $\sqrt{1-x}\le 1-x/2$ for all $0\le x \le 1$.
\takashi{In the previous version, I was using a looser bound  $\sqrt{1-x}\le 1-x/4$, but I found we actually have $\sqrt{1-x}\le 1-x/2$. 
(The Taylor expansion of $\sqrt{1-x}$ looks like $1-x/2-x^2/8-...$. I misunderstood it as $1-x/2+x^2/8-...$.)
}
Therefore we have  
\begin{align}
    \Pr[\cA~\text{wins}]\ge 
    2^{-m-1} -\negl(\secp).    
\end{align}
Since $m=O(\log \secp)$, the RHS is non-negligible and thus $\cA$ succeeds in breaking one-wayness. 
\end{proof}
\begin{remark}
In the above proof, $\cA$ does not use the target state at all. This means that even \emph{zero-copy} OWSGs (where the adversary is given no copy of the target state) with $O(\log \secp)$-qubit outputs do not exist. 
\end{remark}


\takashi{I added the following theorem and proof.}
The above proof can be easily extended to give a lower bound for $\secp^{-\Theta(1)}$-OWSGs. 

\begin{theorem}[Restatement of \Cref{thm:intro_impossibility_inverse-poly_OWSG}]\label{thm:impossibility_inverse-poly_OWSG}
For any constant $c\in \mathbb{N}$,   
there do not exist $\secp^{-c}$-OWSGs of which output length is at most $c \log\secp-2$.    
\end{theorem}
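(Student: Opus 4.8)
The plan is to mimic the proof of \Cref{thm:impossibility_log_OWSG} almost verbatim, tracking the quantitative dependence on $m$ so that the advantage of the trivial adversary beats $\secp^{-c}$ whenever $m\le c\log\secp-2$. First I would assume, toward a contradiction, that $(\KeyGen,\StateGen,\Ver)$ is a $\secp^{-c}$-OWSG with output length $m\le c\log\secp-2$. Exactly as before, diagonalizing each $\phi_k$ and using the triangle inequality, the formula $\||\psi\rangle\langle\psi|-|\phi\rangle\langle\phi|\|_{tr}=\sqrt{1-|\langle\psi|\phi\rangle|^2}$, concavity of $\sqrt{1-x}$, and \Cref{cor:Welch}, I would re-derive the bound
\begin{align}
\mathbb{E}_{k,k'}\|\phi_k-\phi_{k'}\|_{tr}\le \sqrt{1-2^{-m}}.
\end{align}
This step is identical to the one in the previous theorem and requires no new idea.

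Next I would again consider the trivial adversary $\cA$ that ignores the target state and outputs a fresh $k'\gets\KeyGen(1^\secp)$. The same chain of inequalities, using correctness of the OWSG and the bound above, gives
\begin{align}
\Pr[\cA~\text{wins}]\ge 1-\negl(\secp)-\sqrt{1-2^{-m}}\ge 2^{-m-1}-\negl(\secp),
\end{align}
where the last step uses $\sqrt{1-x}\le 1-x/2$ for $0\le x\le 1$. The only genuinely new part is the final arithmetic: when $m\le c\log\secp-2$ we have $2^{-m-1}\ge 2^{-(c\log\secp-2)-1}=2^{1-c\log\secp}=2\cdot\secp^{-c}$, so $\Pr[\cA~\text{wins}]\ge 2\secp^{-c}-\negl(\secp)>\secp^{-c}$ for all sufficiently large $\secp$, contradicting $\secp^{-c}$-one-wayness.

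I do not expect any real obstacle here: the argument is a parametrized repackaging of \Cref{thm:impossibility_log_OWSG}, and the only thing to get right is the constant in the exponent so that the ``$-2$'' in $c\log\secp-2$ lines up with the ``$-1$'' from the $2^{-m-1}$ bound and leaves a clean factor of $2$ of slack to absorb the negligible term. The mild subtlety worth stating explicitly is that the bound must hold for \emph{all sufficiently large} $\secp$ (matching the quantifier in the definition of $\epsilon$-one-wayness), which is immediate since $2\secp^{-c}-\negl(\secp)>\secp^{-c}$ eventually. As in the previous theorem, the adversary never touches the copies $\phi_k^{\otimes t}$, so the impossibility in fact holds even in the zero-copy setting.
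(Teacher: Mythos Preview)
Your proposal is correct and takes essentially the same approach as the paper: both invoke the quantitative conclusion $\Pr[\cA~\text{wins}]\ge 2^{-m-1}-\negl(\secp)$ from the proof of \Cref{thm:impossibility_log_OWSG} and then check that $m\le c\log\secp-2$ forces $2^{-m-1}\ge 2\secp^{-c}$, leaving enough slack to beat $\secp^{-c}$ for all sufficiently large $\secp$. The paper's proof is just a two-line pointer to the earlier theorem plus this arithmetic, whereas you spell out the re-derivation, but the content is identical.
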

\begin{proof}
Suppose that the output length is $m< c' \log\secp-2$. By the proof of \Cref{thm:impossibility_log_OWSG}, there is a QPT adversary that wins with probability at least $2^{-m-1}-\negl(\secp)\ge 2\secp^{-c}-\negl(\secp)\ge \secp^{-c}$ for all sufficiently large $\secp$. 
This means that it is not a $\secp^{-c}$-OWSG. 
\end{proof}
\begin{remark}
    As one can easily see from the proof, we can improve the lower bound from $c \log\secp-2$ to $c \log\secp-1-\epsilon$ for any constant $\epsilon>0$. 
\end{remark}

\subsection{Lower Bounds for $\Theta(1)$-OWSGs and Weak OWSGs.}
For proving \Cref{thm:intro_impossibility_constant_weak_OWSG,thm:intro_impossibility_loglog_constant_OWSG}, we prove the following theorem. 

\begin{theorem}\label{thm:weak_OWSG_attack}
    Let $\Delta=\Delta(\lambda)>0$ and $m=m(\lambda)\in \mathbb{N}$ and $C$ be a constant taken from~\Cref{lem:epsnet}.
    If $(6C/\Delta)^{2^{2m}} = \poly(\lambda)$, there do not exist $(1-\Delta)$-OWSGs with $m$-qubit outputs.
\end{theorem}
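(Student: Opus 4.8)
The plan is to construct a QPT adversary $\cA$ that, given polynomially many copies of the challenge state $\phi_k$, wins the one-wayness game with probability strictly larger than $1-\Delta$, contradicting $(1-\Delta)$-one-wayness. The guiding idea is to collapse the OWSG into a randomized classical map with range of size $\poly(\lambda)$ and then invert this map exactly as one inverts a weak OWF with $O(\log\lambda)$-bit outputs. Concretely, I fix the $\gamma$-net $\enet(2^m,\gamma)$ from \Cref{lem:epsnet} with $\gamma:=\Delta/6$, so that its cardinality is at most $(C/\gamma)^{2^{2m}}=(6C/\Delta)^{2^{2m}}=\poly(\lambda)$ by hypothesis. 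Running the tomography procedure $\tomo$ of \Cref{lem:tomography} at accuracy $\delta:=\Delta/12$ and then reporting the index of the nearest net element defines a randomized ``rounding'' map whose output is a $\log|\enet(2^m,\gamma)|=O(\log\lambda)$-bit string.

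The adversary $\cA$ works as follows. From its copies of $\phi_k$ it computes $M$ with $\|M-\phi_k\|_{tr}\le\delta$ via $\tomo$, and sets $\iota^\ast$ to be the index of the net element nearest to $M$. It then samples keys $k'_1,\dots,k'_s\gets\KeyGen(1^\lambda)$ for a polynomial $s$ chosen with $|\enet(2^m,\gamma)|/(es)\le\Delta/6$; for each $i$ it produces its own copies of $\phi_{k'_i}$ through $\StateGen(k'_i)$, runs $\tomo$, and records the nearest net index $\iota_i$. If some $\iota_i=\iota^\ast$ it outputs the first such $k'_i$, and otherwise it aborts. Every step is efficient because the hypothesis $(6C/\Delta)^{2^{2m}}=\poly(\lambda)$ forces both $\Delta\ge 1/\poly(\lambda)$ and $2^{2m}=O(\log\lambda)$; hence the copy count $t=144\lambda\,2^{4m}/\delta^2$, the number of samples $s$, and the net-construction time $|\enet(2^m,\gamma)|\cdot\poly(2^m,1/\gamma)$ are all polynomial.

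The analysis has two parts. For the collision part, let $q_\iota$ denote the marginal probability that one run of the rounding map on $\phi_{k'}$ with $k'\gets\KeyGen$ returns $\iota$; since $\iota^\ast$ has the same marginal law and the samples are independent, $\Pr[\text{no }\iota_i=\iota^\ast]=\sum_\iota q_\iota(1-q_\iota)^s\le\sum_\iota q_\iota e^{-q_\iota s}\le|\enet(2^m,\gamma)|/(es)\le\Delta/6$, using $xe^{-xs}\le 1/(es)$ and that at most $|\enet(2^m,\gamma)|$ indices have $q_\iota>0$. For the closeness part, whenever the relevant two tomographies succeed and $\iota_i=\iota^\ast$, both $\phi_k$ and $\phi_{k'_i}$ lie within $\gamma+2\delta$ of the common net element, so $\|\phi_k-\phi_{k'_i}\|_{tr}\le 2\gamma+4\delta=2\Delta/3$ by the triangle inequality; here one passes through the genuine density matrix $\phi_k$, which is $\delta$-close to $M$, to apply the covering guarantee of \Cref{lem:epsnet}. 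Since for a fixed key $\Ver(k'_i,\cdot)$ is a measurement, its acceptance probabilities on $\phi_k$ and on $\phi_{k'_i}$ differ by at most $\|\phi_k-\phi_{k'_i}\|_{tr}$; and by correctness together with a union bound over the $\poly(\lambda)$ sampled keys, the chosen $k'_i$ accepts its own state $\phi_{k'_i}$ except with negligible probability. Combining the two parts, and bounding the failure of all $s+1$ tomography runs by $\negl(\lambda)$, the winning probability is at least $1-\Delta/6-2\Delta/3-\negl(\lambda)=1-5\Delta/6-\negl(\lambda)>1-\Delta$ for large $\lambda$, a contradiction.

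The step I expect to be the main obstacle is balancing the three error sources against the single budget $\Delta$: the net granularity $\gamma$ simultaneously controls the net size (hence feasibility through the hypothesis) and the closeness loss, the tomography accuracy $\delta$ adds to the closeness loss while inflating the copy count $t$, and the sample count $s$ trades off against the collision-failure term. A secondary technical point is that the tomography output $M$ is only Hermitian rather than a density matrix, so the net's covering property must be invoked at $\phi_k$ and then transferred to $M$ through the triangle inequality; likewise, the conditioning on a collision having occurred must not be allowed to degrade the correctness guarantee for the selected key, which is why the argument bounds the ``bad key'' contribution by a union bound over all $s$ samples rather than by conditioning.
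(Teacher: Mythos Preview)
Your proposal is correct and reaches the same quantitative conclusion as the paper, $1-5\Delta/6-\negl(\lambda)$, but the route is organized a bit differently and is arguably cleaner.

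The paper's adversary, after tomography on $\phi_k$ yielding $M$, forms the \emph{set} $\cN=\{N\in\enet:\|N-M\|_{tr}\le\gamma\}$ and accepts a sampled $k_i$ whenever $M_i$ is $\gamma$-close to \emph{some} element of $\cN$; non-termination is analyzed by introducing a ``bad'' subset $\cB\subseteq\enet$ of net elements with hit probability at most $\gamma/|\enet|$ and bounding $\Pr[\cN\cap\cB\neq\emptyset]\le\gamma$. Your adversary instead rounds every tomography output to its single \emph{nearest} net element and tests for an exact index match, bounding the miss probability directly via $\sum_\iota q_\iota(1-q_\iota)^s\le\sum_\iota q_\iota e^{-q_\iota s}\le|\enet|/(es)$. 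This avoids the bad-set bookkeeping entirely and makes transparent that the construction is ``OWSG $\to$ randomized classical map with $\poly(\lambda)$-size range $\to$ invert as for a short-output OWF.'' The cost is that you separate the tomography accuracy $\delta=\Delta/12$ from the net granularity $\gamma=\Delta/6$ (the paper takes both equal to $\gamma$), but this is cosmetic: your closeness loss $2\gamma+4\delta$ equals the paper's $4\gamma$.

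Your treatment of correctness under selection is also slightly more careful than the paper's: you avoid conditioning on the collision event by union-bounding over all $s$ sampled keys. To make this fully rigorous you need one Markov step first (correctness is only in expectation over $k$, so pass to ``$\Pr_k[p(k)<1-\sqrt{\negl}]\le\sqrt{\negl}$'' before union-bounding), but you clearly have the right idea. One last small point: when invoking the net's covering guarantee you correctly note that it applies at the genuine density matrix $\phi_k$ rather than at the Hermitian $M$, and route through $\phi_k$ via the triangle inequality; this is exactly right.
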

\begin{proof}
    Suppose that QPT algorithms $(\KeyGen,\StateGen,\Ver)$ satisfy the syntax and correctness of $(1-\Delta)$-OWSGs with $m$-qubit outputs. 
    Let $\gamma=\Delta/6$, $d=2^m$, $t=144\lambda d^4/\gamma^2$, and $T= \frac{(C/\gamma)^{d^2} \log \lambda}{\gamma}$. 
    We consider the following adversary $\cA(1^\lambda,\phi_k^{\otimes t})$:
    \begin{enumerate}
        \item Construct an $\gamma$-net $\enet(d,\gamma)$.
        \item Given $\phi_k^{\otimes t}$, run $M \gets \tomo(\phi_k^{\otimes t},\gamma)$.
        \item Construct a set $\cN = \{ N \in \enet(d,\gamma): \|N-M\|_{tr} \le \gamma\}.$
        \item For each $i \in [T]$, do:
        \begin{enumerate}
            \item Sample $k_i \gets \KeyGen(1^\lambda)$ and run $M_i \gets \tomo(\phi_{k_i}^{\otimes t},\gamma).$ 
            \item If there exists $N \in \cN$ such that $\|M_i - N \|_{tr} \le \gamma$, output $k':=k_i$ and terminate.
        \end{enumerate}
        \item Output $\bot$.
    \end{enumerate}
    Because of~\Cref{lem:tomography,lem:epsnet}, the time complexity of this algorithm is bounded by a polynomial in $1/\gamma,d,\lambda,T$ and $|\enet(d,\gamma)$. 
    By the condition, $|\enet(d,\gamma)|\le (C/\gamma)^{d^2} = (6C/\Delta)^{2^{2m}}=\poly(\lambda)$ and this implies that $1/\Delta=\poly(\lambda)$ and $d=2^m=\poly(\lambda)$. This proves that $\cA$ is a QPT algorithm.

    To prove the correctness, we first argue that $\cA$ outputs $\bot$ with a small probability. 
    Let $\epsilon_{bad} := \frac{\gamma}{|\enet(d,\gamma)|}$.
    Let $\cB$ be the set of elements $N \in \enet(d,\gamma)$ such that the following inequality holds:
    \begin{align}
        \Pr\left[
            \| N-M' \|_{tr} \le \gamma:
            k' \gets \KeyGen(1^\lambda), M' \gets \tomo(\phi_{k'}^{\otimes t},\gamma)
        \right] \le \epsilon_{bad}.
    \end{align}
    
    Let $\sf{bad}$ be the event that $\cN \cap \cB\neq \emptyset$.
    It holds that
    \begin{align}
        \Pr[\sf{bad}]=\Pr[\cN \cap \cB \neq \emptyset] 
        \le \sum_{N \in \cB} \Pr[N \in \cN]
        \le |\cB| \cdot \epsilon_{bad} \le \gamma,
    \end{align}
    where we have used $|\cB| \le |\enet(d,\gamma)|$ at the last inequality.

    Suppose that $\sf{bad}$ does not occur. Then, since $\cN$ is not empty, there exists an element $N \in \cN$ such that 
    \begin{align}
        \Pr\left[
            \| N-M' \|_{tr} \le \gamma:
            k' \gets \KeyGen(1^\lambda), M' \gets \tomo(\phi_{k'}^{\otimes t},\gamma)
        \right] > \epsilon_{bad},
    \end{align}
    which means that each iteration of the fourth step terminates with probability at least $\frac{\gamma}{|\enet(d,\gamma)|}$. 
    Then we have
    \begin{align}
        &
        \Pr\left[
            k'\neq \bot: k \gets \KeyGen(1^\lambda),k' \gets \cA(1^\secp,\phi_k^{\otimes t})
        \right] 
        \\
        &
        \ge\Pr\left[
            k'\neq \bot: k \gets \KeyGen(1^\lambda),k' \gets \cA(1^\secp,\phi_k^{\otimes t}) | \lnot {\sf{bad}}
        \right] \cdot \Pr[\lnot {\sf{bad}}]
        \\
        &\label{eq:indep}
        \ge \left(1- \left( 1 -  \epsilon_{bad}\right)^T\right) \cdot (1-\gamma)
        \ge 1-\gamma -\negl(\lambda),
    \end{align}
    where the first inequality in \Cref{eq:indep} holds because the iterations are independent, and the last inequality holds because of the choice of $T \ge {\log \lambda}/{\epsilon_{bad}}$.

    Finally, we prove that the output of $\cA$ is accepted by $\Ver$ with high probability whenever it is not $\bot$. If $\cA$ outputs $k_i$, $\| \phi_{k} - \phi_{k_i} \|_{tr} \le 4\gamma$ with probability $1-\negl(\lambda)$. 
    By the triangular inequality, we have
    \begin{align}
        \| \phi_{k} - \phi_{k_i} \|_{tr} \le 
        \|\phi_k -M \|_{tr} + \|M-N \|_{tr}  + \|N-M_i \|_{tr} + \|M_i-\phi_{k_i} \|_{tr},
    \end{align}
    all of which are bounded by $\gamma$ with probability $1-\negl(\lambda)$ because of~\Cref{lem:tomography} and the definition of $\enet(d,\gamma)$.
    This implies that
    \begin{align}\label{eq:weak_OWSG_correct}
        \Pr[\top \gets \Ver(k_i,\phi_k)] \ge \Pr[\top \gets \Ver(k,\phi_k)] - \| \phi_{k} - \phi_{k_i} \|_{tr} \ge 1- 4\gamma -\negl(\lambda)
    \end{align}
    holds with probability $1-\negl(\lambda)$,
    where the last inequality follows from the correctness of $(\KeyGen,\StateGen,\Ver)$.
    Using $\gamma=\Delta/6$, the advantage of $\cA$ is bounded as follows:
    \begin{align}
        \Pr[\cA\text{ wins}] &= \Pr[\top \gets \Ver(k',\phi_k): k\gets \KeyGen(1^\lambda), k' \gets \cA(1^\secp,\phi_k^{\otimes t})]\\
        &\ge (1- 4\gamma -\negl(\lambda)) \cdot (1-\gamma -\negl(\lambda)) \cdot (1-\negl(\lambda)\label{eqn:probs}\\
        &\ge 1-5\gamma -\negl(\lambda) \ge 1-\Delta
    \end{align}
    where the first term of \Cref{eqn:probs} is from~\Cref{eq:weak_OWSG_correct}, the second term is from~\Cref{eq:indep}, and the last term is the probability that \Cref{eq:weak_OWSG_correct} holds.
    Therefore, $\cA$ is a QPT algorithm breaking $(1-\Delta)$-OWSG of $(\KeyGen,\StateGen,\Ver)$.
\end{proof}
\Cref{thm:intro_impossibility_constant_weak_OWSG,thm:intro_impossibility_loglog_constant_OWSG} follow from \Cref{thm:weak_OWSG_attack} 
as immediate corollaries. 

\begin{corollary}[Restatement of \Cref{thm:intro_impossibility_loglog_constant_OWSG}]
    There do not exist $\epsilon$-OWSGs with $m$-qubit outputs for any constant $\epsilon>0$ and $m=\frac{1}{2}\log \log \lambda + O(1)$. 
\end{corollary}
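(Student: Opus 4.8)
The plan is to obtain this statement directly from \Cref{thm:weak_OWSG_attack} by instantiating the parameter $\Delta$ appropriately. Since $\epsilon$ is a fixed constant (and for $\epsilon \ge 1$ the claim is vacuous, as the $\epsilon$-one-wayness condition is then trivially satisfied), I treat $\epsilon \in (0,1)$ and set $\Delta := 1-\epsilon$, which is a positive constant. With this choice, an $\epsilon$-OWSG is precisely a $(1-\Delta)$-OWSG, so it suffices to verify the hypothesis $(6C/\Delta)^{2^{2m}} = \poly(\lambda)$ of \Cref{thm:weak_OWSG_attack}, where $C$ is the constant from \Cref{lem:epsnet}, and then invoke that theorem.

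First I would bound the exponent $2^{2m}$. Writing $m = \frac{1}{2}\log\log\lambda + a$ for a constant $a$, we have $2m = \log\log\lambda + 2a$, and since the logarithms are base two,
\begin{align}
2^{2m} = 2^{2a}\cdot 2^{\log\log\lambda} = 2^{2a}\cdot \log\lambda = O(\log\lambda).
\end{align}
Next, because $\Delta = 1-\epsilon$ is a constant, the base $B := 6C/\Delta$ is itself a fixed constant, so
\begin{align}
(6C/\Delta)^{2^{2m}} = B^{O(\log\lambda)} = 2^{O(\log B \cdot \log\lambda)} = 2^{O(\log\lambda)} = \poly(\lambda).
\end{align}
Thus the hypothesis of \Cref{thm:weak_OWSG_attack} holds, and the theorem immediately rules out $(1-\Delta)$-OWSGs, equivalently $\epsilon$-OWSGs, with $m$-qubit outputs.

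There is essentially no obstacle here, as the corollary is a direct specialization of \Cref{thm:weak_OWSG_attack}. The only points requiring mild care are (i) confirming that the additive $O(1)$ term in $m$ influences $2^{2m}$ only through a constant multiplicative factor, so that $2^{2m}$ remains $O(\log\lambda)$, and (ii) noting that a constant base raised to the power $O(\log\lambda)$ stays polynomial in $\lambda$. Both are routine, which is exactly why the statement follows as an immediate corollary.
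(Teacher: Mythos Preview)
Your proposal is correct and follows essentially the same approach as the paper: set $\Delta=1-\epsilon$, observe that $2^{2m}=O(\log\lambda)$ when $m=\frac{1}{2}\log\log\lambda+O(1)$, and conclude that $(6C/\Delta)^{2^{2m}}=\poly(\lambda)$ so that \Cref{thm:weak_OWSG_attack} applies. Your treatment of the edge case $\epsilon\ge 1$ is a small extra clarification not made explicit in the paper.
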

\begin{proof}
    If the output length $m$ is $m=\frac{1}{2}\log\log \lambda+O(1)$ so that $2^{2m} = O(\log \lambda),$ 
    then for any constant $\epsilon>0$, $(6C/(1-\epsilon))^{2^{2m}}=(6C/(1-\epsilon))^{O(\log \lambda)} = \poly(\lambda)$ and we can apply~\Cref{thm:weak_OWSG_attack}. 
\end{proof}
\begin{remark}
    For the pure case, we can prove that there do not exist $\epsilon$-OWSGs with $(\log\log \lambda + O(1))$-qubit outputs for any constant $\epsilon>0$ using the pure case of~\Cref{lem:epsnet}. 
\end{remark}
\begin{corollary}[Restatement of \Cref{thm:intro_impossibility_constant_weak_OWSG}]
    There do not exist weak OWSGs with any constant output length.
\end{corollary}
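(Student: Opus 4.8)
The plan is to derive the statement directly from \Cref{thm:weak_OWSG_attack} by an appropriate choice of the parameter $\Delta$. Recall from \Cref{def:OWSG} that a weak OWSG is, by definition, a $(1-\secp^{-c})$-OWSG for some constant $c\in\mathbb{N}$. Accordingly, suppose such a scheme has constant output length $m=O(1)$; I would set $\Delta=\Delta(\secp):=\secp^{-c}$, so that the scheme is exactly a $(1-\Delta)$-OWSG with $m$-qubit outputs, and the only thing that remains is to verify the hypothesis of \Cref{thm:weak_OWSG_attack} for this pair $(\Delta,m)$.

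The key (and only substantive) step is checking that $(6C/\Delta)^{2^{2m}}=\poly(\secp)$, where $C$ is the constant from \Cref{lem:epsnet}. Substituting $\Delta=\secp^{-c}$ gives
\begin{align}
(6C/\Delta)^{2^{2m}}=\left(6C\secp^{c}\right)^{2^{2m}}.
\end{align}
Since $m$ is a constant, the exponent $2^{2m}$ is a fixed constant, and a constant power of a fixed polynomial in $\secp$ is again a polynomial in $\secp$; hence $\left(6C\secp^{c}\right)^{2^{2m}}=\poly(\secp)$, as required. Invoking \Cref{thm:weak_OWSG_attack} then immediately yields that no $(1-\Delta)$-OWSG with $m$-qubit outputs exists, i.e., there is no weak OWSG with constant output length.

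I do not expect any genuine obstacle here: this is the mirror image of the previous corollary (the $\tfrac12\log\log\secp$ bound), where $\gamma$ was taken constant and the net cardinality stayed polynomial because $2^{2m}=O(\log\secp)$. In the present case the roles are swapped, $\Delta$ (and hence $\gamma=\Delta/6$) is inverse-polynomial while $m$ is constant, and the net cardinality $(C/\gamma)^{2^{2m}}$ again remains polynomial because raising an inverse-polynomial quantity to a constant power keeps it polynomial. Thus the tomography-plus-$\gamma$-net attack of \Cref{thm:weak_OWSG_attack} runs in quantum polynomial time and breaks $(1-\Delta)$-one-wayness, which is all that is needed.
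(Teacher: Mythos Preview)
Your proposal is correct and essentially identical to the paper's own proof: both set $\Delta=\secp^{-c}$ (the paper writes $\epsilon$ for this quantity), observe that for constant $m$ the exponent $2^{2m}$ is a fixed constant so $(6C/\Delta)^{2^{2m}}=(6C\secp^{c})^{2^{2m}}=\poly(\secp)$, and then invoke \Cref{thm:weak_OWSG_attack}.
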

\begin{proof}
        Consider the security of weak OWSGs with the constant output length $m$, which is $(1-\epsilon)$-secure for some inverse polynomial $\epsilon$. Since $m$ is constant, $(6C/\epsilon)^{2^{2m}}=\poly(\lambda)$, thus \Cref{thm:weak_OWSG_attack} applies.
\end{proof}

\section{Upper Bounds for ($\epsilon$-)OWSGs}\label{sec:upper_bound}
In this section, we prove upper bounds for output length of ($\epsilon$-)OWSGs (\Cref{thm:intro_possibility_inverse-poly_OWSG,thm:intro_possibility_Ologlog_constant_OWSG,thm:intro_possibility_superconstant_weak_OWSG}).

\subsection{Upper Bounds for OWSGs and $\secp^{-\Theta(1)}$-OWSGs}
We first observe that the combination of previous results already 
gives OWSGs with slightly super-logarithmic-qubit outputs. 
\begin{theorem}[\cite{C:BraShm20,CGGHLP23}]\label{thm:OWSG_with_slightly_poly}
If there exist OWFs, then there exist OWSGs with $m(\secp)$-qubit pure state outputs for any QPT-computable function $m(\secp)=\omega(\log\secp)$.
\end{theorem}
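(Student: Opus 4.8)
The plan is to compose two existing results rather than to prove anything from scratch. First I would invoke the construction of Brakerski and Shmueli~\cite{C:BraShm20}, which turns any OWF into a PRSG $(\KeyGen,\StateGen)$ whose output length may be set to any QPT-computable function and whose outputs $\ket{\phi_k}$ are \emph{pure}; instantiating it with the target $m(\secp)=\omega(\log\secp)$ gives a PRSG with $m(\secp)$-qubit pure-state outputs.

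Next I would equip this PRSG with a verification algorithm to match the syntax of an OWSG. Following the footnote after \Cref{def:OWSG}, I define $\Ver(k',\phi_k)$ to run $\StateGen(k')$ to obtain $\ket{\phi_{k'}}$ and then apply the two-outcome projective measurement $\{\ket{\phi_{k'}}\!\bra{\phi_{k'}},\, I-\ket{\phi_{k'}}\!\bra{\phi_{k'}}\}$ to the target register, outputting $\top$ iff the first outcome occurs. Correctness is immediate: on input $(k,\phi_k)$ the success probability is $\abs{\langle\phi_k|\phi_k\rangle}^2$, which is overwhelmingly close to $1$.

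For one-wayness I would reduce to pseudorandomness. Suppose a QPT adversary $\cA$ given $t=\poly(\secp)$ copies of $\ket{\phi_k}$ outputs $k'$ with $\Ver(k',\phi_k)=\top$ with non-negligible probability $\delta$. I would have $\Ver$ consume one \emph{fresh} copy of the target state, so that, conditioned on the classical output $k'$, the final projection acts on a copy independent of those seen by $\cA$. Now replace all copies of $\ket{\phi_k}$ by copies of a Haar-random $\ket{\psi}\gets\mu_m$. In this hybrid the acceptance probability is $\Exp_{\psi,k'}\!\left[\abs{\langle\psi|\phi_{k'}\rangle}^2\right]$, and since a Haar-random state has expected squared overlap $2^{-m}$ with any fixed state, this is at most $2^{-m}=2^{-\omega(\log\secp)}=\negl(\secp)$. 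The composite procedure ``run $\cA$, then run $\Ver$'' is a QPT distinguisher consuming $\poly(\secp)$ copies, so its advantage $\delta-\negl(\secp)$ contradicts the pseudorandomness of the PRSG.

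The hard part will be handling the correlation between $\cA$'s output $k'$ and the copies that $\Ver$ measures: if $\Ver$ reused a copy already entangled with $\cA$'s workspace, the clean single-copy overlap bound would fail. Reserving one independent copy for the verification projection, as above, is what makes the Haar-hybrid acceptance probability factor through the scalar $\abs{\langle\psi|\phi_{k'}\rangle}^2$ and lets $m=\omega(\log\secp)$ force it to be negligible. Beyond this bookkeeping the argument is exactly the composition of~\cite{C:BraShm20} and~\cite{CGGHLP23}, so no new technical ingredient is required.
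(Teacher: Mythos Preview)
Your high-level plan matches the paper exactly: invoke \cite{C:BraShm20} to get a PRSG of the desired output length, then cite \cite{CGGHLP23} (their Theorem~3.4, restated here as \Cref{lem:PRSGtoOWSG}) to conclude it is a OWSG. The paper's own proof is literally those two sentences.

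However, your re-derivation of the PRSG-to-OWSG step contains a genuine gap. In the Haar hybrid you write the acceptance probability as $\Exp_{\psi,k'}\!\left[\abs{\langle\psi|\phi_{k'}\rangle}^2\right]$ and then bound it by $2^{-m}$ using ``a Haar-random state has expected squared overlap $2^{-m}$ with any fixed state.'' But $\ket{\phi_{k'}}$ is \emph{not} fixed relative to $\ket{\psi}$: the adversary's output $k'$ is a function of $\ket{\psi}^{\otimes t}$, so $\ket{\phi_{k'}}$ is correlated with $\ket{\psi}$. Reserving a fresh copy for $\Ver$ only decouples the verification register from $\cA$'s \emph{workspace}; it does nothing about the fact that $k'$ was chosen after seeing $\ket{\psi}$. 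In principle $\cA$ could use its copies to search over all $2^n$ keys for one whose $\ket{\phi_{k'}}$ is close to $\ket{\psi}$, and then the expected overlap can be far larger than $2^{-m}$.

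The fix is the one actually used in \cite{CGGHLP23} and recalled in \Cref{lem:PRSGtoOWSG,lem:Haar_concentration}: for any fixed state, $\Pr_{\psi\gets\mu_m}\!\left[\abs{\langle\psi|\phi_{k'}\rangle}^2\ge h\right]=(1-h)^{2^m-1}$, so by a union bound over the $2^n$ keys, $\Pr_\psi\!\left[\exists k':\abs{\langle\psi|\phi_{k'}\rangle}^2\ge h\right]\le 2^n(1-h)^{2^m-1}$. With $m=\omega(\log\secp)$ and $n=\poly(\secp)$ one can choose $h=\negl(\secp)$ so that this probability is still negligible; conditioned on the complement, \emph{every} $k'$ gives acceptance at most $h$, regardless of how $\cA$ chose it. That union-bound step, not the single-state expectation, is what makes $m=\omega(\log\secp)$ suffice.
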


\begin{proof}
According to \cite{C:BraShm20}, if there exist OWFs, then there exist PRSGs with input length $\secp$ and any output length $m$. 
From \cite[Theorem 3.4]{CGGHLP23}, these PRSGs with $m=\omega(\log \secp)$ are also
OWSGs.
\end{proof}

We show that the result of \cite{CGGHLP23} can be adapted to construct $1/\poly(\secp)$-OWSGs with $O(\log \secp)$-qubit outputs. 
We recall the following lemma from~\cite[Lemma 3.2]{CGGHLP23} with a small modification.
\begin{lemma}\label{lem:PRSGtoOWSG}
    Let $(\KeyGen,\StateGen)$ be an $\epsilon$-PRSG, where $\StateGen$ maps $n$-bit strings to $m$-qubit pure states. 
    Let $h \in [0,1]$ and $\delta= 2^n \cdot \left(1-h\right)^{2^m-1} + h$.
    Define a QPT algorithm $\Ver(k',\ket{\phi_k}) \to \top/\bot$ to be the projection of $\ket{\phi_k}$ onto $\ket{\phi_{k'}}$ with the output $\top$ only when the projection is successful. 
    Then $(\KeyGen,\StateGen,\Ver)$ is an $(\epsilon+\delta)$-OWSG.
\end{lemma}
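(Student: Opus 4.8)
The plan is to verify the two defining properties of an $(\epsilon+\delta)$-OWSG: correctness is immediate, and $(\epsilon+\delta)$-one-wayness follows from a single hybrid step that replaces the PRSG outputs by Haar-random states, combined with a Haar-measure tail bound. For correctness, I would simply note that since $\ket{\phi_k}$ is pure, projecting $\ket{\phi_k}$ onto itself succeeds with probability $|\langle \phi_k|\phi_k\rangle|^2=1$, so $\Ver(k,\ket{\phi_k})$ outputs $\top$ with probability $1$ and the correctness requirement holds with zero error.

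For one-wayness, fix an arbitrary QPT adversary $\cA$ and polynomial $t$. Because $\Ver$ is exactly the projection onto $\ket{\phi_{k'}}=\StateGen(k')$, its acceptance probability is the squared overlap, so the adversary's winning probability in the real game equals $p:=\mathbb{E}[\,|\langle \phi_{k'}|\phi_k\rangle|^2\,]$, where $k\gets\KeyGen(1^\secp)$ and $k'\gets\cA(1^\secp,\ket{\phi_k}^{\otimes t})$. I would then consider the ideal game in which the $t$ copies given to $\cA$ and the single copy consumed by $\Ver$ are all replaced by copies of a Haar-random $m$-qubit state $\ket{\psi}$; in this game $k'$ still names a fixed state $\ket{\phi_{k'}}$, and the winning probability is $q:=\mathbb{E}_{\psi,\cA}[\,|\langle \phi_{k'}|\psi\rangle|^2\,]$.

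To bound $q$, I would invoke the standard fact that for fixed $\ket{\phi}$ and Haar-random $\ket{\psi}$ in dimension $d=2^m$ one has $\Pr_\psi[\,|\langle \phi|\psi\rangle|^2\ge h\,]=(1-h)^{2^m-1}$. A union bound over the at most $2^n$ states $\{\ket{\phi_{k'}}\}_{k'\in\bit^n}$ shows that, except with probability at most $2^n(1-h)^{2^m-1}$ over $\psi$, every candidate overlap is strictly below $h$. Splitting the expectation defining $q$ on this event (where the best achievable value is $<h$) and its complement (bounded by $1$) yields $q\le 2^n(1-h)^{2^m-1}+h=\delta$, regardless of $\cA$.

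Finally, I would connect $p$ and $q$ through $\epsilon$-pseudorandomness. The reduction is a distinguisher that, given $t+1$ copies of a challenge state, runs $\cA$ on $t$ of them to obtain $k'$, applies the projection onto $\StateGen(k')$ to the remaining copy, and outputs $1$ iff the projection succeeds: on PRSG inputs it outputs $1$ with probability exactly $p$, and on Haar inputs with probability exactly $q$. Since $t+1$ is still polynomial, $\epsilon$-pseudorandomness gives $|p-q|\le\epsilon$, hence $p\le q+\epsilon\le\delta+\epsilon$, as claimed. I expect the main obstacle to be the bookkeeping in this reduction, namely arguing that the distinguisher faithfully reproduces $\Ver$ using one fresh challenge copy (so its acceptance probability is genuinely the squared overlap) while keeping its total copy count polynomial; the remaining ingredients are either trivial (correctness) or a routine Haar integral.
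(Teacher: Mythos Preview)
Your proposal is correct and follows essentially the same approach as the paper: the paper's proof sketch refers back to \cite[Lemma 3.2]{CGGHLP23} and only swaps in the exact Haar tail bound $\Pr_{\psi}[|\langle\phi_0|\psi\rangle|^2\ge h]=(1-h)^{2^m-1}$ (your ``standard fact''), and the argument you spell out---hybrid to Haar via a $(t+1)$-copy distinguisher, then union bound over the $2^n$ candidate states $\ket{\phi_{k'}}$---is precisely that proof.
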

\begin{proof}[Proof sketch of \Cref{lem:PRSGtoOWSG}]
    Most parts of the proof are identical to the original one.
    When proving the security, we recall the following concentration inequality for Haar random states instead of~\cite[Lemma 3.1]{CGGHLP23}. 
    \begin{lemma}\label{lem:Haar_concentration}
        For any $h \in [0,1]$ and for any $m$-qubit quantum state $\phi_0$, it holds that for the Haar measure on $m$-qubit quantum states $\mu_m$,
        \[
            \Pr_{\ket{\psi} \gets \mu_m} \left[ 
                |\braket{\psi|\phi_0}|^2 \ge h 
            \right] 
            = (1-h)^{2^m-1}.
        \]
    \end{lemma}
    The proof of \Cref{lem:Haar_concentration} can be found in the proof of~\cite[Lemma 3.6]{AK07}. Using \Cref{lem:Haar_concentration} instead of~\cite[Lemma 3.1]{CGGHLP23} gives the choice $\delta=2^n \cdot (1-h)^{2^m-1}+h$, where $h$ has a similar role to $1/f(n)$ in the original proof.
\end{proof}

\begin{theorem}[Restatement of \Cref{thm:intro_possibility_inverse-poly_OWSG}]\label{thm:possibility_inverse-poly_OWSG}
   For any constant $c\in \mathbb{N}$, there exist $\lambda^{-c}$-OWSGs with $(c+1) \log \lambda+O(1)$-qubit pure state outputs assuming the existence of OWFs.
\end{theorem}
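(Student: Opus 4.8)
The plan is to instantiate \Cref{lem:PRSGtoOWSG} with a standard PRSG of a suitably chosen output length and a carefully tuned threshold $h$. First I would invoke \cite{C:BraShm20} (exactly as in the proof of \Cref{thm:OWSG_with_slightly_poly}) to obtain, from any OWF, a standard (i.e.\ $\negl(\secp)$-secure) PRSG $(\KeyGen,\StateGen)$ whose $\StateGen$ maps $n$-bit keys with $n=\secp$ to $m$-qubit pure states, for any QPT-computable output length $m=m(\secp)$. I would then set $m:=(c+1)\log\secp+C_0$ for a constant $C_0$ to be fixed below, define $\Ver$ to be the projection of $\ket{\phi_k}$ onto $\ket{\phi_{k'}}$ as in \Cref{lem:PRSGtoOWSG}, and apply that lemma. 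This immediately gives correctness together with $(\epsilon+\delta)$-one-wayness, where $\epsilon=\negl(\secp)$ and $\delta=2^{n}(1-h)^{2^m-1}+h$, so the whole argument reduces to choosing $h$ (and $C_0$) so that $\delta\le\secp^{-c}/2$; then $\epsilon+\delta\le\secp^{-c}$ for all sufficiently large $\secp$, which is precisely the $\lambda^{-c}$-one-wayness we want.

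Next I would bound the two summands of $\delta$ separately, aiming for $h\le\secp^{-c}/4$ and $2^{n}(1-h)^{2^m-1}\le\secp^{-c}/4$. Taking $h:=\secp^{-c}/4$ handles the first term trivially. For the second, the estimate $1-h\le e^{-h}$ yields $2^{n}(1-h)^{2^m-1}\le 2^{\secp}e^{-h(2^m-1)}$, so it suffices to ensure $h(2^m-1)\ge\secp\ln 2+c\ln\secp+\ln 4$. With $n=\secp$ and $h=\secp^{-c}/4$ this holds once $2^m-1\ge 4\secp^{c}\bigl(\secp\ln 2+c\ln\secp+\ln 4\bigr)$, whose right-hand side is $\Theta(\secp^{c+1})$. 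Hence $m=(c+1)\log\secp+O(1)$ suffices, and one checks that a constant $C_0$ can be chosen so that $m=(c+1)\log\secp+C_0$ makes $2^m$ exceed this threshold for all large $\secp$. This matches the claimed output length, and QPT-computability of $m$ is clear.

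I expect the only real work to be the elementary estimate above, namely verifying that the $2^{n}$ factor (coming from the $2^{\secp}$ possible PRSG keys) can be beaten down by the $(1-h)^{2^m-1}$ factor at the cost of only an additive $\log\secp$ in the exponent $m$. This is exactly where the extra ``$+1$'' in $(c+1)\log\secp$ originates, while the remaining $c\log\secp$ comes from $1/h=\Theta(\secp^{c})$; there is no conceptual obstacle, since correctness is immediate (projecting $\ket{\phi_k}$ onto itself succeeds with probability $1$) and the security degradation is entirely captured by $\delta$. The final step is simply to record that $\secp=\lambda$, so the resulting scheme is a $\lambda^{-c}$-OWSG with $(c+1)\log\lambda+O(1)$-qubit pure-state outputs, as desired.
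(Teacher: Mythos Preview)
Your proposal is correct and follows essentially the same approach as the paper: invoke \cite{C:BraShm20} to get a PRSG with key length $n=\secp$ and arbitrary output length $m$, apply \Cref{lem:PRSGtoOWSG} with $h=\secp^{-c}/4$, and use $1-h\le e^{-h}$ to verify that $m=(c+1)\log\secp+O(1)$ makes $2^{n}(1-h)^{2^m-1}\le\secp^{-c}/4$, hence $\delta\le\secp^{-c}/2$. The paper's proof is the same computation, only with the inequality solved for $m$ in closed form rather than via your two-step bound.
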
 
\begin{proof}
    By~\cite{C:BraShm20}, there exist PRSGs with input length $\lambda$ and any output length $m$ assuming the existence of OWFs.
    We apply~\Cref{lem:PRSGtoOWSG} with $h=\lambda^{-c}/4$ and
    \begin{align}
        m\ge 
        \log \left(
            \frac{\log_e\left(
                2^{\lambda+2}\lambda^c
            \right)}{h}+1
        \right)~\Longrightarrow~2^n \cdot (1-h)^{2^m-1} \le \lambda^{-c}/4,
    \end{align}
    which gives $\delta=\lambda^{-c}/2$.
    To prove the asymptotic bound of $m(\lambda)$, we choose a constant $K$ such that $2^{\lambda+2}\lambda^c \le e^\lambda$ for all $\lambda\ge K$. In this case, the minimal $m$ satisfies
    \begin{align}
        m \ge \log \left(
            4 \lambda^{c+1} + 1
        \right)
        \ge (c+1) \log \lambda + 3,
    \end{align}
    thus we can choose $m(\lambda)=(c+1)\log \lambda +O(1)$.
\end{proof}

\subsection{Upper Bound for $\Theta(1)$-OWSGs}
For proving \Cref{thm:intro_possibility_Ologlog_constant_OWSG}, 
we prove the following theorem.  
\begin{theorem}
\label{thm:upperbound}
If there exist OWFs that map $\secp$-bit inputs to $\ell(\secp)$-bit outputs,   
then there exist $(\eta(\secp)^2+\negl(\secp))$-OWSGs with $O(\log(\ell(\secp))\log(\eta(\secp)^{-1}))$-qubit pure state outputs
for any QPT-computable function $\eta(\secp)$. 
\end{theorem}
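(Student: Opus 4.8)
The plan is to compose the given OWF $f:\bit^\secp \to \bit^{\ell(\secp)}$ with the quantum fingerprinting scheme of \Cref{lem:finger}, instantiated with error parameter $\eta=\eta(\secp)$. Concretely, I would define $\KeyGen(1^\secp)$ to sample $k\gets \bit^\secp$; define $\StateGen(k)$ to output the $O(\log(\ell)\log(\eta^{-1}))$-qubit fingerprint $\ket{h_{f(k)}}$; and define $\Ver(k',\ket{\phi})$ to classically recompute $y'=f(k')$, prepare $\ket{h_{y'}}$ via the fingerprinting unitary $U_{y'}$ (so that $\ket{h_{y'}}=U_{y'}\ket{0}$), apply $U_{y'}^\dagger$ to the received state $\ket{\phi}$, and output $\top$ iff measuring in the computational basis yields the all-zero string. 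This makes $\Ver(k',\ket{\phi})$ accept with probability exactly $|\langle h_{f(k')}|\phi\rangle|^2$.

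For correctness and efficiency: when $k'=k$ we have $f(k')=f(k)$ and hence $\ket{h_{f(k')}}=\ket{h_{f(k)}}$, so the honest verifier accepts with probability $1$. Efficiency of all three algorithms follows from $f$ being classical polynomial-time together with the fingerprinting algorithm of \Cref{lem:finger} running in time $\poly(\ell)\log(\eta^{-1})$, which is $\poly(\secp)$ since $\ell=\poly(\secp)$ and $\eta$ is QPT-computable (so $\log(\eta^{-1})=\poly(\secp)$); the output length bound $O(\log(\ell)\log(\eta^{-1}))$ is immediate from \Cref{lem:finger}.

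The heart of the argument is the $(\eta^2+\negl)$-one-wayness, which I would prove by a reduction to the one-wayness of $f$. Fix any QPT adversary $\cA$ and polynomial $t$; the acceptance probability is $\mathbb{E}_{k,k'}\big[|\langle h_{f(k')}|h_{f(k)}\rangle|^2\big]$ where $k'\gets\cA(1^\secp,\ket{h_{f(k)}}^{\otimes t})$. I would split this expectation according to whether $f(k')=f(k)$. On the event $f(k')=f(k)$ the overlap is $1$, but this event is exactly the event that $\cA$ produces a preimage; on the event $f(k')\neq f(k)$ the fingerprinting guarantee $|\langle h_y|h_{y'}\rangle|\le\eta$ for $y\neq y'$ bounds the squared overlap by $\eta^2$. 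Hence
\begin{align*}
\Pr[\cA\text{ wins}]\le \Pr[f(k')=f(k)]+\eta^2.
\end{align*}
It remains to bound $\Pr[f(k')=f(k)]$. Here I would build an inverter $\cB$ for $f$: on input $y=f(k)$, $\cB$ itself runs the fingerprinting algorithm to prepare $\ket{h_y}^{\otimes t}$, feeds this to $\cA$, and outputs the resulting $k'$. Since $\ket{h_y}^{\otimes t}$ is exactly the input distribution $\cA$ expects, $\cB$ inverts $f$ with probability $\Pr[f(k')=f(k)]$, so one-wayness of $f$ gives $\Pr[f(k')=f(k)]\le\negl(\secp)$, and therefore $\Pr[\cA\text{ wins}]\le \eta^2+\negl(\secp)$, as desired.

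The main obstacle is conceptual rather than computational: one must observe that having $\Ver$ recompute $f(k')$ cleanly separates the two failure modes, so that any accepting $k'$ is either a genuine preimage (contradicting the one-wayness of $f$) or induces a fingerprint collision of distinct values (forbidden up to $\eta$ by \Cref{lem:finger}). A minor point to handle carefully is that $\cA$ only receives fingerprint copies rather than $y$ itself, which is precisely why the reduction $\cB$ must regenerate the fingerprints internally; this is possible exactly because the fingerprinting map is efficiently computable from $y$.
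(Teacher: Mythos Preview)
Your proposal is correct and follows essentially the same approach as the paper: compose $f$ with the quantum fingerprinting of \Cref{lem:finger}, define $\Ver$ as projection onto $\ket{h_{f(k')}}$, split the acceptance probability according to whether $f(k')=f(k)$, and reduce the ``equal'' case to the one-wayness of $f$ via an inverter that regenerates the fingerprint states from $y=f(k)$. The only cosmetic difference is that you spell out $\Ver$ via $U_{y'}^\dagger$ and a computational-basis measurement, whereas the paper simply says ``project onto $\ket{\phi_{k'}}$''; these are the same.
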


\begin{proof}
For notational simplicity, we simply write $\ell$ and $\eta$ to mean  $\ell(\secp)$ and $\eta(\secp)$, respecitively. 
Let 
$f$ be a OWF that maps $\secp$-bit inputs to $\ell$-bit outputs. 
We construct an $(\eta^2+\negl(\secp))$-OWSG as follows:
\begin{itemize}
\item
$\KeyGen(1^\secp)\to k:$
Sample $k\gets\bit^\secp$.
    \item 
    $\StateGen(k)\to \ket{\phi_k}:$ 
    Output $\ket{\phi_k}\coloneqq\ket{h_{f(k)}}$, where $\ket{h_{f(k)}}$ is the quantum fingerprinting state of $f(k)$ given in \cref{lem:finger}. 
    \item
    $\Ver(k',\ket{\phi_k})\to\top/\bot:$ 
    Project $\ket{\phi_k}$ onto $\ket{\phi_{k'}}$.
    If the projection is successful, output $\top$. Otherwise, output $\bot$. 
\end{itemize}
Let $\cA$ be a QPT adversary of this OWSG that receives $t$ copies of $\ket{\phi_k}$.
Then, as we will show below, there exists a QPT adversary $\cB$ of the OWF $f$ that satisfies\footnote{We say that $\cA$ wins if $\Ver(k',\ket{\phi_k})\rightarrow \top$ where 
$k\gets \KeyGen(1^\secp)$ and $k'\gets \cA(1^\secp,\ket{\phi_k}^{\otimes t})$.
Similarly, we say that $\cB$ wins if $f(x')=f(x)$ where $x\gets \bit^\secp$ and $x'\gets \cB(1^\secp,f(x))$. 
} 
\begin{align}
\Pr[\cA~\mbox{wins}]\le \Pr[\cB~\mbox{wins}]+\eta^2.    
\label{relation}
\end{align}
Because $f$ is a OWF, 
we have
\begin{align}
\Pr[\cA~\mbox{wins}]\le \negl(\secp)+\eta^2,    
\end{align}
which shows that the OWSG is $(\eta^2+\negl(\secp))$-secure.
From \cref{lem:finger},
the output length of the OWSG is at most $O(\log(\ell)\log(\eta^{-1}))$.

Now we show \cref{relation}.
Let $\cA$ be a QPT adversary of the OWSG.
From this $\cA$, we construct an adversary $\cB$ of the OWF $f$ as follows:
\begin{enumerate}
    \item 
    Given $(1^\secp,f(k))$, 
    run $k'\gets\cA(1^\secp,\ket{h_{f(k)}}^{\otimes t})$.
    \item 
    Output $k'$.
\end{enumerate}
From \cref{lem:finger}, $|\langle h_{f(k)}|h_{f(k')}\rangle|\le\eta$ if $f(k)\neq f(k')$.
Therefore we have
\begin{align}
\Pr[\cA~\mbox{wins}]
&=\frac{1}{2^\secp}
\sum_{k}
\sum_{k'}
\Pr[k'\gets\cA(1^\secp,\ket{h_{f(k)}}^{\otimes t})]
|\langle h_{f(k)}|h_{f(k')}\rangle|^2\\
&=\frac{1}{2^\secp}
\sum_{k}
\sum_{k':f(k')=f(k)}
\Pr[k'\gets\cA(1^\secp,\ket{h_{f(k)}}^{\otimes t})]
|\langle h_{f(k)}|h_{f(k')}\rangle|^2\\
&+\frac{1}{2^\secp}
\sum_{k}
\sum_{k':f(k')\neq f(k)}
\Pr[k'\gets\cA(1^\secp,\ket{h_{f(k)}}^{\otimes t})]
|\langle h_{f(k)}|h_{f(k')}\rangle|^2\\
&\le\frac{1}{2^\secp}
\sum_{k}
\sum_{k':f(k')=f(k)}
\Pr[k'\gets\cA(1^\secp,\ket{h_{f(k)}}^{\otimes t})]\\
&+\frac{1}{2^\secp}
\sum_{k}
\sum_{k':f(k')\neq f(k)}
\Pr[k'\gets\cA(1^\secp,\ket{h_{f(k)}}^{\otimes t})]
\eta^2\\
&\le\frac{1}{2^\secp}
\sum_{k}
\sum_{k':f(k')=f(k)}
\Pr[k'\gets\cA(1^\secp,\ket{h_{f(k)}}^{\otimes t})]
+\eta^2\\
&\le\Pr[\cB~\mbox{wins}]
+\eta^2.
\end{align}
\end{proof}

\if0
\begin{corollary}
If there exist OWFs, there are OWSGs with $m$-qubit outputs for any $m=\omega(\log \secp)$.
\end{corollary}
\begin{proof}
If we take $\ell=\secp^c$ with some constant $c\ge1$ and
$\eta=$ in \cref{thm:upperbound}, we have
\end{proof}
\fi

\Cref{thm:intro_possibility_Ologlog_constant_OWSG} follows from 
\Cref{thm:upperbound} as an immediate corollary. 
\begin{corollary}[Restatement of \Cref{thm:intro_possibility_Ologlog_constant_OWSG}]\label{cor:possibility_Ologlog_constant_OWSG}
    For any constant $\epsilon>0$, there exist $\epsilon$-OWSGs with $O(\log\log \secp)$-qubit pure state outputs assuming the existence of subexponentially secure OWFs.   
\end{corollary}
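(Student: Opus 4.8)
The plan is to derive the corollary directly from \Cref{thm:upperbound} by making two choices: fixing the fingerprinting error $\eta$ to be a constant, and instantiating the underlying OWF so that its output length $\ell(\secp)$ is only polylogarithmic in $\secp$. Since we target a \emph{constant} advantage $\epsilon>0$, I would set $\eta:=\sqrt{\epsilon/2}$, which is a constant (hence trivially QPT-computable) with $\eta^2=\epsilon/2<\epsilon$. With this choice $\log(\eta^{-1})=O(1)$, so the output length promised by \Cref{thm:upperbound}, namely $O(\log(\ell(\secp))\log(\eta^{-1}))$, collapses to $O(\log \ell(\secp))$, while the resulting one-wayness error is $\eta^2+\negl(\secp)=\epsilon/2+\negl(\secp)\le \epsilon$ for all sufficiently large $\secp$.

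The substantive ingredient is to supply a OWF whose output length is $\ell(\secp)=\polylog(\secp)$, for then $\log\ell(\secp)=O(\log\log\secp)$ and the output length becomes $O(\log\log\secp)$ as claimed. This is where the subexponential-security hypothesis enters: starting from a subexponentially secure OWF $f$ with security constant $c$, I would define a new OWF with security parameter $\secp$ that runs $f$ on inputs of the polylogarithmic length $n(\secp):=(\log\secp)^{2/c}$, so that its output length is $\poly(n(\secp))=\polylog(\secp)$. The point to verify is that one-wayness survives against every QPT (i.e.\ $\poly(\secp)$-time) adversary: viewed as an inverter for $f$ on $n(\secp)$-bit inputs, such an adversary runs in time $\poly(\secp)=2^{O(\log\secp)}=2^{O(n(\secp)^{c/2})}$, which is subexponential in $n(\secp)$, and its success probability is therefore bounded by the subexponential-security guarantee $2^{-n(\secp)^{c}}=2^{-(\log\secp)^{2}}=\negl(\secp)$. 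I expect this scaling (complexity-leveraging) step to be the main obstacle, since it is the only place requiring genuine care: the time budget and the advantage must be balanced so that a polynomial-in-$\secp$ adversary still falls inside the subexponential security window of $f$. As a sanity check, since $0<c<1$ we have $2/c>1$, so $n(\secp)=(\log\secp)^{2/c}=\omega(\log\secp)$ and $\ell(\secp)=\polylog(\secp)=\omega(\log\secp)$; hence the short-output OWF is not ruled out by the (standard) impossibility of OWFs with $O(\log\secp)$-bit outputs.

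Finally, I would combine the two pieces: feeding the $\polylog(\secp)$-output OWF of the previous paragraph into \Cref{thm:upperbound} with the constant $\eta=\sqrt{\epsilon/2}$ yields an $(\eta^2+\negl(\secp))$-OWSG, hence an $\epsilon$-OWSG, whose pure-state outputs have length $O(\log(\polylog(\secp)))=O(\log\log\secp)$. This is exactly the statement of the corollary.
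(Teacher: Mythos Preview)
Your proposal is correct and follows essentially the same approach as the paper: both choose $\eta=\sqrt{\epsilon/2}$ and build a OWF with $\polylog(\secp)$-bit output by running a subexponentially secure OWF on inputs of length roughly $(\log\secp)^{2/c}$, then invoke \Cref{thm:upperbound}. The paper phrases the short-output OWF as $g(x):=f(x_{[\ell]})$ with $\ell=\lfloor(\log\secp)^{2/c}\rfloor$ (after assuming $f$ is length-preserving) and calls the security argument a ``straightforward reduction,'' whereas you spell out the complexity-leveraging step more explicitly; the content is the same.
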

\begin{proof}
    Without loss of generality, we can assume that a subexponentially secure OWF is length-preserving. i.e., it maps $\secp$-bit inputs to $\secp$-bit outputs. 
    Let $f$ be a length-preserving subexponentially secure OWFs. 
    Since $f$ is subexponentially secure there is a constant $0<c < 1$ such that
for any QPT adversary $\cA$, 
\begin{equation}
\Pr[f(x')=f(x):
x\gets\bit^\secp,
x'\gets\cA(1^\secp,f(x))
]
\le 2^{-\secp^{c}}
\end{equation}
for all sufficiently large $\secp$. 
Then we construct a OWF $g$ that maps $\secp$-bit inputs to $\ell(\secp):=\lfloor(\log \secp)^{2c^{-1}}\rfloor$-bit outputs as follows: 
for any $x\in \bit^\secp$, 
$g(x):=f(x_{[\ell]})$ where $x_{[\ell]}$ is the first $\ell$ bits of $x$. 
By a straightforward reduction, for any QPT adversary $\cA$, we have 
\begin{equation}
\Pr[g(x')=g(x):
x\gets\bit^\secp,
x'\gets\cA(1^\secp,g(x))
]
\le 2^{-\ell^{c}}\le 2^{-(\log \secp)^2}=\negl(\secp).
\end{equation}
Thus, $g$ is a OWF that maps $\secp$-bit inputs to $\ell(\secp)=\lfloor(\log \secp)^{2c^{-1}}\rfloor$-bit outputs.  
By applying \Cref{thm:upperbound} for $\eta:=(\epsilon/2)^{1/2}$, there is a $(\epsilon/2+\negl(\secp))$-OWSG with $O(\log \ell)=O(\log \log \secp)$-qubit pure state outputs. 
Since $\epsilon/2+\negl(\secp)<\epsilon$ for sufficiently large $\secp$, this implies \Cref{cor:possibility_Ologlog_constant_OWSG}. 
\end{proof}

\subsection{Upper Bound for Weak OWSGs}
For proving \Cref{thm:intro_possibility_superconstant_weak_OWSG}, we prove the following theorem. 
\begin{theorem}
\label{thm:upperbound_weak}
If there exist OWFs that map $\secp$-bit inputs to $\ell(\secp)$-bit outputs,   
then there exist 
weak 
OWSGs with $\lceil \frac{\ell(\secp)}{\log \lambda} \rceil$-qubit outputs.
\end{theorem}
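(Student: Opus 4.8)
The plan is to reuse the construction behind \Cref{thm:upperbound} verbatim, but to replace the quantum fingerprinting of \Cref{lem:finger} with a cheaper phase-based encoding that packs $\lfloor\log\lambda\rfloor$ classical bits into a single qubit, at the cost of relaxing the inner-product guarantee from a small $\eta$ to an $\eta$ that is only $1-1/\poly(\lambda)$ away from $1$. Concretely, for $y\in\{0,1,\dots,\lambda-1\}$ write $\ket{+_{2\pi y/\lambda}}:=\frac{1}{\sqrt 2}\left(\ket{0}+e^{2\pi i y/\lambda}\ket{1}\right)$, and for a string $z\in\bit^\ell$ I would split $z$ into $t:=\lceil \ell/\log\lambda\rceil$ blocks of $\lfloor\log\lambda\rfloor$ bits (padding the final block deterministically), interpret the $j$-th block as an integer $y_j\in\{0,\dots,\lambda-1\}$, and define the $t$-qubit state $\ket{g_z}:=\bigotimes_{j=1}^t \ket{+_{2\pi y_j/\lambda}}$. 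This uses exactly $\lceil \ell/\log\lambda\rceil$ qubits as required, and $\ket{g_z}$ is clearly QPT-computable.

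First I would establish the key inner-product bound playing the role of the $|\langle h_x|h_{x'}\rangle|\le\eta$ guarantee. A direct computation gives $|\langle +_\theta|+_{\theta'}\rangle|^2=\frac{1+\cos(\theta'-\theta)}{2}$, so for distinct $y,y'\in\{0,\dots,\lambda-1\}$ the angular gap is at least $2\pi/\lambda$ and hence $|\langle +_{2\pi y/\lambda}|+_{2\pi y'/\lambda}\rangle|^2\le \frac{1+\cos(2\pi/\lambda)}{2}=:\eta^2$. Since $z\neq z'$ forces at least one block with $y_j\neq y_j'$ while every other tensor factor contributes at most $1$, it follows that $|\langle g_z|g_{z'}\rangle|\le\eta$ for all $z\neq z'$. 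Using $1-\cos(2\pi/\lambda)=\Theta(1/\lambda^2)$, this yields $\eta^2=1-\Theta(1/\lambda^2)=1-1/\poly(\lambda)$.

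Next I would define the OWSG exactly as in \Cref{thm:upperbound}: $\KeyGen(1^\lambda)$ samples $k\gets\bit^\lambda$; $\StateGen(k)$ outputs $\ket{\phi_k}:=\ket{g_{f(k)}}$; and $\Ver(k',\ket{\phi_k})$ projects $\ket{\phi_k}$ onto $\ket{\phi_{k'}}$, accepting iff the projection succeeds. Correctness is immediate since $\Ver(k,\ket{\phi_k})$ always accepts. For security, given a OWSG adversary $\cA$ I would build an inverter $\cB$ for $f$ that, on input $f(k)$, prepares $\ket{g_{f(k)}}^{\otimes t}$, runs $\cA$, and forwards its output $k'$. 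Running the identical case analysis as in \Cref{thm:upperbound}—splitting the sum over $k'$ according to whether $f(k')=f(k)$—gives $\Pr[\cA\text{ wins}]\le \Pr[\cB\text{ wins}]+\eta^2$: the $f(k')=f(k)$ part equals $\Pr[\cB\text{ wins}]$ because then $\ket{g_{f(k)}}=\ket{g_{f(k')}}$ so the overlap is $1$, and the $f(k')\neq f(k)$ part is bounded by $\eta^2$ via the inner-product bound. Since $f$ is one-way, $\Pr[\cB\text{ wins}]\le\negl(\lambda)$, so $\Pr[\cA\text{ wins}]\le \eta^2+\negl(\lambda)=1-\Theta(1/\lambda^2)+\negl(\lambda)$, which is $1-1/\poly(\lambda)$, establishing $(1-1/\poly(\lambda))$-one-wayness, i.e., a weak OWSG.

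I do not expect a genuine obstacle, since the argument is structurally identical to \Cref{thm:upperbound}. The only point needing care is that, unlike the fingerprinting case where $\eta$ is small, here $\eta$ is close to $1$, so I must check that $\eta^2+\negl(\lambda)$ is still bounded away from $1$ by an inverse polynomial; this holds because the $\Theta(1/\lambda^2)$ gap dominates the negligible term for all sufficiently large $\lambda$. A secondary bookkeeping point is ensuring each block uses $\lfloor\log\lambda\rfloor$ bits so that the $y_j$ index distinct phases among the $\lambda$ equally spaced angles, keeping the minimal angular gap at $2\pi/\lambda$ and hence the overlap bound at $\frac{1+\cos(2\pi/\lambda)}{2}$.
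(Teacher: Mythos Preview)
Your proposal is correct and essentially identical to the paper's proof: the same phase encoding $\ket{+_{2\pi y/\lambda}}$, the same tensor-product state, the same reduction recycled from \Cref{thm:upperbound}, and the same Taylor expansion of $\cos(2\pi/\lambda)$ to conclude $\eta^2=1-\Theta(1/\lambda^2)$. The only (easily fixed) slip is bookkeeping: $t\cdot\lfloor\log\lambda\rfloor$ can be strictly smaller than $\ell$, so your block decomposition need not injectively encode all of $\{0,1\}^\ell$; the paper avoids this by simply embedding $\{0,1\}^\ell$ into $[\lambda]^t$, which is possible since $\lambda^{\lceil\ell/\log\lambda\rceil}\ge 2^\ell$.
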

\begin{proof} 
We simply write $\ell$ to mean $\ell(\secp)$ for notational simplicity. 
Let $f$ be a  OWF that maps $\secp$-bit inputs to $\ell$-bit outputs.  
Define $t:=\lceil \frac{\ell}{\log \lambda} \rceil$, and we embed $\bit^\ell$ into $[\secp]^t$. 
Then we can regard $f$ as a function that maps $\secp$-bit strings to an element of $[\secp]^t$. 
    For $\theta$, let 
    \begin{align*}
        \ket{+_\theta}:=\frac{1}{\sqrt{2}}(\ket{0}+e^{i\theta}\ket{1}).
    \end{align*}
    We construct a OWSG as follows:
\begin{itemize}
\item
$\KeyGen(1^\secp)\to k:$
Sample $k\gets\bit^\secp$.
    \item 
    $\StateGen(k)\to \ket{\phi_k}:$ 
    Compute $f(k)=(y_1,y_2,...,y_t)\in [\secp]^t$ and 
    output $\ket{\phi_k}\coloneqq\bigotimes_{j\in [t]}\ket{+_{2\pi y_j/\secp}}$. 
    \item
    $\Ver(k',\ket{\phi_k})\to\top/\bot:$ 
    Project $\ket{\phi_k}$ onto $\ket{\phi_{k'}}$.
    If the projection is successful, output $\top$. Otherwise, output $\bot$. 
\end{itemize}
For any $y\neq y'$, we have $|\langle +_{2\pi y/\secp}|+_{2\pi y'/\secp}\rangle|\le \sqrt{\frac{1+\cos (2\pi/\secp)}{2}}$.
This is because
\begin{align}
|\langle +_{2\pi y/\sep}|+_{2\pi y'/\secp}\rangle|
&=\left|\frac{1+e^{i2\pi(y-y')/\secp}}{2}\right|\\
&=\left|\frac{1+\cos(2\pi(y-y')/\secp)+i\sin(2\pi(y-y'))}{2}\right|\\
&=\sqrt{
\left(\frac{1+\cos(2\pi(y-y')/\secp)}{2}\right)^2+\frac{\sin^2(2\pi(y-y'))}{4}
}\\
&\le \sqrt{\frac{1+\cos (2\pi(y-y')/\secp)}{2}}\\
&\le \sqrt{\frac{1+\cos (2\pi/\secp)}{2}}.
\end{align}
Thus, for any $k,k'$ such that $f(k)\ne f(k')$, 
we have 
$|\langle \phi_k|\phi_{k'}\rangle|\le \sqrt{\frac{1+\cos (2\pi/\secp)}{2}}$. 
By a similar reduction to that in the proof of \Cref{thm:upperbound}, we can see that the above OWSG satisfies 
$\left(\frac{1+\cos (2\pi/\secp)}{2}+\negl(\secp)\right)$-one-wayness. 
Since $\cos \theta= 1-\theta^2/2+O(\theta^4)$, 
 we have $\cos(2\pi/\secp)\le 1-(\pi/\secp)^2$ for sufficiently large $\secp$.   
Thus, $\frac{1+\cos (2\pi/\secp)}{2}=1-\frac{1}{\poly(\secp)}$ and thus the above is a weak OWSG. 
\end{proof}
\Cref{thm:intro_possibility_superconstant_weak_OWSG} follows from \Cref{thm:upperbound_weak} as an immediate corollary.
\begin{corollary}[Restatement of \Cref{thm:intro_possibility_superconstant_weak_OWSG}]\label{cor:possibility_superconstant_weak_OWSG}
For any QPT-computable function $m(\secp)=\omega(1)$, there exist weak OWSGs with $m(\secp)$-qubit pure state outputs assuming the existence of exponentially secure  OWFs that map $\secp$-bit inputs to $O(\secp)$-bit outputs. 
\end{corollary}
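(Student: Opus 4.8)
The plan is to reduce to \Cref{thm:upperbound_weak} by manufacturing, from the assumed exponentially secure OWF with linear expansion, a \emph{standard} (negligibly secure) OWF whose output length is $\Theta(m(\secp)\log\secp)$; feeding such a OWF into \Cref{thm:upperbound_weak} then yields a weak OWSG with $\lceil \Theta(m\log\secp)/\log\secp\rceil = \Theta(m)$-qubit pure-state outputs, which I would tune to be exactly $m(\secp)$. This parallels the prefix-truncation argument already used in the proof of \Cref{cor:possibility_Ologlog_constant_OWSG}, with exponential security now playing the same compensating role.

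Concretely, first I would let $f$ be the assumed OWF, mapping $n$-bit inputs to $\ell_f(n)=\Theta(n)$-bit outputs with inversion advantage at most $2^{-c_0 n}$ for some constant $c_0>0$. I would then fix an effective input length $n'=n'(\secp)=\Theta(m(\secp)\log\secp)$ and define $g\colon\bit^\secp\to\bit^{\ell}$ by $g(x):=f(x_{[n']})$, where $x_{[n']}$ is the first $n'$ bits of $x$ and the output is then zero-padded to a length $\ell$ chosen so that $\lceil \ell/\log\secp\rceil = m(\secp)$ exactly. Since $m=\omega(1)$ is QPT-computable, $n'=\omega(\log\secp)$ is QPT-computable, so $g$ is a uniform deterministic polynomial-time function; and $n'\le\secp$ for all $m=O(\secp/\log\secp)$, which is the only regime actually requiring the special assumption (for larger $m$ the output budget is plentiful: an exponentially secure PRG of polynomial stretch, which the assumption provides, is already a OWF with the desired output length).

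The security reduction is the crux. Any QPT inverter for $g$ on input $g(x)=f(x_{[n']})$ immediately yields a QPT inverter for $f$ on $n'$-bit inputs, because $x_{[n']}$ is uniform over $\bit^{n'}$ and a $g$-preimage supplies an $f$-preimage of $f(x_{[n']})$. Hence $g$'s inversion advantage is at most $2^{-c_0 n'}=2^{-\Theta(m\log\secp)}=\secp^{-\Theta(m)}=\secp^{-\omega(1)}=\negl(\secp)$, so $g$ is a genuine OWF. I would emphasize that this is precisely where exponential security is indispensable: since $n'=\Theta(m\log\secp)$ can be far below $\secp$ (e.g.\ $n'=\Theta(\log\log\secp\cdot\log\secp)$ when $m=\log\log\secp$), merely polynomial hardness of $f$ on $n'$-bit inputs would guarantee only $\negl(n')$ inversion advantage, which need not be $\negl(\secp)$; exponential hardness converts any $n'=\omega(\log\secp)$ into negligible-in-$\secp$ security.

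Finally I would invoke \Cref{thm:upperbound_weak} on $g$: it yields a weak OWSG whose $\StateGen$ outputs the pure state $\bigotimes_{j}\ket{+_{2\pi y_j/\secp}}$ on $\lceil \ell/\log\secp\rceil = m(\secp)$ qubits, which proves the corollary. I expect the main obstacle to be bookkeeping rather than conceptual, namely arranging the rounding (choice of $n'$ together with the zero-padding length) so the qubit count is exactly $m(\secp)$ and confirming $n'\le\secp$; the genuinely new content over \Cref{cor:possibility_Ologlog_constant_OWSG} is simply the observation that exponential security combined with linear expansion lets one realize a standard OWF of output length $\Theta(m\log\secp)$ for \emph{every} super-constant $m$.
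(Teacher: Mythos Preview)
Your proposal is correct and takes essentially the same approach as the paper: truncate the input of the exponentially secure OWF to length $\Theta(m(\secp)\log\secp)$, observe that exponential security makes the resulting function a standard OWF (since $2^{-c\cdot\Theta(m\log\secp)}=\secp^{-\omega(1)}$), and then invoke \Cref{thm:upperbound_weak}. The only cosmetic difference is that you zero-pad the OWF output to hit $\lceil\ell/\log\secp\rceil=m$ exactly, whereas the paper instead pads the OWSG output at the end; also, your aside about using an exponentially secure PRG for the large-$m$ regime is unnecessary (and not obviously furnished by the assumption), since OWSG output padding already covers that case.
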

\begin{proof}
    Suppose that $f$ is an exponentially secure OWF that maps $\secp$-bit inputs to $k\secp$-bit outputs for some constant $k\ge 1$. 
    Since it is exponentially secure,  there is a constant $0<c < 1$ such that
for any QPT adversary $\cA$, 
\begin{equation}
\Pr[f(x')=f(x):
x\gets\bit^\secp,
x'\gets\cA(1^\secp,f(x))
]
\le 2^{-c \secp}
\end{equation}
for all sufficiently large $\secp$. 
Then we construct a OWF $g$ that maps $\secp$-bit inputs to $\ell(\secp):=\lfloor m(\secp)\log \secp \rfloor$-bit outputs as follows: 
for any $x\in \bit^\secp$, 
$g(x):=f(x_{[\ell]})$ where $x_{[\ell]}$ is the first $\ell$ bits of $x$. 
By a straightforward reduction, for any QPT adversary $\cA$, we have 
\begin{equation}
\Pr[g(x')=g(x):
x\gets\bit^\secp,
x'\gets\cA(1^\secp,g(x))
]
\le 2^{-c\ell}\le 2^c\cdot \secp^{-cm(\secp)}=\negl(\secp),
\end{equation}
where the final equality follows from $m(\secp)=\omega(1)$.   
Thus, $g$ is a OWF that maps $\secp$-bit inputs to $\ell(\secp)=\lfloor m(\secp)\log \secp \rfloor$-bit outputs.  
By applying \Cref{thm:upperbound_weak}, there is a weak OWSG with $\lceil \frac{\ell(\secp)}{\log \lambda} \rceil$-qubit outputs. 
By the definition of $\ell(\secp)$, we have 
$\lceil \frac{\ell(\secp)}{\log \lambda} \rceil\le m(\secp)$.  
Since we can arbitrarily increase the output length of weak OWSGs by a trivial padding, this implies \Cref{cor:possibility_superconstant_weak_OWSG}. 
\end{proof}
\section{EFIs}\label{sec:EFI}
In this section, we show upper and lower bounds for the output length of EFIs. 
\subsection{Lower Bound}
\begin{theorem}[Restatement of \Cref{thm:intro_lower_EFI}]\label{thm:lower_EFI}
There do not exist $O(\log\secp)$-qubit EFIs.
\end{theorem}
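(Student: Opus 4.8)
The plan is to build a \emph{uniform} QPT distinguisher that learns an approximation of the optimal (Helstrom) distinguishing measurement by running the EFI's own $\StateGen$ algorithm, and then applies it to its challenge state. Suppose toward a contradiction that $\StateGen(b,1^\secp)\to \rho_b$ is an EFI with $m=O(\log\secp)$-qubit outputs, so the dimension is $d=2^m=\poly(\secp)$. By \Cref{def:EFI} we have $\|\rho_0-\rho_1\|_{tr}\ge 1/p(\secp)$ for some polynomial $p$, while $\rho_0,\rho_1$ are computationally indistinguishable; I will contradict the latter. In the non-uniform setting the argument is immediate: hand the adversary a classical description of the projection onto the positive eigenspace of $\rho_0-\rho_1$ as advice, which attains advantage $\|\rho_0-\rho_1\|_{tr}$. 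The whole difficulty is removing this advice, and the key enabling observation is that a uniform adversary can itself prepare arbitrarily many copies of \emph{both} $\rho_0$ and $\rho_1$ by invoking $\StateGen(0,1^\secp)$ and $\StateGen(1,1^\secp)$.

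Concretely, I would fix a slack $\delta:=\tfrac{1}{16\,p(\secp)}$ and apply \Cref{lem:tomography} with copy count $t=144\secp d^4/\delta^2=\poly(\secp)$ to obtain Hermitian matrices $M_0,M_1$ with $\|M_b-\rho_b\|_{tr}\le\delta$ for both $b$, except with negligible probability. The adversary then computes the projection $\Pi$ onto the positive eigenspace of $M_0-M_1$; by \Cref{lem:TD_generalized}, this $\Pi$ attains $\Tr(\Pi(M_0-M_1))=\|M_0-M_1\|_{tr}+\tfrac12\Tr(M_0-M_1)$. Since $d,1/\delta,t$ are all polynomial, diagonalizing $M_0-M_1$ and hence producing $\Pi$ runs in time $\poly(\secp)$, so the distinguisher $\cA$ is QPT.

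Next I would analyze $\cA$, which on challenge $\rho_b$ measures with $\Pi$ and outputs $1$ iff the projection succeeds, so that $\Pr[1\gets\cA(\rho_b)]=\mathbb{E}[\Tr(\Pi\rho_b)]$ over the tomography randomness, and its advantage equals $\bigl|\mathbb{E}[\Tr(\Pi(\rho_0-\rho_1))]\bigr|$. Conditioning on the good tomography event, I would write $\Tr(\Pi(\rho_0-\rho_1))=\Tr(\Pi(M_0-M_1))+\Tr\bigl(\Pi((\rho_0-M_0)-(\rho_1-M_1))\bigr)$, bounding each error contribution by $2\delta$ via \Cref{cor:trace_of_projection_and_trace_distance}, using the triangle inequality to get $\|M_0-M_1\|_{tr}\ge\|\rho_0-\rho_1\|_{tr}-2\delta$, and noting $|\Tr(M_0-M_1)|\le 4\delta$ (because $\Tr(\rho_0-\rho_1)=0$ and each $|\Tr(M_b-\rho_b)|\le 2\|M_b-\rho_b\|_{tr}\le 2\delta$). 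Combining these yields $\Tr(\Pi(\rho_0-\rho_1))\ge \|\rho_0-\rho_1\|_{tr}-8\delta\ge \tfrac{1}{2p(\secp)}$ on the good event, while the negligible-probability bad event costs at most a negligible additive loss. Hence $\cA$ distinguishes $\rho_0$ from $\rho_1$ with non-negligible advantage, contradicting computational indistinguishability and proving the theorem.

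The main obstacle is precisely this uniformity step. Once one realizes the adversary can generate both halves of the EFI, tomography reduces the task of finding the optimal measurement to a polynomial-dimensional linear-algebra computation; the remaining subtlety is the perturbation bound showing that the measurement optimal for the estimates $M_0,M_1$ stays near-optimal for the true states $\rho_0,\rho_1$, which is where \Cref{lem:TD_generalized} and \Cref{cor:trace_of_projection_and_trace_distance} do the essential work.
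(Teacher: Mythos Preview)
Your proposal is correct and follows essentially the same approach as the paper: same choice of $\delta=\tfrac{1}{16p(\secp)}$, same tomography-based estimation of $M_0,M_1$, same projector $\Pi$ onto the positive eigenspace of $M_0-M_1$, and the same chain of inequalities (via \Cref{lem:TD_generalized}, \Cref{cor:trace_of_projection_and_trace_distance}, and the triangle inequality) yielding $\Tr(\Pi(\rho_0-\rho_1))\ge \|\rho_0-\rho_1\|_{tr}-8\delta\ge \tfrac{1}{2p(\secp)}$. The only point the paper treats more carefully is that the projective measurement $\{\Pi,I-\Pi\}$ need not be \emph{exactly} implementable by a finite circuit; the paper explicitly approximates $\Pi$ to within $2^{-\Omega(\secp)}$ in operator norm using a $\poly(2^\ell,\log\epsilon^{-1})$-size circuit, whereas you assert QPT-ness directly after the diagonalization step.
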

\begin{proof}
Let $\StateGen$ be a QPT algorithm that takes a bit $b$ and $1^\secp$ for $\secp\in \mathbb{N}$ as input and outputs an $\ell(\secp)$-qubit state $\rho_b$ for $\ell(\secp)=O(\log\secp)$. 
Suppose that $\rho_0$ and $\rho_1$ are statistically distinguishable, i.e.,  $\|\rho_0-\rho_1\|_{tr}\ge\frac{1}{p(\secp)}$ for some polynomial $p$. 
It suffices to show that $\rho_0$ and $\rho_1$ are not computationally indistinguishable. To show this, we consider the following adversary $\A$ that distinguishes $\rho_0$ and $\rho_1$. 
\begin{enumerate}
\item Take the security parameter $1^\secp$ and a quantum state $\sigma$, which is either of $\rho_0$ or $\rho_1$ as input.
\item Set $\delta=\delta(\secp):=\frac{1}{16p(\secp)}$.
\item \label{step:tomography}
For $b\in \bit$, generate $t:=144\secp 2^{4\ell}/\delta^2$ copies of $\rho_b$ and run $M_b\gets \tomo(\rho^{\otimes t}_b,\delta)$ where $\tomo$ is the algorithm in \Cref{lem:tomography}.
\item \label{step:diagonalize}
Compute the spectral decomposition of $M_0-M_1$: 
\begin{align*}
M_0-M_1=\sum_{i}\mu_i\ket{v_i}\bra{v_i}
\end{align*}
where $\ket{v_i}$ is an eigenvector with eigenvalue $\mu_i$ and $\{\ket{v_i}\}_i$ forms an orthonormal basis of $\mathbb{C}^{2^\ell}$. 
Define 
\begin{align*}
Q:=\sum_{i:\mu_i\ge 0}\mu_i\ket{v_i}\bra{v_i},~~~S:=\sum_{i:\mu_i< 0}-\mu_i\ket{v_i}\bra{v_i}
\end{align*}
so that $Q$ and $S$ are positive semi-definite matrices with orthogonal supports and satisfies $M_0-M_1=Q-S$. 
Let $\Pi$ be the projector onto the support of $Q$, i.e., 
\begin{align*}
\Pi:=\sum_{i:\mu_i\ge 0}\ket{v_i}\bra{v_i}.
\end{align*}
\item \label{step:Pi}
Apply projective measurement $\{\Pi,I-\Pi\}$ to $\sigma$ and output the measurement outcome. 
\end{enumerate} 

We remark that $\A$ may not be QPT since there may not exist a polynomial size (or even finite size) quantum circuit that exactly implements the projective measurement $\{\Pi,I-\Pi\}$. However, we observe that we can approximate $\A$ in QPT with an exponentially small error as follows.
First, tomography in Step \ref{step:tomography} runs in QPT since $t=144\secp 2^{4\ell}/\delta^2=\poly(\secp)$ and the running time of $\tomo$ is polynomial in $1/\delta,d=2^\ell$ and $\lambda$ as required in \Cref{lem:tomography}.  
In Step \ref{step:diagonalize}, by a numerical implementation of a standard diagonalization algorithm, we can obtain a description of a projector $\widetilde{\Pi}$ (which consists of all the entries of the matrix $\widetilde{\Pi}$) such that $\|\Pi-\widetilde{\Pi}\|_{op}\le 2^{-\Omega(\secp)}$ in QPT where $\|\cdot\|_{op}$ means the operator norm. 
It is known that any $\ell$-qubit unitary can be implemented with an error $\epsilon$ (in terms of the operator norm) by a $\poly(2^\ell,\log \epsilon^{-1})$-size quantum circuit. 
In fact, \cite[Sec. 4.5]{Nielsen_Chuang} gives a constructive way to compute the description of the approximating quantum circuit in (classical) time $\poly(2^\ell,\log \epsilon^{-1})$ given the description of the unitary. 
By setting $\epsilon:=2^{-\Omega(\secp)}$, the above implies that we can efficiently compute a description of a quantum circuit that implements projective measurement $\{\widetilde{\Pi}',I-\widetilde{\Pi}'\}$ such that 
$\|\widetilde{\Pi}-\widetilde{\Pi}'\|_{op}\le 2^{-\Omega(\secp)}$. 
By the triangle inequality, $\|\Pi-\widetilde{\Pi}'\|_{op}\le 2^{-\Omega(\secp)}$, and thus the output distribution changes only by $2^{-\Omega(\secp)}$  even if we apply $\{\widetilde{\Pi}',I-\widetilde{\Pi}'\}$ instead of $\{\Pi,I-\Pi\}$. 
This means that $\A$ can be approximated with an exponentially small error in QPT. 

Thus, it suffices to show that $\A$'s distinguishing advantage is $1/\poly(\secp)$. 
Moreover, since $\|M_b-\rho_b\|_{tr}\le \delta$ holds for $b\in \bit$ except for a negligible probability by \Cref{lem:tomography}, 
it suffices to show that $\A$'s distinguishing advantage is $1/\poly(\secp)$ 
conditioned on that $\|M_b-\rho_b\|_{tr}\le \delta$ for $b\in \bit$. We show this below.

First, by the definitions of $Q$ and $S$, we have
\begin{align}\label{eq:M_0-M_1_tr}
    \|M_0-M_1\|_{tr}=
    \frac{\Tr(Q)+\Tr(S)}{2}.
\end{align}
We also have 
\begin{align}
    \left|\frac{\Tr(Q)-\Tr(S)}{2}\right|
    &=\left|\frac{\Tr(M_0)-\Tr(M_1)}{2}\right| \label{eq:Tr_Q-S_1}\\
    &\le \frac{1}{2}\left(|\Tr(\rho_0-\rho_1)|+|\Tr(M_0-\rho_0)|+|\Tr(M_1-\rho_1)|\right) \label{eq:Tr_Q-S_2}\\
    &\le 0+\|M_0-\rho_0\|_{tr}+\|M_1-\rho_1\|_{tr} \label{eq:Tr_Q-S_3}\\
    &\le 2\delta \label{eq:Tr_Q-S_4}
\end{align}
where \Cref{eq:Tr_Q-S_1} follows from $Q-S=M_0-M_1$,  
\Cref{eq:Tr_Q-S_2} follows from the triangle inequality, 
\Cref{eq:Tr_Q-S_3} follows from $\Tr(\rho_0)=\Tr(\rho_1)=1$ and
 $\Tr(M)\le \|M\|_{1} = 2\|M\|_{tr}$ for any Hermitian matrix $M$\takashi{I changed the explanation here.}, 
and \Cref{eq:Tr_Q-S_4} follows from the assumption that  $\|M_b-\rho_b\|_{tr}\le \delta$ for $b\in \bit$. 
Then we have
\begin{align}
\Tr(\Pi(M_0-M_1))&= \Tr(Q) \label{eq:Tr_Pi_M_0-M_1-1}\\
&=\frac{\Tr(Q)+\Tr(S)}{2}+\frac{\Tr(Q)-\Tr(S)}{2} \label{eq:Tr_Pi_M_0-M_1-2}\\
&\ge \|M_0-M_1\|_{tr}-2\delta \label{eq:Tr_Pi_M_0-M_1-3}\\
&\ge \|\rho_0-\rho_1\|_{tr}
-\|M_0-\rho_0\|_{tr}
-\|M_1-\rho_1\|_{tr}
-2\delta \label{eq:Tr_Pi_M_0-M_1-4}\\
&\ge \|\rho_0-\rho_1\|_{tr}-4\delta \label{eq:Tr_Pi_M_0-M_1-5}
\end{align}
where \Cref{eq:Tr_Pi_M_0-M_1-1} follows from the definitions of $Q,S$, and $\Pi$, 
\Cref{eq:Tr_Pi_M_0-M_1-3} follows from  \Cref{eq:M_0-M_1_tr} and \Cref{eq:Tr_Q-S_1,eq:Tr_Q-S_2,eq:Tr_Q-S_3,eq:Tr_Q-S_4}, 
\Cref{eq:Tr_Pi_M_0-M_1-4} follows from the triangle inequality, 
and \Cref{eq:Tr_Pi_M_0-M_1-5} follows from the assumption that  $\|M_b-\rho_b\|_{tr}\le \delta$ for $b\in \bit$.

Then we have
\begin{align}
\Tr(\Pi\rho_0)-\Tr(\Pi\rho_1)&=
\Tr(\Pi(M_0-M_1))+
\Tr(\Pi(\rho_0-M_0))+
\Tr(\Pi(M_1-\rho_1)) \label{eq:Tr_Pi_rho_0-rho_1-1}\\
&\ge \|\rho_0-\rho_1\|_{tr}-4\delta-2\|M_0-\rho_0\|_{tr}
-2\|M_1-\rho_1\|_{tr} \label{eq:Tr_Pi_rho_0-rho_1-2}\\
&\ge \|\rho_0-\rho_1\|_{tr}-8\delta \label{eq:Tr_Pi_rho_0-rho_1-3}\\
&=\frac{1}{2p(\secp)}  \label{eq:Tr_Pi_rho_0-rho_1-4}
\end{align}
where \Cref{eq:Tr_Pi_rho_0-rho_1-2} follows from \Cref{eq:Tr_Pi_M_0-M_1-1,eq:Tr_Pi_M_0-M_1-2,eq:Tr_Pi_M_0-M_1-3,eq:Tr_Pi_M_0-M_1-4,eq:Tr_Pi_M_0-M_1-5} and \Cref{cor:trace_of_projection_and_trace_distance},
\Cref{eq:Tr_Pi_rho_0-rho_1-3} follows from the assumption that  $\|M_b-\rho_b\|_{tr}\le \delta$ for $b\in \bit$, 
and \Cref{eq:Tr_Pi_rho_0-rho_1-4} follows from  $\|\rho_0-\rho_1\|_{tr}\ge\frac{1}{p(\secp)}$ and $\delta=\frac{1}{16p(\secp)}$. 
\end{proof}

\subsection{Upper Bound}
We show a matching upper bound assuming the existence of exponentially secure PRGs.  
\begin{theorem}[Restatement of \Cref{thm:intro_upper_EFI}]\label{thm:upper_EFI}
If exponentially secure PRGs exist, then
$m(\secp)$-qubit 
EFI pairs exist for any QPT computable function $m(\secp)=\omega(\log\secp)$.
\end{theorem}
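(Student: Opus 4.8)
The plan is to realize the EFI by the pair $(\rho_0,\rho_1)$ sketched right after the theorem statement, instantiating $G$ from the given exponentially secure PRG and then verifying the three EFI requirements in turn. First I would fix the output length: since any length-increasing PRG can be stretched to a length-doubling one while preserving exponential security (with only a polynomial loss in the advantage from the standard hybrid argument), I may assume a PRG $G:\bit^{m/2}\to\bit^{m}$ run at seed length $m/2:=m(\secp)/2=\omega(\log\secp)$. Then I would define $\StateGen(0,1^\secp)$ to sample $x\gets\bit^{m/2}$ and output the computational-basis state $\ket{G(x)}$, and $\StateGen(1,1^\secp)$ to sample $y\gets\bit^{m}$ and output $\ket{y}$; these produce $\rho_0=\frac{1}{2^{m/2}}\sum_{x\in\bit^{m/2}}\ket{G(x)}\bra{G(x)}$ and $\rho_1=\frac{1}{2^m}\sum_{y\in\bit^m}\ket{y}\bra{y}=I^{\otimes m}/2^m$. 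Both are QPT and uniform because $m(\secp)$ is QPT-computable and $G$ runs in polynomial time.

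For statistical distinguishability I would argue by rank/support: $\rho_0$ is diagonal in the computational basis and supported on the image of $G$, which has size at most $2^{m/2}$, whereas $\rho_1$ is the maximally mixed state of full rank $2^m$. Hence $\|\rho_0-\rho_1\|_{tr}\ge 1-2^{m/2}/2^m=1-2^{-m/2}\ge 1/2$ for all large $\secp$, which is certainly $\ge 1/\poly(\secp)$. Alternatively, \Cref{lem:fidelity_upperbound} bounds the fidelity between $\rho_0$ and $\rho_1$ by $2^{m/2}/2^m=2^{-m/2}$, which yields the same conclusion.

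For computational indistinguishability I would give a single-sample reduction. Since $\rho_0$ and $\rho_1$ are both diagonal, a QPT distinguisher $\cA$ against the EFI induces a PRG distinguisher $\cB$ that, on a string $z\in\bit^m$ (distributed either as $G(x)$ for $x\gets\bit^{m/2}$ or uniformly over $\bit^m$), prepares $\ket{z}\bra{z}$ and runs $\cA$ on it, forwarding its output bit. By linearity the state fed to $\cA$ is exactly $\rho_0$ in the pseudorandom case and exactly $\rho_1$ in the uniform case, so $\cB$'s advantage equals $\cA$'s. It then remains to bound $\cB$'s advantage by $\negl(\secp)$. This is the one step that needs care: $\cB$ runs in time $\poly(\secp)=2^{O(\log\secp)}=2^{o(m/2)}$, i.e.\ only sub-exponential in the seed length, so I would invoke exponential security exactly as in the reduction of \Cref{cor:possibility_superconstant_weak_OWSG} to conclude the advantage is at most (the polynomial stretching loss)$\,\cdot\,2^{-c\cdot m/2}=2^{-\omega(\log\secp)}=\negl(\secp)$. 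Because the EFI definition only requires indistinguishability against uniform adversaries and $\cB$ is uniform, this suffices.

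The main obstacle is precisely this last scaling point, namely matching a $\poly(\secp)$-time EFI adversary against a PRG run at the short seed length $m/2=\omega(\log\secp)$, which is exactly why exponential rather than merely polynomial PRG security is assumed. Everything else is a routine verification of the EFI syntax, the statistical gap, and the basis-measurement reduction.
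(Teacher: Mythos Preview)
Your proposal is correct and follows essentially the same approach as the paper: both define the EFI pair as a pseudorandom computational-basis state versus the maximally mixed state, derive computational indistinguishability from exponential PRG security at seed length $\omega(\log\secp)$, and statistical farness from the support/fidelity bound (the paper uses \Cref{lem:fidelity_upperbound}). The only small detail the paper adds that you glossed over is handling odd $m(\secp)$ by taking seed length $n=\lceil m(\secp)/2\rceil$ and padding a single $\ket{0}$ qubit when $2n=m-1$.
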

\begin{proof}
Let $G$ be an exponentially secure PRG. Without loss of generality, we assume that $G$ is length-doubling, i.e., it maps $n$-bit strings to $2n$-bit strings.\footnote{This is because we can stretch the output length by sequential application which preserves the exponential security.} 
The exponential security means that
there exists a constant $0<c<1$ such that for any QPT adversary $\cA$
\begin{align}
\left|\Pr_{x\gets\bit^\secp}[1\gets\cA(1^\secp,G(x))]    
-\Pr_{y\gets\bit^{2\secp}}[1\gets\cA(1^\secp,y)]    \right|\le 2^{-c\secp}
\end{align}
for all sufficiently large $\secp$.
Let $n(\secp):=\lceil m(\secp)/2 \rceil$.  
Then we have
\begin{align}
\left|\Pr_{x\gets\bit^n}[1\gets\cA(1^n,G(x))]    
-\Pr_{y\gets\bit^{2n}}[1\gets\cA(1^n,y)]    \right|\le 2^{-cn}=\negl(\secp) \label{eq:PRG_security}
\end{align}
for sufficiently large $\secp$. 
Define EFI pairs as follows:
\begin{align}
\rho_0&\coloneqq\frac{1}{2^n} \sum_{x\in\bit^n}\ket{G(x)}\bra{G(x)},\\
\rho_0&\coloneqq\frac{1}{2^{2n}} \sum_{y\in\bit^{2n}}\ket{y}\bra{y}
\end{align}
The computational indistinguishability immediately follows from \Cref{eq:PRG_security}.
On the other hand,  \cref{lem:fidelity_upperbound} implies
\begin{align}
F(\rho_0,\rho_1)\le 2^{-n}    =\negl(\secp)
\end{align}
and thus $\|\rho_0-\rho_1\|_{tr}=1-\negl(\secp)$.  
Thus, they are stastically far. 
Finally, $\rho_0$ and $\rho_1$ are $2n$ qubit states. 
When $m$ is even, then $2n=m$ and thus it gives $m$-qubit EFI pairs. 
When $m$ is odd, then $2n=m-1$, but we can add append one-qubit state $\ket{0}$ to both $\rho_0$ and $\rho_1$ so that they are $m$-qubit states while preserving computational indistinguishability and statistical farness. 
\end{proof}


\ifnum\anonymous=1
\else
{\bf Acknowledgements.}
This work was initiated while the authors were joining NII Shonan Meeting No.198 ``New Directions in Provable Quantum Advantages''.
We thank Ryu Hayakawa and Fermi Ma for useful conversations. 
We thank anonymous reviewers of CRYPTO 2024 and TQC 2024 for their valuable comments, especially for pointing out that injectivity is not needed in \Cref{thm:upperbound_weak}.
MH is supported by a KIAS Individual Grant QP089802.
TM is supported by
JST CREST JPMJCR23I3,
JST Moonshot R\verb|&|D JPMJMS2061-5-1-1, 
JST FOREST, 
MEXT QLEAP, 
the Grant-in Aid for Transformative Research Areas (A) 21H05183,
and 
the Grant-in-Aid for Scientific Research (A) No.22H00522.
\fi

\ifnum\submission=0
\bibliographystyle{alpha} 
\else
\bibliographystyle{splncs04}
\fi
\bibliography{abbrev3,crypto,reference}

\appendix
\section{Quantum Bit Commitments}\label{sec:commitment}
\subsection{Preliminaries}
\begin{definition}[Canonical Quantum Bit Commitments~\cite{AC:Yan22}]
A canonical quantum bit commitment scheme is   
a family $\{Q_0(\secp),Q_1(\secp)\}_{\secp\in{\mathbb N}}$ of efficiently implementable unitary operators that is used as follows.
\begin{itemize}
    \item[{\bf Commit phase:}]
    In the commit phase, the sender who wants to commit $b\in\bit$ generates the state
    $\ket{\Psi_b}_{\regR,\regC}\coloneqq Q_b(\secp)|0...0\rangle$ over two registers $\regR$ and $\regC$.
    The sender sends the commitment register $\regC$ to the receiver.
    \item[{\bf Reveal phase:}]
    In the reveal phase, the sender sends the reveal register $\regR$ and $b$ to the receiver.
    The receiver applies $Q_b^\dagger(\secp)$ on the state over $\regR$ and $\regC$,
    and measures all qubits in the computational basis. If all results are 0,
    the receiver accepts $b$.
    Otherwise, the receiver rejects.
\end{itemize}
\end{definition}
For the notational simplicity, we often omit the security parameter $\secp$ of $\{Q_0(\secp),Q_1(\secp)\}_{\secp\in{\mathbb N}}$,
and just write as $(Q_0,Q_1)$.

\begin{definition}[Hiding]
We say that a commitment scheme $(Q_0,Q_1)$ is statistically (resp. computationally) hiding if    
\begin{align}
|\Pr[1\gets\cA(1^\secp,\rho_0)]    
-\Pr[1\gets\cA(1^\secp,\rho_1)]|\le\negl(\secp)    
\end{align}
for any unbounded (resp. QPT) adversary $\cA$, where
$\rho_b\coloneqq \Tr_{\regR}(\ket{\Psi_b}_{\regR,\regC})$,
and $\Tr_{\regA}(\sigma_{\regA,\regB})$
means tracing out the register $\regA$ of the state $\sigma_{\regA,\regB}$ over two registers $\regA$ and $\regB$.
\end{definition}

\begin{definition}[Binding]
\label{def:honest_binding}
We say that a commitment scheme $(Q_0,Q_1)$ is statistically (resp. computationally) binding if    
the winning probability of the following security game is $\negl(\secp)$ for any unbounded (resp. QPT) adversary $\cA$:  
\begin{enumerate}
    \item 
    $\cA$ (who plays the role of the sender) honestly commits 0.
    \item 
    $\cA$ applies any operation on the register $\regR$.
    \item 
    $\cA$ opens the commitment to 1. 
    \item 
    $\cA$ wins
    if and only if the receiver accepts 1.
\end{enumerate}

\if0
for any unbounded (resp. QPT) unitary $U$ 
and any advice state $|\tau\rangle$,
\begin{align}
   \left\| 
   (\langle 0...0|Q_1^\dagger)_{\regC,\regR}
   U_{\regR,\regZ}(Q_0|0...0\rangle_{\regC,\regR}\otimes|\tau\rangle_\regZ)
   \right\|_1^2\le\negl(\secp).
\end{align}
\fi
\end{definition}
\begin{remark}
The above definition of binding, so-called honest binding, does not capture all possible attacks,
because we assume that the adversary honestly commits 0 in the commit phase.
However, when we want to show lower bounds, restricting to the honest binding only strengthens our results, because any reasonable definition of binding should satisfy the honest binding.
Moreover, it is shown in \cite{AC:Yan22} that the honest binding implies a standard definition of binding, so-called sum-binding~\cite{EC:DumMaySal00}, which captures general attacks. 
In the following, if we just say binding, it means honest binding.
\end{remark}
\begin{remark} \takashi{I added this remark}
In the original definition of canonical bit commitments in \cite{AC:Yan22}, computational hiding and computational binding are required to hold against non-uniform QPT adversaries whereas we only require them against uniform QPT adversaries. 
This only makes our result on the lower bound for the commitment length stronger.  
\end{remark}

\begin{lemma}[Flavor Conversion~\cite{EC:HhaMorYam23}]
\label{lem:flavor}
Let $(Q_0,Q_1)$ be a statistically hiding (resp. binding) and computationally binding (resp. hiding) non-interactive quantum bit commitment scheme.
Then, $(\tilde{Q}_0,\tilde{Q}_1)$ defined as follows is
a statistically-binding (resp. hiding) and computationally-hiding (resp. binding) non-interactive quantum bit commitment scheme.
\begin{align}
\ket{\tilde{\Psi}_b}_{\regR',\regC'}\coloneqq 
\tilde{Q}_b|0...0\rangle\coloneqq
\frac{1}{\sqrt{2}}\left[
|0\rangle_\regD \otimes(Q_0|0...0\rangle)_{\regR,\regC} 
+ (-1)^b|1\rangle_\regD \otimes(Q_1|0...0\rangle)_{\regR,\regC}   
\right]
\end{align}
for each $b\in\bit$,
where $\regR'\coloneqq \regC$
and $\regC'\coloneqq (\regR,\regD)$.
\end{lemma}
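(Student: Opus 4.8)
The plan is to reduce everything to the standard spectral characterization of canonical commitments and then compute. Write $\ket{\psi_b}\coloneqq Q_b\ket{0\cdots 0}$ over $(\regR,\regC)$ and let $\rho_b\coloneqq\Tr_\regR\ket{\psi_b}\bra{\psi_b}$ be the reduced state on the commitment register $\regC$. I would invoke two facts: (i) the scheme is statistically (resp.\ computationally) hiding iff $\rho_0$ and $\rho_1$ are statistically (resp.\ computationally) indistinguishable; and (ii) the honest-binding winning probability equals $\max_{U}\abs{\bra{\psi_1}(U_\regR\otimes I_\regC)\ket{\psi_0}}^2$, which by Uhlmann's theorem equals $F(\rho_0,\rho_1)^2$ in the statistical case, while computational binding asks this quantity to be negligible for every \emph{efficient} $U$ on the reveal register. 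For the converted scheme, since now the new reveal register is $\regR'=\regC$, I set $\widetilde\rho_b\coloneqq\Tr_\regC\ket{\widetilde\Psi_b}\bra{\widetilde\Psi_b}$, the reduced state on the new commitment register $\regC'=(\regR,\regD)$.

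First I would compute $\widetilde\rho_b$ in block form over the qubit $\regD$. Writing $\sigma_{ij}\coloneqq\Tr_\regC\ket{\psi_i}\bra{\psi_j}$ (an operator on $\regR$), one gets
\begin{align*}
\widetilde\rho_b=\frac12\begin{pmatrix}\sigma_{00} & (-1)^b\sigma_{01}\\ (-1)^b\sigma_{10} & \sigma_{11}\end{pmatrix},
\end{align*}
so the diagonal blocks are independent of $b$ and only the phase of the off-diagonal blocks flips. This yields the two exact identities I would establish:
\begin{align*}
\|\widetilde\rho_0-\widetilde\rho_1\|_{tr}=F(\rho_0,\rho_1),\qquad F(\widetilde\rho_0,\widetilde\rho_1)=\|\rho_0-\rho_1\|_{tr}.
\end{align*}
The first holds because $\widetilde\rho_0-\widetilde\rho_1$ is the block-antidiagonal $\left(\begin{smallmatrix}0&\sigma_{01}\\\sigma_{10}&0\end{smallmatrix}\right)$, whose nonzero singular values are those of $\sigma_{01}$ taken twice, combined with $\|\sigma_{01}\|_1=\max_U|\Tr(U\sigma_{01})|=\max_U|\bra{\psi_1}(U_\regR\otimes I_\regC)\ket{\psi_0}|=F(\rho_0,\rho_1)$ by Uhlmann. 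The second holds because $\ket{\widetilde\Psi_1}=(Z_\regD\otimes I)\ket{\widetilde\Psi_0}$, so Uhlmann gives $F(\widetilde\rho_0,\widetilde\rho_1)=\max_{U_\regC}|\bra{\widetilde\Psi_1}(I\otimes U_\regC)\ket{\widetilde\Psi_0}|=\max_{U_\regC}\tfrac12|\Tr(U_\regC(\rho_0-\rho_1))|=\|\rho_0-\rho_1\|_{tr}$, using $\max_U|\Tr(UM)|=\|M\|_1=2\|M\|_{tr}$. These settle the statistical flavors at once: old statistical binding ($F(\rho_0,\rho_1)=\negl$) gives new statistical hiding, and old statistical hiding ($\|\rho_0-\rho_1\|_{tr}=\negl$) gives new statistical binding, since the new binding probability is $F(\widetilde\rho_0,\widetilde\rho_1)^2$.

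For the computational flavors I would give explicit reductions. For old computational hiding $\Rightarrow$ new computational binding: an adversary breaking new honest binding applies some efficient $V$ to $\regC$ and, by the same computation, wins with probability $\tfrac14|\Tr(V(\rho_0-\rho_1))|^2$, so a non-negligible win forces $|\Tr(V\rho_0)-\Tr(V\rho_1)|$ non-negligible; running a Hadamard test with a single controlled-$V$ then turns this into an efficient distinguisher for $\rho_0$ versus $\rho_1$, contradicting old hiding. For old computational binding $\Rightarrow$ new computational hiding: given an efficient distinguisher for $\widetilde\rho_0,\widetilde\rho_1$ with POVM element $M=\left(\begin{smallmatrix}M_{00}&M_{01}\\M_{10}&M_{11}\end{smallmatrix}\right)$ over $\regD$, its advantage is proportional to $\Re\Tr(M_{10}\sigma_{01})=\Re\bra{\psi_1}(M_{10}\otimes I_\regC)\ket{\psi_0}$, so the off-diagonal block $M_{10}$ (a contraction, $\|M_{10}\|_{op}\le1$) already realizes a non-negligible $|\bra{\psi_1}(M_{10}\otimes I_\regC)\ket{\psi_0}|$; dilating this contraction to a unitary on $\regR$ plus an ancilla produces an efficient honest-binding attack on the original scheme.

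I expect the main obstacle to be exactly this last step: extracting from a distinguishing \emph{measurement} an efficient \emph{unitary} binding attack while preserving both efficiency and the non-negligible advantage (the dilation of $M_{10}$ and the bookkeeping of constants). The Hadamard-test direction and the two information-theoretic identities are routine, so the real work is making the measurement-to-unitary reduction rigorous; alternatively, one may simply cite the hiding/binding characterizations and reductions already established for canonical commitments and verify the two block-matrix identities above.
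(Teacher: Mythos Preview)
The paper does not prove \Cref{lem:flavor}; it is stated with a citation to \cite{EC:HhaMorYam23} and used as a black box in the proofs of \Cref{thm:lower_commitment_C} and \Cref{thm:upper_bound_com}. So there is no ``paper's own proof'' to compare against here.

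That said, your sketch is essentially the standard argument and is sound. The two information-theoretic identities you isolate,
\[
\|\widetilde\rho_0-\widetilde\rho_1\|_{tr}=F(\rho_0,\rho_1)\quad\text{and}\quad F(\widetilde\rho_0,\widetilde\rho_1)=\|\rho_0-\rho_1\|_{tr},
\]
are exactly the heart of the flavor-conversion phenomenon, and your block-matrix derivation of them via Uhlmann is clean and correct (just be aware the paper uses squared fidelity, so your $F$ is their $\sqrt{F}$; this is cosmetic). The Hadamard-test reduction for ``old computational hiding $\Rightarrow$ new computational binding'' is fine, and you should allow the binding adversary an ancilla, which only replaces your unitary $V$ by a contraction $\langle 0|V|0\rangle$ and changes nothing.

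You correctly flag the one nontrivial step: turning an efficient distinguishing measurement on $\regC'=(\regR,\regD)$ into an efficient unitary binding attack on $\regR$. The abstract dilation of the off-diagonal block $M_{10}$ need not be efficient on its own, so you should not rely on dilating $M_{10}$ directly. The clean fix is to purify the distinguisher's circuit: run it coherently as a unitary $W$ on $(\regR,\regD,\regZ)$, append a fresh $\regD$ qubit prepared in $\ket{+}$ as part of the binding adversary's workspace (so the adversary itself manufactures the $\regD$-superposition that makes $\ket{\psi_0}$ look like $\ket{\widetilde\Psi_0}$ restricted to the right registers), apply $W$, and then uncompute appropriately. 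This keeps everything QPT and recovers the $\Re\Tr(M_{10}\sigma_{01})$ overlap up to a constant. With that adjustment your plan goes through.
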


\begin{lemma}\label{lem:TrF}
For any two states $\rho$ and $\sigma$, $\Tr(\rho\sigma)\le F(\rho,\sigma)$,    
where $F$ is the fidelity.
\end{lemma}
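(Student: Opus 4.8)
The plan is to reduce the claimed inequality to the elementary fact that the Schatten-$2$ norm of a matrix is dominated by its Schatten-$1$ norm. First I would fix the convention for the fidelity used throughout the paper, namely $F(\rho,\sigma)=\left(\Tr\sqrt{\sqrt{\rho}\,\sigma\,\sqrt{\rho}}\right)^2$; this is the convention consistent with \Cref{lem:fidelity_upperbound}, where $F$ of a rank-decomposed state and the maximally mixed state is computed as a squared quantity. With this in hand, set $M:=\sqrt{\rho}\sqrt{\sigma}$, where $\sqrt{\rho}$ and $\sqrt{\sigma}$ are the (Hermitian, positive semidefinite) principal square roots.

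The key algebraic step is to rewrite both sides in terms of $M$. On one hand, $\sqrt{\rho}\,\sigma\,\sqrt{\rho}=\sqrt{\rho}\sqrt{\sigma}\sqrt{\sigma}\sqrt{\rho}=MM^\dagger$, so $F(\rho,\sigma)=\left(\Tr\sqrt{MM^\dagger}\right)^2=\|M\|_1^2$. On the other hand, using cyclicity of the trace and Hermiticity of $\sqrt{\rho},\sqrt{\sigma}$, one has $\Tr(\rho\sigma)=\Tr\!\left(\sqrt{\sigma}\,\rho\,\sqrt{\sigma}\right)=\Tr\!\left(M^\dagger M\right)=\|M\|_2^2$. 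Writing $\{s_i\}$ for the singular values of $M$, this gives $\Tr(\rho\sigma)=\sum_i s_i^2$ and $F(\rho,\sigma)=\left(\sum_i s_i\right)^2$. Since every $s_i\ge 0$, expanding $\left(\sum_i s_i\right)^2=\sum_i s_i^2+\sum_{i\ne j}s_i s_j\ge\sum_i s_i^2$ yields $\|M\|_2^2\le\|M\|_1^2$, which is exactly $\Tr(\rho\sigma)\le F(\rho,\sigma)$, as desired.

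There is no deep obstacle here: the statement follows once the two quantities are expressed through the common matrix $M=\sqrt{\rho}\sqrt{\sigma}$. The only points requiring care are fixing the (squared) fidelity convention so that it matches the rest of the paper, and verifying the two identities $F=\|M\|_1^2$ and $\Tr(\rho\sigma)=\|M\|_2^2$ cleanly via cyclicity and $M^\dagger=\sqrt{\sigma}\sqrt{\rho}$; after that the inequality $\|M\|_2\le\|M\|_1$ for nonnegative singular values is immediate. Alternatively, one could invoke the standard monotonicity of Schatten norms ($\|M\|_2\le\|M\|_1$ for all $M$) directly instead of expanding the square, but the one-line expansion is self-contained and avoids citing additional facts.
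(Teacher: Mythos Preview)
Your proof is correct and is essentially the same argument as the paper's, just phrased in the language of Schatten norms. The paper diagonalizes $\sqrt{\sigma}\rho\sqrt{\sigma}=\sum_i\alpha_i\ket{\phi_i}\bra{\phi_i}$ and observes $(\sum_i\sqrt{\alpha_i})^2\ge\sum_i\alpha_i$; since the $\alpha_i$ are exactly the $s_i^2$ for your $M=\sqrt{\rho}\sqrt{\sigma}$, this is precisely your inequality $\|M\|_1^2\ge\|M\|_2^2$.
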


\begin{proof}
The operator $\sqrt{\sigma}\rho\sqrt{\sigma}$ is positive semi-definite.    
Therefore, it has the diagonalization $\sqrt{\sigma}\rho\sqrt{\sigma}=\sum_i \alpha_i |\phi_i\rangle\langle \phi_i|$,
where $\alpha_i\ge0$ for all $i$.
Hence
\begin{align}
F(\rho,\sigma)=\left(\Tr\sqrt{\sqrt{\sigma}\rho\sqrt{\sigma}}\right)^2    
=\left(\sum_i\sqrt{\alpha_i}\right)^2    
=\sum_i\alpha_i+\sum_{i\neq j}\sqrt{\alpha_i\alpha_j}    
\ge\sum_i\alpha_i
=\Tr(\rho\sigma).
\end{align}
\end{proof}

\subsection{Lower Bound}
\begin{theorem}\label{thm:lower_commitment_C}
There do not exist computationally-binding and computationally-hiding canonical quantum bit commitments 
with $O(\log\secp)$-qubit commitment register.
\end{theorem}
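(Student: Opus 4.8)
The plan is to derive a contradiction from the assumption that a canonical commitment $(Q_0,Q_1)$ is simultaneously computationally hiding and computationally binding while its commitment register $\regC$ consists of $\ell=O(\log\secp)$ qubits. Write $\ket{\Psi_b}_{\regR,\regC}=Q_b\ket{0\cdots0}$ and let $\rho_b\coloneqq\Tr_{\regR}(\ket{\Psi_b}\bra{\Psi_b})$ be the $d$-dimensional reduced states on $\regC$, where $d=2^\ell=\poly(\secp)$. The key structural point is that $\rho_0$ and $\rho_1$ are efficiently preparable by anyone who knows the scheme (run $Q_b$ and trace out $\regR$), so all the tomography-based tools of \Cref{sec:EFI} apply directly to them.

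First I would show that computational hiding forces $\rho_0$ and $\rho_1$ to be statistically close. If $\|\rho_0-\rho_1\|_{tr}\ge 1/\poly(\secp)$ for infinitely many $\secp$, then, because $d=\poly(\secp)$, the tomography distinguisher constructed in the proof of \Cref{thm:lower_EFI} is a genuine QPT distinguisher (it only ever runs $\tomo$ on the $d$-dimensional states, which is efficient by \Cref{lem:tomography}) achieving advantage $1/\poly(\secp)$. Since a hiding adversary may itself prepare the copies of $\rho_0,\rho_1$ that the tomography step needs, this contradicts computational hiding. Hence $\|\rho_0-\rho_1\|_{tr}=\negl(\secp)$, and by the Fuchs--van de Graaf inequality $1-\sqrt{F(\rho_0,\rho_1)}\le\|\rho_0-\rho_1\|_{tr}$ we get $F(\rho_0,\rho_1)\ge 1-\negl(\secp)$.

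Next I would break binding. By Uhlmann's theorem, the maximal success probability of the honest-binding attacker --- who commits to $0$, applies a unitary on $\regR$ (with ancilla $\regZ$), and opens to $1$ --- equals $\max_{U}|\bra{\Psi_1}(U\otimes I_{\regC})\ket{\Psi_0}|^2=F(\rho_0,\rho_1)\ge 1-\negl(\secp)$. Thus an unbounded attacker wins with probability essentially $1$; the entire content of the theorem is to make this attacker QPT. Here I would exploit that the Schmidt rank of each $\ket{\Psi_b}$ across the $\regR{:}\regC$ cut is at most $d=\poly(\secp)$, so the Uhlmann unitary acts nontrivially only on a $\le 2d$-dimensional subspace of $\regR$ and is a function of the two $\poly$-dimensional operators $\sqrt{\rho_0^{\regR}},\sqrt{\rho_1^{\regR}}$, where $\rho_b^{\regR}\coloneqq\Tr_{\regC}(\ket{\Psi_b}\bra{\Psi_b})$. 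Concretely, I would obtain block-encodings of $\rho_0^{\regR}$ and $\rho_1^{\regR}$ from the circuits $Q_0,Q_1$ and use quantum singular value transformation to implement an approximation of the polar-decomposition unitary underlying Uhlmann's isometry, then synthesize the resulting circuit via the explicit construction of \cite[Sec. 4.5]{Nielsen_Chuang}, exactly as in the EFI lower bound; a cruder alternative is to estimate the relevant subspace and rotation by tomography and argue that a $\gamma$-approximate unitary still yields winning probability $1-\negl(\secp)-O(\gamma)$.

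The main obstacle is precisely this last step: implementing Uhlmann's unitary efficiently. Although the register we must rotate, $\regR$, may contain polynomially many qubits, the register we hold fixed, $\regC$, is small, which is what should make the transformation tractable; the delicate points are controlling the dependence on the (possibly tiny) eigenvalues of $\rho_b^{\regR}$ in the transformation step and showing that an approximate Uhlmann unitary degrades the winning probability only negligibly. Once an efficient attacker winning with probability $1-\negl(\secp)$ is exhibited, it contradicts computational binding, and together with the first step this rules out canonical commitments that are both computationally hiding and computationally binding with an $O(\log\secp)$-qubit commitment register.
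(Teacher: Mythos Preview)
Your first step---breaking computational hiding via the tomography distinguisher of \Cref{thm:lower_EFI} when $\|\rho_0-\rho_1\|_{tr}\ge 1/\poly(\secp)$ infinitely often---is exactly what the paper does.

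The divergence is in the second step. You attempt to break computational binding directly by implementing an approximate Uhlmann unitary on $\regR$, and you correctly flag this as ``the main obstacle.'' It is a genuine gap: the register $\regR$ on which the Uhlmann rotation lives may have polynomially many qubits, and although the Schmidt rank across $\regR{:}\regC$ is at most $d=\poly(\secp)$, neither of your suggested routes is actually worked out. Tomography on $\rho_b^{\regR}$ is not efficient when $\regR$ is large; block-encoding plus QSVT for the polar part of $\sqrt{\rho_1^{\regR}}\sqrt{\rho_0^{\regR}}$ runs into exactly the small-singular-value issue you mention, and you give no argument that the relevant singular values are bounded away from zero (they need not be). Efficient implementation of Uhlmann transformations is a known nontrivial problem, and low Schmidt rank alone does not resolve it.

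The paper sidesteps this entirely. In the case $\|\rho_0-\rho_1\|_{tr}=\negl(\secp)$, the original scheme is statistically hiding; the paper then applies flavor conversion (\Cref{lem:flavor}) to obtain a new scheme $(\tilde Q_0,\tilde Q_1)$ that is statistically binding and computationally hiding, with the roles of the registers swapped so that the \emph{new} reveal register is the old $\regC$ and hence has $O(\log\secp)$ qubits. Statistical binding gives $F(\tilde\rho_0,\tilde\rho_1)\le\negl(\secp)$, hence $\Tr(\tilde\rho_0\tilde\rho_1)\le\negl(\secp)$ by \Cref{lem:TrF}, while the rank of $\tilde\rho_0$ is at most $2^{|\regR'|}=\poly(\secp)$, forcing $\Tr(\tilde\rho_0^2)\ge 1/\poly(\secp)$. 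A simple swap test between $\tilde\rho_0$ and the challenge state then breaks computational hiding of the converted scheme, contradicting computational binding of the original. No Uhlmann unitary is ever implemented; the flavor conversion does that work structurally.
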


By the flavor conversion (\Cref{lem:flavor}), the above theorem implies the following corollary.
\begin{corollary}\label{cor:lower_commitment_R}
There do not exist computationally-binding and computationally-hiding canonical quantum bit commitments 
with $O(\log\secp)$-qubit reveal register.
\end{corollary}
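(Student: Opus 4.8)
The plan is to derive \Cref{cor:lower_commitment_R} from \Cref{thm:lower_commitment_C} via the flavor conversion of \Cref{lem:flavor}, arguing by contradiction. Suppose toward a contradiction that there is a computationally-binding and computationally-hiding canonical quantum bit commitment $(Q_0,Q_1)$ whose reveal register $\regR$ consists of $O(\log\secp)$ qubits. I would apply \Cref{lem:flavor} to $(Q_0,Q_1)$ to obtain the flavor-converted scheme $(\tilde{Q}_0,\tilde{Q}_1)$, and then contradict \Cref{thm:lower_commitment_C}.

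The first step is to track the register sizes through the conversion. In \Cref{lem:flavor} the new reveal register is $\regR'\coloneqq \regC$ and the new commitment register is $\regC'\coloneqq (\regR,\regD)$, where $\regD$ is a single ancilla qubit. Hence the number of qubits in the commitment register of $(\tilde{Q}_0,\tilde{Q}_1)$ is $|\regR|+1 = O(\log\secp)+1 = O(\log\secp)$. Thus the converted scheme has an $O(\log\secp)$-qubit commitment register, which is exactly the regime forbidden by \Cref{thm:lower_commitment_C}. Note that the conversion turns a restriction on the reveal register into a restriction on the commitment register precisely because it swaps the roles of the two registers (up to one extra qubit).

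The second step is to confirm that the flavor conversion keeps both security properties computational. \Cref{lem:flavor} is phrased for a commitment that is statistical in one flavor and computational in the other, but its proof establishes the two implications separately: (computational/statistical) hiding of $(Q_0,Q_1)$ yields (computational/statistical) binding of $(\tilde{Q}_0,\tilde{Q}_1)$, and (computational/statistical) binding of $(Q_0,Q_1)$ yields (computational/statistical) hiding of $(\tilde{Q}_0,\tilde{Q}_1)$. Applying the computational direction of each implication to our computationally-binding-and-hiding $(Q_0,Q_1)$, I would conclude that $(\tilde{Q}_0,\tilde{Q}_1)$ is again computationally binding and computationally hiding. Combined with the register count above, this contradicts \Cref{thm:lower_commitment_C} and proves \Cref{cor:lower_commitment_R}.

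The only genuine subtlety, and the step I would treat most carefully, is this preservation of the computational--computational flavor, since \Cref{lem:flavor} is not literally stated for that case. I would resolve it either by invoking the direction-wise implications in the proof of \Cref{lem:flavor} as above, or, to be fully self-contained, by re-deriving the two computational reductions directly for the construction $\ket{\tilde{\Psi}_b}=\frac{1}{\sqrt{2}}\bigl(\ket{0}_\regD\otimes Q_0\ket{0\cdots0}+(-1)^b\ket{1}_\regD\otimes Q_1\ket{0\cdots0}\bigr)$; both reductions are standard and lose only a constant factor in advantage, so they do not affect the $O(\log\secp)$ bound on the register size.
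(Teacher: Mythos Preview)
Your proposal is correct and follows exactly the paper's approach: the paper derives \Cref{cor:lower_commitment_R} from \Cref{thm:lower_commitment_C} in one line ``By the flavor conversion (\Cref{lem:flavor}), the above theorem implies the following corollary,'' and your argument simply unpacks that line by tracking the register swap $\regC'=(\regR,\regD)$. If anything, you are more careful than the paper in flagging that \Cref{lem:flavor} is literally stated only for the statistical/computational combination, whereas the paper applies it to the computational/computational case without comment.
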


Below, we prove  \Cref{thm:lower_commitment_C}.

\begin{proof}[Proof of \Cref{thm:lower_commitment_C}]
Assume that    
such a commitment scheme $\{Q_0(\secp),Q_1(\secp)\}_{\secp\in\mathbb{N}}$ exists,
and let $\ket{\Psi_{\secp,b}}_{\regR,\regC}\coloneqq Q_b(\secp)|0...0\rangle$.
Let $\rho^b\coloneqq \Tr_{\regR}(\ket{\Psi_{\secp,b}}_{\regR,\regC})$.\footnote{$\rho_b$ also depends on $\secp$, but we omit the dependence from the notation for simplicity.}
We consider two cases separately.
\begin{enumerate}
\item 
There exists a polynomial $p$ such that
$\left\|\rho_0-\rho_1\right\|_{tr}\ge\frac{1}{p(\secp)}$ for infinitely many $\secp$.
    \item 
$\left\|\rho_0-\rho_1\right\|_{tr}\le\negl(\secp)$.
\end{enumerate}
In the first case, we can show that the scheme does not satisfy computational hiding by exactly the same proof as that of \Cref{thm:lower_EFI}. The only difference is that our assumption is $\left\|\rho_0-\rho_1\right\|_{tr}\ge\frac{1}{p(\secp)}$ for infinitely many $\secp$ instead of for all $\secp$. Thus, by repeating exactly the same argument as in the proof of \Cref{thm:lower_EFI}, we can obtain a QPT adversary that distinguishes $\rho_0$ and $\rho_1$ with advantage at least $\frac{1}{2p(\secp)}$ for infinitely many $\secp$. This means that the scheme does not satisfy computational hiding.

In the second case, the scheme satisfies statistical hiding. 
We prove that it cannot satisfy computational binding. Toward contradiction, we assume that it satisfies computational binding. 
Then, due to the flavor conversion (\cref{lem:flavor}), we have
a statistically-binding and computationally-hiding canonical quantum bit commitment scheme
$\{\tilde{Q}_0(\secp),\tilde{Q}_1(\secp)\}_{\secp\in\mathbb{N}}$ with $|\regR'|=|\regC|=O(\log\secp)$,
where $|\regA|$ is the number of qubits of the register $\regA$.
However, we can show that such commitments do not exist, because we can construct a QPT adversary that
breaks the computational hiding as follows:
\begin{enumerate}
    \item 
    Given $\tilde{\rho}_b\coloneqq\Tr_{\regR'}(\ket{\tilde{\Psi}_b}_{\regR',\regC'})$ as input.
    \item 
    Run the swap test between $\tilde{\rho}_0$ and $\tilde{\rho}_b$.
    \item 
    Output the output of the swap test.
\end{enumerate}
Then the probability $p_b$ that the adversary outputs 1 is
   $p_0=\frac{1+\Tr(\tilde{\rho}_0^2)}{2}$ and
   $p_1=\frac{1+\Tr(\tilde{\rho}_0\tilde{\rho}_1)}{2}$. 
Because of the statistical binding, we have $F(\tilde{\rho}_0,\tilde{\rho}_1)\le\negl(\secp)$. Therefore, by using \cref{lem:TrF},
\begin{align}
\Tr(\tilde{\rho}_0\tilde{\rho}_1)\le F(\tilde{\rho}_0,\tilde{\rho}_1)\le\negl(\secp).    
\end{align}
On the other hand, 
let $\tilde{\rho}_0=\sum_{i\in [N]} p_i \ket{v_i}\bra{v_i}$ be the spectral decomposition of $\tilde{\rho}_0$ where $N\le 2^{|\regR'|}=\poly(\secp)$. 
Then 
\begin{align}
\Tr(\tilde{\rho}_0^2)=
\sum_{i\in [N]} p_i^2
\ge \frac{\left(\sum_{i\in [N]} p_i\right)^2}{N}=\frac{1}{\poly(\secp)}   
\end{align}
where we used the Cauchy-Schwarz inequality. 
Therefore the QPT adversary can break the computational hiding. 
Thus,  the commitment scheme $\{Q_0(\secp),Q_1(\secp)\}_{\secp\in\mathbb{N}}$ does not satisfy computational binding.

In summary, $\{Q_0(\secp),Q_1(\secp)\}_{\secp\in\mathbb{N}}$ does not satisfy computational hiding in the first case and does not satisfy computational binding in the second case. This completes the proof of \Cref{thm:lower_commitment_C}.
\end{proof}
\begin{remark}
It is known that any (even interactive) quantum commitment schemes can be converted into canonical one~\cite{AC:Yan22}, but
the transformation could increase the commitment size. (Honest receiver's register in the original commitment scheme
is contained in the commitment register of the resulting one.) Therefore 
\Cref{thm:lower_commitment_C} 
does not directly exclude 
interactive quantum commitments with  $O(\log\secp)$-qubit communication.
It is an open problem whether interactive quantum commitments with
$O(\log\secp)$-qubit communication are possible or not.
\end{remark}

\subsection{Upper Bound}
Similarly to the case of EFI pairs, 
we show a matching upper bound assuming the existence of exponentially secure PRGs.  
\begin{theorem}\label{thm:upper_bound_com}
If exponentially secure PRGs exist, then
statistically (resp. computationally) binding and computationally (resp. statistically) hiding canonical quantum bit commitments 
with $m(\secp)$-qubit commitment register exist for any QPT computable function $m(\secp)=\omega(\log\secp)$.
\end{theorem}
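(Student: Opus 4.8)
The plan is to reuse the EFI construction from the proof of \cref{thm:upper_EFI} and purify it into a pair of canonical commitment unitaries. Let $G$ be a length-doubling exponentially secure PRG and fix a suitable $n=n(\secp)=\omega(\log\secp)$ (to be pinned down below for the length bookkeeping). Define the canonical commitment $(Q_0,Q_1)$ through the states
\begin{align*}
\ket{\Psi_0}_{\regR,\regC}&:=\frac{1}{\sqrt{2^n}}\sum_{x\in\bit^n}\ket{x}\ket{0^n}_{\regR}\ket{G(x)}_{\regC},\\
\ket{\Psi_1}_{\regR,\regC}&:=\frac{1}{\sqrt{2^{2n}}}\sum_{y\in\bit^{2n}}\ket{y}_{\regR}\ket{y}_{\regC},
\end{align*}
where both $\regR$ and $\regC$ consist of $2n$ qubits (so $\ket{\Psi_0}$ pads its purifying register with $\ket{0^n}$). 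Since $G$ is classically efficiently computable, $Q_0$ and $Q_1$ are efficiently implementable. The reduced states $\rho_b:=\Tr_{\regR}(\ket{\Psi_b}\bra{\Psi_b})$ on the commitment register are exactly the EFI pair of \cref{thm:upper_EFI}, namely $\rho_0=\frac{1}{2^n}\sum_x\ket{G(x)}\bra{G(x)}$ and $\rho_1=I^{\otimes 2n}/2^{2n}$.

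Computational hiding is then immediate: $\rho_0$ and $\rho_1$ are computationally indistinguishable by the exponential security of $G$ together with $n=\omega(\log\secp)$, by the same argument as in \cref{thm:upper_EFI}. For statistical binding, I would bound the honest-binding advantage by the fidelity $F(\rho_0,\rho_1)$. Since $\ket{\Psi_0}$ and $\ket{\Psi_1}$ are purifications of $\rho_0$ and $\rho_1$ with purifying register $\regR$, any cheating sender that commits to $0$, applies a unitary $U$ on $\regR$ and a private ancilla, and opens to $1$ has success amplitude of the form $(\bra{\Psi_1}\otimes I)\,U\,(\ket{\Psi_0}\otimes\ket{\tau})$, whose squared norm is at most $F(\rho_0,\rho_1)$ by Uhlmann's theorem. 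By \cref{lem:fidelity_upperbound} (applied with $\ket{\psi_x}=\ket{G(x)}$ and dimension $2^{2n}$) we get $F(\rho_0,\rho_1)\le 2^{n}/2^{2n}=2^{-n}=\negl(\secp)$, so the scheme is statistically binding.

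This yields a statistically-binding, computationally-hiding canonical commitment with a $2n$-qubit commitment register; applying the flavor conversion of \cref{lem:flavor} produces the statistically-hiding, computationally-binding variant, whose commitment register $\regC'=(\regR,\regD)$ has $2n+1$ qubits. To hit an arbitrary target length $m(\secp)=\omega(\log\secp)$, I would choose $n$ just below $m/2$ (e.g. $n=\lfloor m/2\rfloor$ in the first case and $n=\lfloor (m-1)/2\rfloor$ in the second), which keeps $n=\omega(\log\secp)$, and then pad the commitment register with fixed $\ket{0}$ qubits up to exactly $m$; such padding changes neither $F(\rho_0,\rho_1)$ nor the distinguishing advantage, so both properties are preserved. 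I expect the statistical-binding step to be the main obstacle: one must argue carefully that acting only on the reveal register $\regR$ (plus a private ancilla) cannot turn the honest commit-to-$0$ state into a state the receiver accepts as an opening to $1$ except with probability $F(\rho_0,\rho_1)$, which is precisely the Uhlmann characterization of honest binding for canonical commitments; everything else is routine bookkeeping analogous to \cref{thm:upper_EFI}.
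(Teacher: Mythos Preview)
Your proposal is correct and follows the paper's proof essentially line for line: the same purified PRG construction for $(Q_0,Q_1)$, computational hiding from exponential PRG security together with $n=\omega(\log\secp)$, statistical binding from $F(\rho_0,\rho_1)\le 2^{-n}$ via \cref{lem:fidelity_upperbound}, and the flavor conversion of \cref{lem:flavor} for the other direction. If anything you are more explicit than the paper, spelling out the Uhlmann argument that bounds the honest-binding probability by $F(\rho_0,\rho_1)$ (where the paper simply cites \cite{C:MorYam22}) and handling the length bookkeeping after flavor conversion more carefully (the paper only notes that $|\regR|=|\regC|$ and leaves the extra $\regD$ qubit and padding implicit).
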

\begin{proof}
The proof for the case of statistically binding and computationally hiding commitments is similar to that of \Cref{thm:upper_EFI}. 
Let $G$ be an exponentially secure PRG. Without loss of generality, we assume that $G$ is length-doubling. 
   Let $n(\secp):=\lceil m(\secp)/2 \rceil$ and 
   define the canonical quantum bit commitment $(Q_0,Q_1)$ as follows:
\begin{align}
\ket{\Psi_0}_{\regR,\regC}&\coloneqq\frac{1}{\sqrt{2^n}} \sum_{x\in\bit^n}|x0^n\rangle_\regR\otimes |G(x)\rangle_\regC,\label{com1}\\
\ket{\Psi_1}_{\regR,\regC}&\coloneqq\frac{1}{\sqrt{2^{2n}}} \sum_{y\in\bit^{2n}}|y\rangle_\regR\otimes|y\rangle_\regC.\label{com2}
\end{align}
Its computational hiding follows from exponential security of $G$ and $n=\omega(\log \secp)$. 
For its statistical binding, from \cref{lem:fidelity_upperbound}, we have
\begin{align}
F(\rho_0,\rho_1)\le 2^{-n}    =\negl(\secp),
\end{align}
which implies statistical binding (e.g., see \cite{C:MorYam22}).
The number of qubits of the commitment register is $2n$. If $m$ is even, $2n=m$ and if $m$ is odd, $2n=m-1$, so we can simply append one-qubit to the commitment register to make it $m$-qubit. 

The case of computationally binding and statistically hiding commitments immediately from the above and flavor conversion (\Cref{lem:flavor}) noting that $\regC$ and $\regR$ have the same length in the above construction. 
\end{proof}

\clearpage
\newpage
\setcounter{tocdepth}{2}
\tableofcontents

\end{document}